\documentclass[12pt]{article}
\usepackage[utf8]{inputenc}
\usepackage[a4paper]{geometry}
\geometry{verbose}
\pagestyle{plain}
\usepackage{color}
\usepackage{textcomp}
\usepackage{verbatim}
\usepackage{amsmath}
\usepackage{amssymb} 
\usepackage{graphicx}
\usepackage{authblk}

\usepackage[T1]{fontenc}
\usepackage{amsthm}

\usepackage{mathrsfs} 
\usepackage{enumerate}

\makeatletter

\providecommand{\tabularnewline}{\\}


\usepackage{fancyhdr}
\lhead{}\chead{}\rhead{}
\lfoot{}\cfoot{\thepage}\rfoot{}

\usepackage{amsthm}\usepackage{caption}\usepackage{subcaption}\usepackage{comment}\usepackage{appendix}

\theoremstyle{plain} \newtheorem{theorem}{Theorem}[section] 
\theoremstyle{definition} 
\theoremstyle{plain} \newtheorem{lemma}{Lemma}
\theoremstyle{plain}
\theoremstyle{plain}\newtheorem{prop}{Proposition}


\makeatother

\begin{document}

\title{Trapped modes in zigzag graphene nanoribbons}


\author[1]{V. A. Kozlov\thanks{vladimir.kozlov@liu.se}}
\author[2]{S. A. Nazarov\thanks{srgnazarov@yahoo.co.uk}} 
\author[1]{A. Orlof\thanks{anna.orlof@liu.se}}

\affil[1]{\textit {Mathematics and Applied Mathematics, MAI, Link\"{o}ping University, SE-58183 Link\"{o}ping, Sweden}}
\affil[2]{\textit{ Saint-Petersburg State University, Universitetsky pr., 28, Peterhof, 198504, St. Petersburg, Russia,
Peter the Great St. Petersburg Polytechnic University, St. Petersburg 195251, Russia,
 Institute of Problems of Mechanical Engineering
RAS, V.O., Bolshoi pr., 61, 199178, St.-Petersburg, Russia}}

\maketitle
\begin{abstract}
We study a scattering on an ultra-low potential in zigzag graphene
nanoribbon. Using mathematical framework based on the continuous Dirac
model and augumented scattering matrix, we derive a condition for
the existence of a trapped mode. We consider the threshold energies where the continuous spectrum changes its multiplicity and show that the trapped modes may appear for energies slightly less than a threshold and its multiplicity does not exceeds one. We prove that trapped modes do not appear outside the threshold, provided the potential is sufficiently small.
\end{abstract}

\section{Introduction}

The problem of disorder in graphene nanoribbons has been studied extensively.
The main purpose of those studies is to eliminate disorder completely
and produce pure high-quality graphene nanoribbons \cite{Chen}. Approaching the
goal of graphene nanoribbons free of impurities and other defects
one can focus on production of such deliberately for the use in electronic
devices. One of desired features for graphene to possess is a possibility
of electron localization. Such localization is difficult to achieve
due to Klein tunneling \cite{Kat}. As graphene electrons behave like massless
particles they undergo tunneling through barriers. However, due to
interference between continuous states of the nanoribbon and a localized
state of a disorder, a trapped mode can be produced.
There are several types of disorder including short-range and long-range. Impurities such as vacancies and adatoms are classified as short-range type and can be described
by a sharp potential, that varies on the scale shorter than graphene
lattice constant (0.246 nm) \cite{Ando}. On the other hand, electric
or magnetic field, interactions with substrate, Coulomb charges \cite{Libisch}, ripples and wrinkles can lead to long-range disorder described by smooth potential (a Gaussian
for example). In the present studies we assume that graphene is free
of short-range defects and the potential is modeled as a long range
one. 

There are two groups of graphene nanoribbons, that differ by the edge
type and are called zigzag and armchair \cite{Brey,Castro}. In this paper, we give a
condition for existence and a choice of ultra-low potential that produces
a trapped mode in zigzag graphene nanoribbon. We work within continuous
Dirac model, where graphene is isotropic and its electrons dynamics
can be described by a system of 4 equations \cite{Castro}

\begin{equation}
\left(\begin{array}{cccc}
0 & i\partial_{x}+\partial_{y} & 0 & 0\\
i\partial_{x}-\partial_{y} & 0 & 0 & 0\\
0 & 0 & 0 & -i\partial_{x}+\partial_{y}\\
0 & 0 & -i\partial_{x}-\partial_{y} & 0
\end{array}\right)+\delta\mathcal{P}\left(\begin{array}{c}
u'\\
v'\\
u\\
v
\end{array}\right)=\omega\left(\begin{array}{c}
u'\\
v'\\
u\\
v
\end{array}\right),\label{eq:Diractot}
\end{equation}
where $\omega=\frac{E}{\hbar\nu_{F}}$ is scaled energy (with $E$ denoting energy and 
$\nu_{F}\approx10^{6}\frac{m}{s}$ Fermi velocity), a potential
$\mathcal{P}$ is a real-value function with compact support such
that $\sup|\mathcal{P}|\leq1$ and $\delta$ is a real-value small
parameter.

The number of equation is a consequence of the discrete description of
graphene lattice and low energy approximation which leads to the continuous
model \cite{Castro, Macucci}. In the discrete model graphene is described as a composition of two triangular interpenetrating lattices of carbon atoms (called
A and B) \cite{Castro}. Then low energy approximation can be done
in a twofold way, close to two different energy minima (called $K$ and
$K^\prime$) in the graphene dispersion relation. Consequently, in the continuous
model, we have two waves that describe an electron in any single point
of the ribbon (A or B) coupled in two different ways, close to $K$ (first
two equations in (\ref{eq:Diractot})) or $K^\prime$ point (last two equations
in (\ref{eq:Diractot})). A nanoribbon is modeled as an unit strip $\Pi=(0,1)\times\mathbb{R}$ due to rescaling.
The zigzag boundary of the nanoribbon requires one wave (A) to disappear
specifically at one edge and the other (B) at the other one
\cite{Brey}
\begin{equation}
u'(0,y)=0\ ,\ \ u(0,y)=0\ ,\ \ v'(1,y)=0\ ,\ \ v(1,y)=0\;.\label{BCDiractot}
\end{equation}
As our potential $\mathcal{P}$ is assumed to be of long-range type, it
can be described by a diagonal matrix with equal elements \cite{Ando}.
As neither the potential nor the boundary conditions couples $K$ and
$K^\prime$ valleys, the system of 4 equations can be split into two systems
of 2 equations where only intravalley scattering is allowed. We consider
one of them (two last equations in (\ref{eq:Diractot}))
\begin{equation}
\left(\begin{array}{cc}
0 & -i\partial_{x}+\partial_{y}\\
-i\partial_{x}-\partial_{y} & 0
\end{array}\right)\left(\begin{array}{c}
u\\
v
\end{array}\right)+\delta\mathcal{P}\left(\begin{array}{c}
u\\
v
\end{array}\right)=\omega\left(\begin{array}{c}
u\\
v
\end{array}\right)\label{homoDirac2P}
\end{equation}
supplied with the boundary conditions:\\
\begin{equation}
u(0,y)=0\ ,\ \ v(1,y)=0.\label{homoDirac3P}
\end{equation}

\begin{figure}[h]
\noindent \centering{}\includegraphics[scale=2]{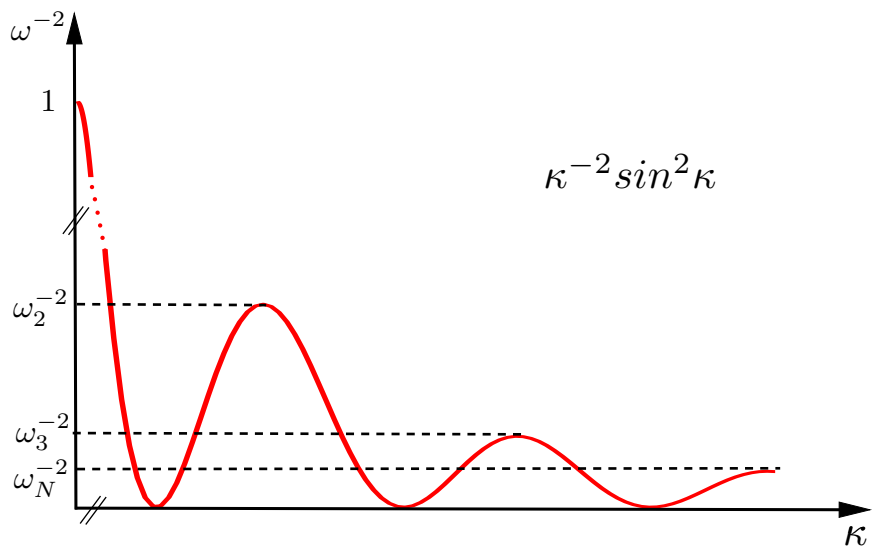}\caption{Dispersion with energy thresholds. \label{fig:Plot_sin_intro}}
\end{figure}

Energy threshold $\omega=\omega_{N}>1$, $N=2,\,3,\dots$ is defined
by one of maxima in zigzag dispersion relation $\omega^{-2}=\kappa^{-2}\sin^{2}\kappa$
and it reads $\frac{d}{d\kappa}\Big(\kappa^{-2}\sin^{2}\kappa\Big)=0$
(see Figure \ref{fig:Plot_sin_intro}). The index $N$ defines a threshold energy
$\omega_{N}$ as it indicates change of the multiplicity of the continuous
spectrum from $2N-3$ to $2N-1$ for $N\geq2$. A trapped mode is defined as a vector eigenfunction (from $L_2$ space) that corresponds to an eigenvalue embedded in the continuous spectrum. 
The main result of the paper is the following theorem about existence of trapped modes
in zigzag graphene nanoribbon for energies close to one of the thresholds that can be chosen arbitrary.

\begin{theorem}\label{Texistence} For every $N=2,\,,3,\,...$ there exists $\varepsilon_N>0$ such that for each
$\varepsilon\in(0,\varepsilon_{N})$ there exists $\delta\sim\sqrt{\varepsilon}$
and a potential $\mathcal{P}$ such that the problem {\rm(\ref{homoDirac2P})},
{\rm(\ref{homoDirac3P})} has a trapped mode solution for $\omega^{-1}=\omega_N^{-1}+\varepsilon$.

\end{theorem}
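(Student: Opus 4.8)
The plan is to construct the trapped mode by a matched-asymptotics / augmented-scattering-matrix argument, treating $\delta$ and $\varepsilon$ as coupled small parameters with $\delta\sim\sqrt{\varepsilon}$. First I would analyze the unperturbed operator ($\delta=0$) on the strip $\Pi=(0,1)\times\mathbb R$. By separation of variables in $y$, the problem reduces to a family of one-dimensional spectral problems in $x\in(0,1)$ parametrized by the wavenumber; the condition $\omega^{-2}=\kappa^{-2}\sin^2\kappa$ governs the dispersion, and at $\omega=\omega_N$ the relevant transverse mode has a double root in $\kappa$, i.e. the group velocity vanishes. This is precisely the threshold behavior announced in the excerpt: just below the threshold (in the variable $\omega^{-1}=\omega_N^{-1}+\varepsilon$) a new pair of oscillatory channels is about to open, and at the threshold there is a bounded, non-decaying ``standing wave'' $w_N$ (a Jordan-chain / generalized eigenfunction) together with a linearly-growing associated solution. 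I would write these solutions explicitly from the $2\times2$ ODE system obtained from (\ref{homoDirac2P})–(\ref{homoDirac3P}).

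Next I would set up the augmented scattering matrix $S(\omega)$ (the object advertised in the abstract): near the threshold one enlarges the usual scattering matrix by including the almost-oscillating threshold mode, obtaining a matrix that depends analytically on the spectral parameter and on $\delta$, and is unitary for real energies in a punctured neighborhood of $\omega_N$. The key algebraic fact is that a trapped mode at energy $\omega$ exists exactly when $S(\omega)$ has eigenvalue $1$ with an eigenvector that does not radiate energy to infinity, i.e. when a certain scalar entry/coefficient — essentially the ``reflection coefficient of the threshold mode'' — equals $1$. I would then expand this coefficient in powers of $\delta$: the first-order term is a quadratic form in the potential, $\delta^2\,\langle \mathcal P\,w_N,w_N\rangle$-type expression (more precisely an integral of $\mathcal P$ against $|w_N|^2$ over the support), plus the contribution of the detuning $\varepsilon$, which enters through the near-threshold asymptotics of the propagating modes and scales like $\sqrt\varepsilon$ because of the square-root branch point of $\kappa(\omega)$ at a double root.

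The heart of the argument is then a solvability/fixed-point step: one seeks $\omega$ with $\omega^{-1}=\omega_N^{-1}+\varepsilon$ and a potential $\mathcal P$ (with $\sup|\mathcal P|\le1$, compact support) such that the expanded resonance equation holds. Because the leading $\varepsilon$-term carries a definite sign (coming from $\sqrt\varepsilon$ and the sign of the curvature of $\kappa^{-2}\sin^2\kappa$ at its maximum), and because the $\delta^2$-term can be made to have the opposite sign and the right magnitude by choosing $\mathcal P$ so that $\int \mathcal P\,|w_N|^2$ is a suitable negative (or positive) constant of order $1$, the balance $\delta\sim\sqrt\varepsilon$ is forced and the equation is solvable. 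I would make this rigorous by writing the full problem as $F(\omega,\delta,\mathcal P)=0$ where $F$ is the resonance functional, checking that $\partial F/\partial\omega\neq0$ at the approximate root (nonvanishing because the threshold mode genuinely couples), and invoking the implicit function theorem (or a contraction mapping on the remainder) to produce the exact eigenvalue and the exact $L_2$ eigenfunction; the decay at $|y|\to\infty$ is guaranteed because all genuinely propagating channels are switched off by the eigenvalue condition and the threshold channel, being only marginally non-decaying at $\omega_N$, becomes exponentially decaying for $\omega$ strictly on the trapped-mode side of the threshold.

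The main obstacle I anticipate is the careful bookkeeping near the threshold: the scattering matrix is not analytic at $\omega_N$ in the naive variable, so one must use the correct uniformizing parameter ($\sqrt\varepsilon$, or equivalently $\kappa-\kappa_N$), track how the threshold Jordan chain enters the augmented matrix, and verify its unitarity and the sign of the decisive coefficient. A secondary technical point is ensuring that the constructed $\mathcal P$ respects the constraints $\sup|\mathcal P|\le1$ and compact support while still delivering an order-one value of the weighted integral $\int\mathcal P\,|w_N|^2$ — this is where the explicit form of $w_N$ (which is non-trivial and nowhere identically zero on the strip) is used — and that the remainder terms in the expansion are genuinely $o(\varepsilon)$ uniformly, so the fixed-point step closes.
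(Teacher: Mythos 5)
Your overall strategy --- threshold analysis of the unperturbed strip, an augmented scattering matrix containing the threshold (exponentially growing) waves, a resonance condition solved with $\delta$ coupled to $\sqrt{\varepsilon}$, and a final contraction/implicit-function step --- is indeed the route the paper takes. However, there are two genuine gaps. First, your quantitative balance is internally inconsistent: the perturbation of the scattering coefficients by the potential $\delta\mathcal{P}$ is of \emph{first} order in $\delta$ (the Born-type formula (\ref{Apr8c}) gives ${\bf s}_{j\tau}^{k\theta}=-i\delta\int_\Pi \mathcal{P}\,{\bf w}_j^{\tau}\cdot\overline{{\bf w}_k^{\theta}}\,dxdy+O(\delta^2)$), not a ``$\delta^{2}\langle\mathcal{P}w_N,w_N\rangle$-type'' term as you write; if the leading potential contribution really were $O(\delta^{2})$, balancing it against the $\sqrt{\varepsilon}$ detuning would force $\delta\sim\varepsilon^{1/4}$, contradicting the $\delta\sim\sqrt{\varepsilon}$ you (correctly) want. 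The paper gets $\delta\sim\sqrt{\varepsilon}$ precisely because the linear-in-$\delta$ term is matched against $\Im d^{2}\sim 2C_d\sqrt{\varepsilon}$, via the choice $\delta=\sin\sigma$ in (\ref{eq:sigmasincos})--(\ref{Apr19a}).

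Second, and more seriously, you treat the trapped-mode criterion as a single scalar equation (``reflection coefficient of the threshold mode equals $1$'') and assert that decay at infinity is then automatic because ``all genuinely propagating channels are switched off by the eigenvalue condition.'' That is exactly what is \emph{not} automatic. The correct criterion (Theorem \ref{ThTrap}) is degeneracy of the $2\times 2$ block ${\bf S}_{\dagger\dagger}+d(\varepsilon)\Upsilon$, where $d$ encodes the Jordan-chain structure of the threshold, and only the special null-vector combination of the two complex-$\lambda$ waves decays; moreover, to realize this condition by a potential one must simultaneously suppress the coupling of the threshold block to the $2N-3$ open oscillatory channels. In the paper this costs $4(2N-3)+3$ real conditions ($s_{\dagger\bullet}=0$, $s_{N-}^{N+}=0$ in (\ref{Apr18a}) plus the phase equation (\ref{Apr19a})), which is why the potential is taken in the multi-parameter form $\mathcal{P}=\Phi+\sum_{\alpha}\eta^{\alpha}\Psi^{\alpha}$, the linear independence of the products ${\bf w}_j^{\tau}\cdot\overline{{\bf w}_N^{\theta}}$ (Lemma \ref{Lemma2}) is needed to invert the resulting linear map, and the remainder is handled by a Banach fixed point in $\boldsymbol{\eta}$. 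A one-parameter adjustment of $\int\mathcal{P}|w_N|^{2}$ together with an implicit-function step in $\omega$ cannot close this argument --- note also that $\omega$ is prescribed by $\varepsilon$ in the statement, so it is not available as a free unknown. A further technical point you take for granted: the boundary-value problem is not elliptic (Appendix \ref{sec:Appendix:-Ellipticity}), so the solvability and asymptotic representation theorems underlying the scattering matrix (Theorems \ref{T1}, \ref{T1s}) cannot simply be quoted from standard elliptic theory and must be proved separately, as the paper does.
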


Second result shows that trapped modes may appear only for energy
slightly smaller than threshold and that spectrum far from the
threshold is free of embedded eigenvalues. Moreover their multiplicity does
not exceed one. 

\begin{theorem}\label{Tmultiplicity} There exist positive numbers $\varepsilon_{0}$ and $\delta_{0}$, which
may depend on $N$, such that if

{\rm(i)} $\omega\in[\omega_{N},\,\omega_{N}+\varepsilon_{0}]$
and $|\delta|<\delta_{0}$, then problem {\rm(\ref{homoDirac2P})}, {\rm(\ref{homoDirac3P})}
has no trapped modes;

{\rm(ii)} $\omega\in[\omega_{N}-\varepsilon_{0},\,\omega_{N})$
and $|\delta|<\delta_{0}$, then the multiplicity of a trapped mode to problem {\rm(\ref{homoDirac2P}), (\ref{homoDirac3P})} does not exceed $1$.

{\rm(iii)} For every $\varepsilon_1>0$ and $C_1>0$ there exist $\delta_1>0$ such that if $0\leq\omega< C_1$ and $|\omega-\omega_N|>\varepsilon_1$ for all $N=2,\ldots$ satisfying $\omega_N\leq C_1$ and $|\delta|<\delta_1$, then problem {\rm(\ref{homoDirac2P}), (\ref{homoDirac3P})} has no trapped modes.
\end{theorem}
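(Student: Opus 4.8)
The plan is to prove Theorem \ref{Tmultiplicity} by studying the one‑dimensional reduction obtained after separating variables in $y$ and analyzing the behaviour of solutions near the thresholds $\omega_N$ via a perturbation argument around $\delta=0$. Concretely, for fixed $\omega$ the operator in (\ref{homoDirac2P})–(\ref{homoDirac3P}) on the strip $\Pi$ admits, after Fourier transform in $y$ with dual variable $\eta$, a family of ODE operators on the interval $(0,1)$ whose spectrum is governed by the dispersion relation $\omega^{-2}=\kappa^{-2}\sin^{2}\kappa$ mentioned in the excerpt; the propagating (oscillatory) modes correspond to real $\kappa$, the evanescent ones to $\kappa$ with nonzero imaginary part. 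A trapped mode is an $L_2(\Pi)$ eigenfunction, hence it must be built entirely out of evanescent modes. At a threshold $\omega=\omega_N$ one propagating mode degenerates (group velocity vanishes, two branches of $\eta(\omega)$ collide), producing a ``standing wave'' that decays only polynomially — not in $L_2$; slightly above $\omega_N$ this mode has already turned propagating, while slightly below it has split into a genuinely exponentially decaying pair. This is the structural reason behind cases (i) and (ii).

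The key steps I would carry out are as follows. First, write down explicitly the general solution of the separated ODE system on $(0,1)$ satisfying the boundary conditions $u(0)=0$, $v(1)=0$, and identify the finite set of $\eta$ for which a nontrivial decaying solution exists; this reduces the trapped‑mode question to a transcendental equation $F(\omega,\eta,\delta)=0$ together with the requirement $\operatorname{Im}\eta\neq 0$. Second, at $\delta=0$ there are no trapped modes (the unperturbed strip problem has purely continuous spectrum), so by analytic/Fredholm perturbation theory in $\delta$ any trapped mode for small $\delta$ must bifurcate from the edge of the continuous spectrum, i.e. from a threshold; here the augmented scattering matrix and the machinery of the companion existence proof (Theorem \ref{Texistence}) supply the precise asymptotics — the eigenvalue, if it exists, lies at distance $O(\delta^{2})$ below $\omega_N$, which immediately gives the cutoff $\varepsilon_0\sim\delta_0^{2}$ in (i) and (ii). Third, for the multiplicity bound in (ii): near $\omega_N$ there is exactly one evanescent mode that is ``almost propagating'' (the one created by the threshold), so the relevant solvability condition is governed by a scalar (rank‑one) equation; a Rouché/implicit‑function argument on $F(\omega,\eta,\delta)=0$ then shows it has at most one root with $\operatorname{Im}\eta\neq 0$ in the relevant neighbourhood, hence multiplicity $\le 1$. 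Fourth, for (iii): on the compact $\omega$‑range $[0,C_1]$ away from all thresholds, the number and type of modes is locally constant and the unperturbed problem ($\delta=0$) has an invertible reduced operator on the space of decaying solutions; by a compactness/continuity argument the inverse stays bounded for $|\delta|<\delta_1$ uniformly, so no eigenvalue can be created — a standard ``small perturbation of an operator without eigenvalues in a gap'' estimate, made quantitative using the explicit resolvent bounds away from thresholds.

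I expect the main obstacle to be the delicate analysis \emph{exactly at and infinitesimally near} the threshold $\omega_N$, where the two colliding $\eta$‑branches make the naive perturbation expansion singular: the evanescent mode degenerates, the Jordan structure of the transfer matrix changes, and one must use the correct scaling $\eta-\eta_N\sim\sqrt{\omega-\omega_N}$ together with a Puiseux‑type expansion of $F$ to extract the sign condition that distinguishes (i) from (ii). Handling this requires the augmented scattering matrix formalism (so that the threshold is a regular point of an enlarged, invertible problem) and careful bookkeeping of which root of the square root corresponds to an $L_2$ solution. Once that local picture is pinned down, cases (i), (ii), (iii) follow by gluing the local threshold analysis to the uniform away‑from‑threshold estimates via a finite covering of $[0,C_1]$ and choosing $\delta_0,\delta_1$ smaller than all the finitely many local thresholds. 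The bound $\varepsilon_0,\delta_0$ depending on $N$ is unavoidable because the spacing of thresholds and the relevant group‑velocity derivative $\frac{d^2}{d\kappa^2}(\kappa^{-2}\sin^2\kappa)$ at $\omega_N$ degrade as $N\to\infty$.
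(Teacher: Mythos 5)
Your overall heuristic picture (above the threshold the near-threshold branches are propagating, below it they form a decaying pair, away from thresholds everything is uniformly controlled) matches the paper's, but the mechanisms you propose to make it rigorous have genuine gaps. First, the reduction of the trapped-mode question to a scalar transcendental equation $F(\omega,\eta,\delta)=0$ via Fourier transform in $y$ does not exist for $\delta\neq 0$: the potential $\mathcal{P}(x,y)$ destroys translation invariance in $y$, so the problem does not separate, and the dispersion relation only describes solutions outside $\mathrm{supp}\,\mathcal{P}$. Second, your key step for (i) — ``at $\delta=0$ there are no trapped modes, so by analytic/Fredholm perturbation theory any trapped mode for small $\delta$ must bifurcate from a threshold, and it lies $O(\delta^2)$ below $\omega_N$'' — is not a proof: the putative eigenvalues are \emph{embedded} in a continuous spectrum covering all of $\mathbb{R}$, where standard analytic perturbation theory does not track eigenvalues (this instability is the very reason the augmented scattering matrix is introduced), and non-existence above the threshold is exactly the assertion to be proved, not something the existence machinery of Theorem \ref{Texistence} supplies. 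Also, the theorem fixes $\varepsilon_0$ and $\delta_0$ as independent constants; no relation $\varepsilon_0\sim\delta_0^2$ is claimed or needed. Third, for (ii) your Rouch\'e/implicit-function count of roots $\eta$ of $F$ does not bound the \emph{dimension of the eigenspace at a fixed $\omega$}, which is what multiplicity means here.

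What actually closes these gaps in the paper is the weighted-space solvability theory with the potential included: any $L^2$ solution, by the asymptotic representation of Theorem \ref{T3s2} together with Proposition \ref{Pr1}(iii), equals a finite combination of the explicit waves plus a remainder in $\mathcal{H}_{\gamma}^{+}$. For $\omega\geq\omega_N$ all roots $\lambda$ in the strip are real, so the wave part is purely oscillatory, forcing the $L^2$ solution into $\mathcal{H}_{\gamma}^{+}$, where the operator $\mathcal{D}+(\delta\mathcal{P}-\omega)I$ is injective for small $\delta$ (Theorem \ref{T1a}); hence no trapped modes, proving (i). For $\omega<\omega_N$ the wave part of a trapped mode consists only of the two exponentials with $\lambda_{\pm}$; given two trapped modes, a linear combination cancels the $\lambda_-$ component and lands in $X_{\gamma}^{-}\times Y_{\gamma}^{-}$, where Theorem \ref{T1a} again gives injectivity, so the combination vanishes and the modes are proportional — multiplicity at most one, proving (ii). Part (iii) repeats the argument of (i) with $\gamma$ chosen uniformly for $\omega$ in the compact set away from all thresholds, using a uniform bound on the inverse. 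Your sketch lacks precisely these ingredients (the representation theorem for the \emph{perturbed} problem and the isomorphism on exponentially weighted spaces), and without them neither the exponential decay of a would-be trapped mode, nor the contradiction in (i), nor the multiplicity bound in (ii) can be established.
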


Those results come from the analysis of the trapped modes in the $K^{\prime}$ valley (system (\ref{homoDirac2P}), (\ref{homoDirac3P})), however the analysis in the $K$ valley (first two equations in (\ref{eq:Diractot}) with boundary conditions (\ref{BCDiractot})) is analogous and requires complex conjugation only.

The continuous spectrum of the problem (\ref{homoDirac2P}), (\ref{homoDirac3P})
with $\mathcal{P}=0$ covers the whole real axis and, hence, its
eigenvalues, if exist, possess the natural instability, that is, a
small perturbation may lead them out from the spectrum and turn into
points of complex resonance, cf.~\cite{AsPaVa,na510} and the review
paper \cite{LM}. A few of approaches have been proposed to compensate
for this instability and to detect eigenvalues embedded into the continuous
spectrum. First of all, a simple but very elegant trick was developed
in \cite{EvLeVa} for scalar problems. Namely, under a symmetry
assumption on waveguide's shape an artificial Dirichlet condition
is imposed on the mid-hyperplane of the waveguide which shifts the
lower bound of the spectrum above and allows to apply the variational
or asymptotic method to find out a point in the discrete spectrum
of the reduced problem. At the same time, the odd extension of the
corresponding eigenfunciton through the Dirichlet hyperplane gives
an eigenfunction of the original problem so that it remains to verify
that the eigenvalue falls into the original continuous spectrum. In
other words, the problem operator is restricted into a subspace where
it may get the discrete spectrum which becomes a part of the point
spectrum in the complete setting. For vectorial problems the existence
of such invariant subspaces usually demand very strong conditions
on physical and geometrical properties of waveguides and therefore
the trick works rather rarely or needs supplementary ideas, cf.~\cite{Vas2,Vas3}
and \cite{na398}. Unfortunately, the Dirac equations do not possess
the necessary properties and we are not able to find a way to apply
the trick in our problem. 

Another approach accepting formally self-adjoint elliptic systems
but employing much more elaborated asymptotic analysis is based on
the concept of enforced stability of embedded eigenvalues \cite{TMP,na510,FAA}.
In this way, having an eigenvalue in the continuous spectrum of a
waveguide with $N$ open channels for wave propagation one can select
a small perturbation of the problem by means of tuning $N$ parameters
such that, although the eigenvalue enjoys a perturbation, it remains
sitting on the real axis and does not move into the complex plane.
It is remarkable that, as it was shown in a different situations \cite{TMP,na500,na510,na521}
and others, it is possible to take as an "initiator" of a trapped
mode a particular standing wave at the threshold value of the spectral
parameter and by an appropriate choice of the perturbation parameters
to construct an eigenvalue which is situated near but only on one
side of the threshold so that it belongs to the continuous spectrum.
This method was introduced and developed in \cite{TMP,na510,FAA}.
Aiming to apply it for detection of eigenvalues for the zigzag
graphene nanoribbon, we unpredictably observed that the corresponding
boundary value problem in whole is not elliptic (see Appendix \ref{sec:Appendix:-Ellipticity}).
As a result, many steps of the detection procedure require serious
modifications.

The paper is organized as follows. In Sect.~\ref{sec:DE} we analyse Dirac model
without potential. For each non-zero energy we construct all bounded solutions and identify thresholds where the dimension of the space of such solution changes. We construct also unbounded solutions near threshold and introduce a symplectic form, which will play an important role in the study of the scattering problem. These unbounded solutions are studied in Sect.~\ref{sub:solexp}, \ref{sub:BCl} and \ref{sub:PC}. In Sect.~\ref{sub:NH}, we present a solvability result for non-homogenous problem. In Sect.~\ref{sub:Perturbated-Dirac-equation} we add potential to the model, and consider scattering problem with the use of artificial augumented
scattering matrix introduced in Sect.~\ref{sub:S}. In Sect.~\ref{sec:TrappedModes} we analyze trapped modes, providing in Sect.~\ref{sub:NS} a necessary and sufficient condition for their existence, from which in Sect.~\ref{sub:PT1} we extract the potential description and prove Theorem \ref{Texistence}. Finally, in the last section, Sect.~\ref{sec:mult}, we analyze the multiplicity of trapped modes proving Theorem \ref{Tmultiplicity}.

\section{Dirac equation}\label{sec:DE}

\subsection{Problem statement}\label{sec:PS}

We consider problem (\ref{homoDirac2P}) without potential ($\mathcal{P}=0$)
\begin{equation}
{\mathcal D}\left(\begin{array}{c}
u\\
v
\end{array}\right)=
\omega
\left(\begin{array}{c}
u\\
v
\end{array}\right),\;\;\;
{\mathcal D}={\mathcal D}(\partial_x,\partial_y):=
\left(\begin{array}{cc}
0 & -i\partial_{x}+\partial_{y}\\
-i\partial_{x}-\partial_{y} & 0
\end{array}\right)
\label{homoDirac2}
\end{equation}
supplied with the
boundary conditions (\ref{homoDirac3P}).
Our goal is to find solutions to this problem, especially
bounded ones and thereby describe continuous spectrum of the operator
corresponding to (\ref{homoDirac2}), (\ref{homoDirac3P}).

Let us introduce the spaces 
\[
X=\{u\in L^{2}(\Pi)\,:\,(i\partial_{x}+\partial_{y})u\in L^{2}(\Pi),\; \mbox{and}\;\; u(0,y)=0\}
\]
and 
\[
Y=\{v\in L^{2}(\Pi)\,:\,(-i\partial_{x}+\partial_{y})v\in L^{2}(\Pi),\;\mbox{and}\;\; v(1,y)=0\}.
\]
Then ${\mathcal D}$ is a self-adjoint operator in $L^{2}(\Pi)\times L^{2}(\Pi)$
with the domain $X\times Y$.

We note that for $\omega=0$ we have $(-i\partial_{x}+\partial_{y})v=(-i\partial_{x}-\partial_{y})u=0$
therefore $u=u(-x+iy)$ and $v=v(x+iy)$ what together with $u(0,y)=0\ ,v(1,y)=0$
give $u=0\ ,v=0$. There are no non-trivial solutions to (\ref{homoDirac2}),
(\ref{homoDirac3P}) for $\omega=0$.

Now, assume that $\omega\neq0$, then problem (\ref{homoDirac2}), (\ref{homoDirac3P})
can be written as the system
\begin{equation}
-\Delta u=\omega^{2}u\ ,\ \ v=\frac{1}{\omega}(-i\partial_{x}-\partial_{y})u\ ,\ \ u(0,y)=0\ ,\ \ v(1,y)=0.\label{delt}
\end{equation}
We are looking for non-trivial solutions which are exponential (or
possibly power exponential) in $y$, i.e.
\begin{equation}
(u(x,y),v(x,y))=e^{-i\lambda y}(\mathcal{U}(x),\mathcal{V}(x)),\label{uosc}
\end{equation}
$\lambda$ is a component of a wave vector parallel with the nanoribbons edge.
Then insertion into (\ref{delt}) gives 
\begin{equation}
\begin{cases}
-\mathcal{U}_{xx}=(\omega^{2}-\lambda^{2})\mathcal{U}\ ,\ \ \mathcal{U}(0)=0,\ \ \ \ \mathcal{U}_{x}(1)=\lambda\mathcal{U}(1),\\
\mathcal{V}=\frac{1}{\omega}(-i\mathcal{U}_{x}+i\lambda\mathcal{U}).
\end{cases}\label{eq:sep}
\end{equation}

\begin{lemma} 

If {\rm(\ref{uosc})} is a non-trivial solution to {\rm(\ref{homoDirac2})},
{\rm(\ref{homoDirac3P})} with a certain complex $\lambda$ then: {\rm(i)} $\Im\lambda=0$
or {\rm(ii)} $\Im\lambda\neq0$ and $\Re\lambda>0$.

\end{lemma}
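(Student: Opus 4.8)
The plan is to analyze the eigenvalue problem \eqref{eq:sep} for $\mathcal{U}$ directly, since once we understand the admissible $\lambda$ there the claim follows. Write $\mu^2 = \omega^2 - \lambda^2$ (a single complex number, with a choice of square root to be made harmless by symmetry). The general solution of $-\mathcal{U}_{xx} = \mu^2 \mathcal{U}$ with $\mathcal{U}(0)=0$ is, up to a constant, $\mathcal{U}(x) = \sin(\mu x)$ (interpreted as $x$ when $\mu=0$), and the remaining boundary condition $\mathcal{U}_x(1) = \lambda \mathcal{U}(1)$ becomes the transcendental relation $\mu \cos\mu = \lambda \sin\mu$, together with $\mu^2 + \lambda^2 = \omega^2$. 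So everything reduces to studying complex solutions $(\mu,\lambda)$ of this pair of equations with $\omega \in \mathbb{R}\setminus\{0\}$ fixed.

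First I would dispose of the degenerate cases: if $\sin\mu = 0$ then $\mu\cos\mu=0$ forces $\mu=0$, hence $\lambda^2=\omega^2$ is real, giving case (i); if $\mu=0$ the condition reads $\mathcal{U}_x(1)=\lambda\mathcal{U}(1)$ with $\mathcal{U}(x)=x$, i.e.\ $1=\lambda$, again real. So assume $\mu\neq0$, $\sin\mu\neq0$, and write the constraint as $\lambda = \mu\cot\mu$. The idea is then to test the PDE/boundary-value structure against itself: multiply the ODE $-\mathcal{U}_{xx}=\mu^2\mathcal{U}$ by $\overline{\mathcal{U}}$, integrate over $(0,1)$, integrate by parts using $\mathcal{U}(0)=0$ and $\mathcal{U}_x(1)=\lambda\mathcal{U}(1)$, to obtain
\begin{equation}
\int_0^1 |\mathcal{U}_x|^2\,dx - \lambda |\mathcal{U}(1)|^2 = \mu^2 \int_0^1 |\mathcal{U}|^2\,dx .
\label{eq:green}
\end{equation}
Taking the imaginary part gives $(\Im\lambda)\,|\mathcal{U}(1)|^2 = -(\Im \mu^2)\int_0^1|\mathcal{U}|^2 = (\Im\lambda^2)\int_0^1|\mathcal{U}|^2$, where I used $\mu^2+\lambda^2=\omega^2\in\mathbb{R}$. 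Since $\Im\lambda^2 = 2\Re\lambda\,\Im\lambda$, this reads $(\Im\lambda)\bigl(|\mathcal{U}(1)|^2 - 2\Re\lambda\int_0^1|\mathcal{U}|^2\bigr)=0$. Hence either $\Im\lambda=0$, which is case (i), or $\Im\lambda\neq 0$ and $2\Re\lambda\int_0^1|\mathcal{U}|^2 = |\mathcal{U}(1)|^2 \ge 0$, so $\Re\lambda\ge 0$; it then remains only to upgrade $\Re\lambda\ge0$ to the strict inequality $\Re\lambda>0$ of case (ii). This exclusion of $\Re\lambda=0$ is the one genuinely delicate point: if $\Re\lambda=0$ and $\Im\lambda\neq0$ then the relation above forces $\mathcal{U}(1)=0$, so $\mathcal{U}$ solves $-\mathcal{U}_{xx}=\mu^2\mathcal{U}$ with $\mathcal{U}(0)=\mathcal{U}(1)=0$ and additionally $\mathcal{U}_x(1)=\lambda\mathcal{U}(1)=0$; by uniqueness for the ODE initial value problem ($\mathcal{U}(1)=\mathcal{U}_x(1)=0$) this gives $\mathcal{U}\equiv0$, contradicting nontriviality. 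Thus $\Re\lambda=0$ is impossible and case (ii) has the strict sign.

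I expect the main obstacle to be purely bookkeeping: making sure the separated form \eqref{uosc} genuinely captures what "non-trivial solution" means, i.e.\ that $\mathcal{U}\not\equiv0$ (if $\mathcal{U}\equiv0$ then $\mathcal{V}\equiv0$ from the second line of \eqref{eq:sep}, so the pair $(u,v)$ is trivial), and handling the branch ambiguity in $\mu=\pm\sqrt{\omega^2-\lambda^2}$ — but this is harmless since \eqref{eq:green} is written purely in terms of $\mu^2=\omega^2-\lambda^2$ and $\lambda$, with no square root appearing. One should also note that the argument never used that $\lambda$ is small or that a potential is present; it is a clean consequence of self-adjointness of $\mathcal{D}$ together with the specific asymmetric boundary conditions \eqref{homoDirac3P}, which is exactly why the energy parameter $\omega^2$ drops out of the imaginary part. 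No compactness or spectral theory is needed; the whole proof is the integration by parts \eqref{eq:green} plus the ODE uniqueness theorem.
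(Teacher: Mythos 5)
Your proposal is correct and follows essentially the same route as the paper: multiply the first equation of (\ref{eq:sep}) by $\overline{\mathcal{U}}$, integrate by parts using $\mathcal{U}(0)=0$ and $\mathcal{U}_x(1)=\lambda\,\mathcal{U}(1)$, and take the imaginary part to get $\Im\lambda\,\bigl(|\mathcal{U}(1)|^2-2\Re\lambda\int_0^1|\mathcal{U}|^2dx\bigr)=0$. In fact your argument is slightly more complete than the paper's terse proof, since you explicitly rule out the borderline case $\Re\lambda=0$, $\Im\lambda\neq0$ (where $\mathcal{U}(1)=0$ together with $\mathcal{U}_x(1)=\lambda\,\mathcal{U}(1)=0$ forces $\mathcal{U}\equiv0$ by ODE uniqueness), a step the paper leaves implicit.
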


\begin{proof}Multiplying the first equation in (\ref{eq:sep}) by
$\overline{\mathcal{U}}$ and integrating in $x\in(0,1)$,
we have
$$
\int_{0}^{1}(\omega^{2}-\lambda^{2})\mathcal{U}\overline{\mathcal{U}}dx=-\int_{0}^{1}\mathcal{U}_{xx}\overline{\mathcal{U}}dx=\int_{0}^{1}\mathcal{U}_{x}\overline{\mathcal{U}}_{x}dx-\lambda\mathcal{U}(1)\overline{\mathcal{U}}(1)
$$
Taking the imaginary part of this equation, we get
$$
2\Re\lambda\,\Im\lambda\int_{0}^{1}\mathcal{U}\,\overline{\mathcal{U}}dx=\Im\lambda\mathcal{U}(1)\,\overline{\mathcal{U}}(1).
$$
Therefore, if $\Im\lambda\neq0$, then $\Re\lambda$ must be positive provided $\mathcal{U}\neq0$.
\end{proof}Consider the case $\lambda^{2}=\omega^{2}$. Then there
exists an exponential solution only for $\lambda=1$ and it has the
following form
\begin{equation}
\mathcal{U}(x)=x\ ,\ \ \mathcal{V}(x)=i\frac{1}{\omega}(x-1).\label{sol10-1}
\end{equation}
Let $\lambda^{2}\neq\omega^{2}$. Then the solution to (\ref{eq:sep})
is given by 
\begin{equation}
\mathcal{U}(x)=\sin(\kappa x)\ ,\ \ \mathcal{V}(x)=\pm i\sin(\kappa(x-1)),\label{sol-1}
\end{equation}
where $\kappa$ satisfies 
\begin{equation}
\frac{\sin\kappa}{\kappa}=\pm\frac{1}{\omega}\label{cond}
\end{equation}
and $\lambda$ can be evaluated from: 
\begin{equation}
\lambda=\kappa \cot\kappa.\label{LUD2}
\end{equation}
One can verify that $\kappa^{2}+\lambda^{2}=\omega^{2}$. Relations
(\ref{sol-1}) can be written also as
\begin{equation}
(\mathcal{U}(x),V(x))=\Big(\sin(\kappa x),i\omega^{-1}(-\kappa\cos (\kappa x)+\lambda\sin(\kappa x)\Big)\:,\:\:\:\frac{\sin\kappa}{\kappa}=\pm\frac{1}{\omega}.\label{eq:wavesanother}
\end{equation}
\subsection{Symmetries\label{sub:Symmetries}}

One can verify that if $\kappa$ solves (\ref{cond}) then $-\kappa$
is also a solution to (\ref{cond}). Moreover if $(u,v)$ is a solution
to the problem (\ref{homoDirac2}), (\ref{homoDirac3P}), then replacing
$\kappa$ by $-\kappa$ in (\ref{eq:wavesanother}) we obtain linearly
dependent solution $-(u,v)$. Thus, it suffices to take only one value
of $\kappa$ satisfying (\ref{cond}), we assume that $arg(\kappa)\in[0,\pi)$.

In what follows we look only at positive $\omega$. If $\omega$ is
negative then according to the second formula (\ref{eq:wavesanother}),
it can be obtained from the corresponding solution $(u,v)$ with positive
$\omega$ by taking the second component $v$ with minus sign. 

Finally, if $(u,v)$ is a solution, then $(\overline{u}(x,-y),-\overline{v}(x,-y))$
is also a solution together with $(v(1-x,y),-u(1-x,y))$.

\subsection{Solutions of the form (\ref{uosc}) with real wave vector $\lambda$ \label{sub:sol}}

Here we construct all solutions to (\ref{homoDirac2}), (\ref{homoDirac3P})
of the form (\ref{uosc}) with real $\lambda$. According to Sect.
\ref{sub:Symmetries}, it is sufficient to consider $\omega>0$ in
(\ref{homoDirac2}). Let us divide the analysis in three cases: $0<\omega<1$,
$\omega=1$ and $\omega>1$.

\begin{figure}[h]
\centering{}\includegraphics[scale=2]{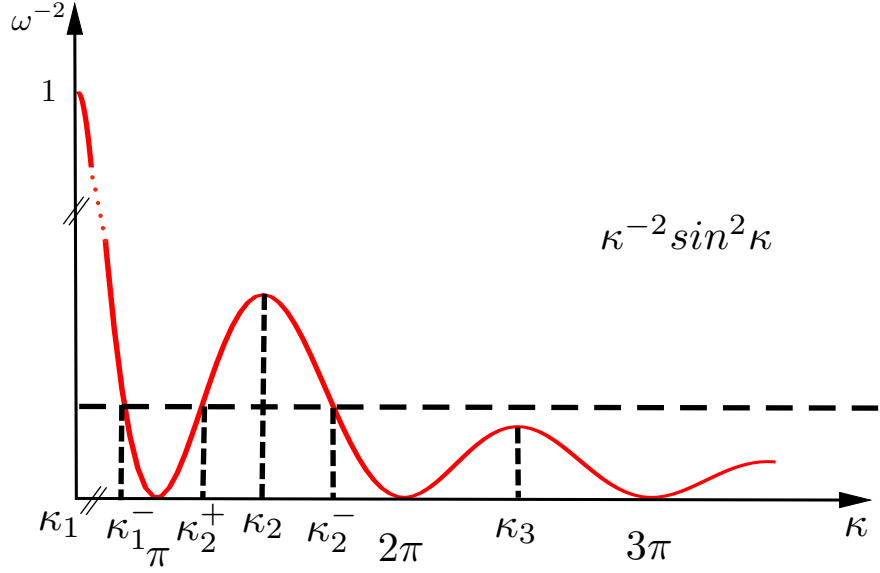}\caption{Dependence of $\kappa$ on $\omega$. \label{fig:Plot-of-sink}}
\end{figure}

\begin{figure}[h]
\centering{}\includegraphics[scale=2]{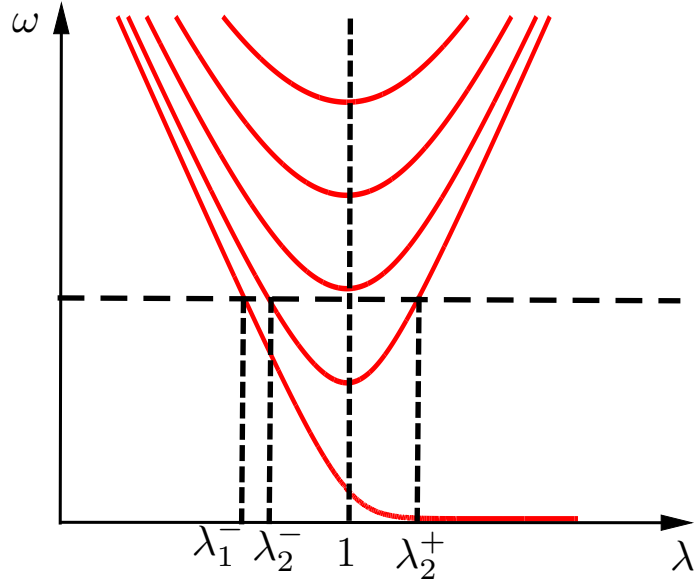}\caption{Energy bands for zigzag graphene nanoribbon. Dependence of wave vector $\lambda$
on energy $\omega$.\label{fig:disp}}
\end{figure}

(1) $0<\omega<1$. Equation (\ref{cond}) has a solution $\kappa=i\tau$
where $\tau$ is real and satisfies 
$$
\frac{\sinh\tau}{\tau}=\pm\frac{1}{\omega}.
$$
Then $\lambda=\tau\coth\tau$ and 
$$
\mathcal{U}(x)=i\sinh(\tau x)\ ,\ \ \mathcal{V}(x)=\mp\sinh(\tau(x-1))
$$

(2) $\omega=1$. The solution to (\ref{cond}) is $\kappa=0$, $\lambda=1$
and the vector $(U,\mathcal{V})$ is given by (\ref{sol10-1}). 

(3) $\omega>1$. Then $\kappa$ is real and satisfies (\ref{cond}) and corresponding $\lambda$
is evaluated by (\ref{LUD2}). In order to describe solutions of (\ref{cond}) the real numbers $\kappa_{j}$ are introduced as the
maximum of $\kappa^{-2}sin^{2}\kappa$ on the interval $[(j-1)\pi,\, j\pi)$,
$j=1,2,\ldots$ (see Figure \ref{fig:Plot-of-sink}). We put 
$$
\frac{1}{\omega_{j}^{2}}=\frac{\sin^{2}\kappa_{j}}{\kappa_{j}^{2}},\;\;\text{and note that}\qquad\frac{d}{d\kappa}\Big(\frac{\sin^{2}\kappa}{\kappa^{2}}\Big)\Big|_{\kappa=\kappa_{j}}=0.\label{eq:max}
$$
Then $\lambda_{j}=1\:\text{and}\:\kappa_{j}=\sqrt{\omega_{j}^{2}-1}.$ From (\ref{cond})
it follows that $\omega_j$ satisfies
\begin{equation}
\frac{\omega^{2}-1}{\omega^{2}}=\sin^{2}(\sqrt{\omega^{2}-1}).\label{K3}
\end{equation}
One can verify that $\kappa_{j}<(2j-1)\pi/2$ and $\kappa_{1}=0$,
$\omega_{1}=1$.
(a) If $\omega\in(\omega_{N-1},\omega_{N})$, $N=2,\,3\ldots$, then
there are $2N-3$ solutions with real $\lambda$, which can be labelled
as follows (see Figure \ref{fig:Plot-of-sink} and Figure \ref{fig:disp})
$$
\lambda_{j}^{\pm}=\kappa_{j}^{\pm}\cot(\kappa_{j}^{\pm}),\:\:\: j=2,\,3,\ldots,\, N-1,
$$
where $\kappa_{j}^{\pm}\in((j-1)\pi,j\pi)$ satisfies 
$$
\frac{\sin\kappa_{j}^{\pm}}{(\kappa_{j}^{\pm})}=\frac{(-1)^{j+1}}{\omega}\;\;\mbox{and \ensuremath{\kappa_{j}^{+}<\kappa_{j}<\kappa_{j}^{-}}}.\label{eq:kappapm}
$$
The corresponding solutions to (\ref{homoDirac2}) are given by 
\begin{equation}
w_{j}^{\pm}(x,y)=e^{-i\lambda_{j}^{\pm}y}(\mathcal{U}_{j}^{\pm}(x),\mathcal{V}_{j}^{\pm}(x)),\label{15Mars1a}
\end{equation}
where 
\begin{equation}
(\mathcal{U}_{j}^{\pm}(x),\mathcal{V}_{j}^{\pm}(x))=(\sin(\kappa_{j}^{\pm}x),(-1)^{j+1}i\sin(\kappa_{j}^{\pm}(x-1))).\label{Uosc2-1}
\end{equation}
There is only one solution of (\ref{cond}) labeled by a negative
index $"-"$ on the interval $(0,\pi)$, which we denote by $\kappa_{1}^{-}$
and corresponding value of $\lambda$ by $\lambda_{1}^{-}$. The corresponding
solution $(\mathcal{U},\mathcal{V})$ is given by $w_{1}^{-}(x,y)=e^{-i\lambda_{1}^{-}y}(\mathcal{U}_{1}^{-}(x),\mathcal{V}_{1}^{-}(x))$,
where $\mathcal{U}_{1}^{-}$ and $\mathcal{V}_{1}^{-}$ are evaluated
by (\ref{Uosc2-1}) with $j=1$.
(b) The case $\omega=\omega_{N}$, $N\geq2$, is called the threshold
case. Here we have $2N-3$ solutions, described already in the case
(a). In addition there are two solutions 
with $\lambda=1$ and $\kappa=\kappa_{N}$, which have the form 
\begin{equation}
w_{N}^{0}(x,y)=e^{-iy}(\sin(\kappa_{N}x),(-1)^{N+1}i\sin(\kappa_{N}(x-1)))\label{eq:sol01-1}
\end{equation}
and
\begin{align}
w_{N}^{1}(x,y) & =yw_{N}^{0}(x,y)-\kappa_{N}^{-1}e^{-iy}(ix\cos(\kappa_{N}x),(-1)^{N+1}(1-x)\cos(\kappa_{N}(x-1))),\label{eq:sol02}
\end{align}
where the last solution has linear growth in $y$.

Thus for each $\omega\in(0,\infty)$ there exists a bounded solution
to (\ref{homoDirac2}), (\ref{homoDirac3P}) of the form (\ref{15Mars1a}).
Hence the continuous spectrum
of the Dirac operator ${\mathcal D}$ is the whole real line, i.e. 
$$
\sigma_{c}=\mathbb{R}.
$$

\subsection{Solutions of the form (\ref{uosc}) with non-real wave vector $\lambda$\label{sub:solexp}}

We will see in what follows that trapped modes may be generated by
solutions to (\ref{homoDirac2}), (\ref{homoDirac3P}) with non-real
$\lambda$. In this section we describe such solutions with $\lambda$
close to the real axis.

\begin{figure}[h]
\centering{}\includegraphics[scale=0.5]{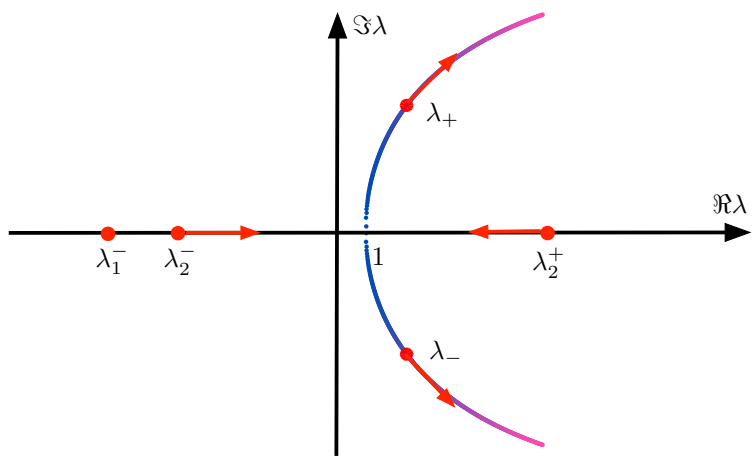}\caption{Bifurcation of $\lambda$ from the threshold. \label{fig:Plot_complex_lambda}}
\end{figure}

\label{Sect2}

Consider the case when $\omega$ is close to $\omega_{N}$, $N=2,\,3,\ldots$
We introduce a small positive parameter $\varepsilon$ and denote by
$\omega_{\varepsilon}$ the energy satisfying $\omega_{\varepsilon}^{-1}=\omega_{N}^{-1}+\varepsilon$.
Then the double root $\lambda=1$ of (\ref{LUD2}) with $\omega=\omega_{N}$
bifurcates into two roots $\lambda_{\pm}=\lambda_{\pm}(\varepsilon)$ (see Figure \ref{fig:Plot_complex_lambda}).
These roots can be found from the equation 
\begin{equation}
g(\lambda,\varepsilon)=\frac{(-1)^{N+1}}{\omega_{\varepsilon}},\;\; g(\lambda,\varepsilon)=\frac{\sin\sqrt{\omega_{\varepsilon}^{2}-\lambda^{2}}}{\sqrt{\omega_{\varepsilon}^{2}-\lambda^{2}}}.\label{15March31a}
\end{equation}
From the definition (\ref{15March31a}), it follows
that $g(\lambda,\varepsilon)$ is an analytic function of $(\omega_{\varepsilon}^{2}-\lambda^{2})$.
Using Taylor's formula for the function $g$ near the point $(\lambda,\varepsilon)=(1,0)$,
we get 
\begin{eqnarray}
 &  & g(\lambda,\varepsilon)-\frac{(-1)^{N+1}}{\omega_{N}}=(-1)^{N}\Big(\varepsilon+\frac{(\lambda-1)^{2}}{2\omega_{N}(\omega_{N}^{2}-1)}+\frac{\omega_{N}^{2}(\lambda-1)\varepsilon}{\omega_{N}^{2}-1}\nonumber \\
 &  & +\frac{(2+3\omega_{N}^{2})(\lambda-1)^{3}}{6\omega_{N}(\omega_{N}^{2}-1)^{2}}\Big)+O(|\lambda-1|^{4}+|\lambda-1|^{2}\varepsilon+\varepsilon^{2}).\label{eq:expansion}
\end{eqnarray}
By means of (\ref{eq:expansion}) one can find the expansions
\begin{equation}
\lambda_{\pm}=1\pm\sqrt{\varepsilon}\lambda_{1}i+\varepsilon\lambda_{2}+O(\varepsilon^{3/2}),\label{eq:lambdapm}
\end{equation}
where $\lambda_{1}=\sqrt{2\omega_{N}(\omega_{N}^{2}-1)}$ and $\lambda_{2}=2\omega_{N}/3$.
Since $\overline{g(\lambda,\varepsilon)}=g(\overline{\lambda},\varepsilon)$,
it follows that $\overline{\lambda_{+}}=\lambda_{-}$. Then, from
the energy relation $\lambda_{\pm}^{2}+\kappa_{\pm}^{2}=\omega_{\varepsilon}^{2}$,
we find 
\begin{equation}
\kappa_{\pm}=\kappa_{N}\mp\sqrt{\varepsilon}\kappa_{N1}i+O(\varepsilon)\label{eq:kappaexp}
\end{equation}
where $\kappa_{N}=\sqrt{\omega_{N}^{2}-1}$ and $\kappa_{N1}=2\sqrt{\omega_{N}}$.
Now, solutions to (\ref{homoDirac2}), (\ref{homoDirac3P}) of the
form (\ref{uosc}) with $\lambda_{\pm}$ are given by
\begin{equation}
w_{N}^{\pm}(x,y)=e^{-i\lambda_{\pm}y}\left(\begin{array}{c}
\sin(\kappa_{\pm}x)\\
(-1)^{N+1}i\sin(\kappa_{\pm}(x-1))
\end{array}\right)\label{exp}
\end{equation}
and by (\ref{eq:lambdapm})
and (\ref{eq:kappaexp}) it can
be written in terms of $w_{0}^{N}$ and $w_{1}^{N}$ (see (\ref{eq:sol01-1}) and
(\ref{eq:sol02})) as
\begin{equation}
w_{N}^{\pm}(x,y)=w_{0}^{N}\pm\sqrt{\varepsilon}\sqrt{2\omega_{N}(\omega_{N}^{2}-1)}w_{1}^{N}+O(\varepsilon).\label{eq:wpmapprox}
\end{equation}
Waves (\ref{exp}) are not analytic in $\varepsilon$.
Consider instead their linear combinations
\begin{align}
\mathbf{w}_{N}^{\varepsilon+}(x,y) & =\frac{1}{2\pi i}\int\limits _{|\lambda-1|=a}\frac{e^{-i\lambda y}}{g(\lambda,\varepsilon)-(-1)^{N+1}\omega_{\varepsilon}^{-1}}\left(\begin{array}{c}
\sin(\kappa(\lambda)x)\\
(-1)^{N+1}i\sin(\kappa(\lambda)(x-1))
\end{array}\right)d\lambda\nonumber \\
 & =\gamma_{+}^{\varepsilon}w_{N}^{+}+\gamma_{-}^{\varepsilon}w_{N}^{-},\label{eq:we+}
\end{align}
and
\begin{align}
\mathbf{w}_{N}^{\varepsilon-}(x,y) & =\frac{1}{2\pi i}\int\limits _{|\lambda-1|=a}\frac{(\lambda-1)e^{-i\lambda y}}{g(\lambda,\varepsilon)-(-1)^{N+1}\omega_{\varepsilon}^{-1}}\left(\begin{array}{c}
\sin(\kappa(\lambda)x)\\
(-1)^{N+1}i\sin(\kappa(\lambda)(x-1))
\end{array}\right)d\lambda\nonumber \\
 & =\gamma_{+}^{\varepsilon}(\lambda_{+}-1)w_{N}^{+}+\gamma_{-}^{\varepsilon}(\lambda_{-}-1)w_{N}^{-},\label{eq:we-}
\end{align}
where $\kappa(\lambda)=\sqrt{\omega^{2}-\lambda^{2}}$ and 
\begin{equation}
\gamma_{\pm}^{\varepsilon}=\Big({\frac{\partial g}{\partial\lambda}(\lambda_{\pm},\varepsilon)}\Big)^{-1},\label{eq:defgamma}
\end{equation}
one can verify that $\gamma_{-}^{\varepsilon}=\overline{\gamma_{+}^{\varepsilon}}$.
Here $a$ is such that the disc $\{|\lambda-1|\leq a\}$ contains exaclty
two solutions $\lambda_{+}$ and $\lambda_{-}$ to (\ref{15March31a}).
Above waves are analytic in $\varepsilon$ because the function $(g(\lambda,\varepsilon)-(-1)^{N+1}\omega_{\varepsilon}^{-1})^{-1}$
is analytic in $\varepsilon$ for $\lambda$ satisfying $|\lambda-1|=a$. 

Now let us consider waves (\ref{eq:we+}) and (\ref{eq:we-})
in the limit case $\varepsilon=0$. First from (\ref{eq:expansion})
we obtain the following expansion $\frac{\partial g}{\partial\lambda}$
near the point $(\lambda,\varepsilon)=(1,0)$

\begin{equation}
\frac{\partial g}{\partial\lambda}(\lambda,\varepsilon)=(-1)^{N}(\frac{\lambda-1}{\omega_{N}(\omega_{N}^{2}-1)}+\frac{\omega_{N}^{2}\varepsilon}{(\omega_{N}^{2}-1)}+\frac{(2+3\omega_{N}^{2})(\lambda-1)^{2}}{2\omega_{N}(\omega_{N}^{2}-1)^{2}}\Big)+O(|\lambda-1|^{3}+|\lambda-1|\varepsilon).\label{eq:dgamapprox}
\end{equation}
From (\ref{eq:defgamma}) and (\ref{eq:dgamapprox}) combined with
(\ref{eq:lambdapm}), we get the following expansion 
\begin{equation}
\gamma_{\pm}^{\varepsilon}=(-1)^{N}\frac{\sqrt{\omega_{N}(\omega_{N}^{2}-1)}}{\sqrt{\varepsilon}\sqrt{2}i}\Big(\pm1+i\sqrt{\varepsilon}
\frac{\sqrt{\omega_N}(-3\omega_N^2+\omega_N-\frac{4}{3})}{\sqrt{2(\omega_N^2-1)}}
\Big)+O(\sqrt{\varepsilon}).\label{eq:gammaapprox}
\end{equation}
Finally, combining (\ref{eq:wpmapprox}) with (\ref{eq:gammaapprox}),
we get

\begin{equation*}
\mathbf{w}_{N}^{\varepsilon+}=\tilde{A}w_{0}^{N}-i\tilde{B}w_{1}^{N}+O(\varepsilon)
\end{equation*}
and
\begin{equation*}
\mathbf{w}_{N}^{\varepsilon-}=\tilde{B}w_{0}^{N}+O(\varepsilon),
\end{equation*}
where $\tilde{A}$ and $\tilde{B}$ are real constants given by 
\begin{equation*}
\tilde{A}=(-1)^{N+1}2\omega_{N}^{2}(\omega_{N}^{2}-1)(2\omega_{N}^{2}+\frac{4}{3}),\:\:\:\tilde{B}=(-1)^{N}2\omega_{N}(\omega_{N}^{2}-1).
\end{equation*}

\subsection{Location of the wave vector $\lambda$}\label{sub:RE}

The forthcoming analysis, which is based on the Laplace transform
of the problem (\ref{homoDirac2}), (\ref{homoDirac3P}) with respect
to $y$, requires a knowledge of location of roots to equation (\ref{cond}), (\ref{LUD2})
or equivalently of the equation 
\begin{equation}
\cos\kappa-\lambda\frac{\sin\kappa}{\kappa}=0,\;\;\kappa^{2}=\omega^{2}-\lambda^{2}.\label{K1}
\end{equation}
Let us denote the left-hand side of (\ref{K1})
by $\mathcal{F}(\lambda)$, which  is analytic
with respect to $\lambda$. One can verify that 
$$
\frac{d}{d\lambda}\mathcal{F}(\lambda)=\frac{\lambda^{2}}{\kappa^{2}}\mathcal{F}(\lambda)+\frac{\omega^{2}(\lambda-1)}{\kappa^{2}}\frac{\sin\kappa}{\kappa},\;\;\;\kappa=\sqrt{\omega^{2}-\lambda^{2}}.
$$
We collect required properties of the roots in the following

\begin{prop}\label{Pr1} {\rm(i)} All roots $\lambda$ of {\rm(\ref{K1})}
are simple except of the case $\omega>1$ and $\omega$ is at the threshold - it is a root of {\rm (\ref{K3})}. In this case the root $\lambda=1$ is double and all the other roots
are simple.

{\rm(ii)} Let $\omega>1$. Then there exists an absolute constant $c_{0}$
such that the set $S=\{\lambda=a+ib\,:\,|a|\geq c_{0}\omega,\;\;|b|\leq|a|\}$
contains no roots of {\rm(\ref{K1})}.

{\rm(iii)} Let $\omega_{\varepsilon}^{-1}=\omega_{k}^{-1}+\varepsilon$
for a certain $k=2,\,3,\ldots$, then there exist $\gamma_k>0$
and $\varepsilon_{k}>0$ such that for $\varepsilon\in(0,\varepsilon_{k}]$ all solutions
to {\rm(\ref{K1})} which are located in the strip $|\Im\lambda|\leq\gamma_k$
are real described in Sect.~{\rm\ref{sub:sol}} and complex described in Sect.~{\rm\ref{Sect2}}
\footnote{If $\varepsilon\in[-\varepsilon_{k},0]$ then all such solutions are real, see Figure \ref{fig:disp}}.
\end{prop}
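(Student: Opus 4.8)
The plan is to analyze the entire-function $\mathcal{F}(\lambda)=\cos\kappa-\lambda\,\kappa^{-1}\sin\kappa$ with $\kappa=\sqrt{\omega^2-\lambda^2}$, exploiting that it is in fact analytic in $\lambda$ (the apparent branch points at $\lambda=\pm\omega$ are removable because $\cos\kappa$ and $\kappa^{-1}\sin\kappa$ are even functions of $\kappa$, hence analytic functions of $\kappa^2=\omega^2-\lambda^2$). For part (i), I would use the differential identity $\mathcal{F}'(\lambda)=\lambda^2\kappa^{-2}\mathcal{F}(\lambda)+\omega^2(\lambda-1)\kappa^{-2}\,\kappa^{-1}\sin\kappa$ already recorded in the text: if $\lambda_0$ is a root of $\mathcal{F}$, then $\mathcal{F}'(\lambda_0)=\omega^2(\lambda_0-1)\kappa_0^{-2}\,\kappa_0^{-1}\sin\kappa_0$, so a multiple root forces either $\lambda_0=1$ or $\sin\kappa_0=0$. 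The case $\sin\kappa_0=0$ together with $\mathcal{F}(\lambda_0)=0$ gives $\cos\kappa_0=0$, impossible; and $\kappa_0=0$ combined with $\mathcal{F}=0$ gives the contradiction $1=0$. Hence the only possible multiple root is $\lambda=1$, and then $\kappa_0^2=\omega^2-1$ and $\mathcal{F}(1)=0$ reads $\cos\sqrt{\omega^2-1}=\sqrt{\omega^2-1}\,\omega^{-1}\cdot\frac{\sin\sqrt{\omega^2-1}}{\sqrt{\omega^2-1}}$; squaring and using $\sin^2+\cos^2=1$ yields exactly the threshold equation (\ref{K3}). One should also check the multiplicity at $\lambda=1$ is exactly two at a threshold by differentiating once more and using $\frac{d}{d\kappa}(\kappa^{-2}\sin^2\kappa)|_{\kappa=\kappa_N}=0$ — this is the nondegeneracy condition built into the definition of $\kappa_j$ as a strict local max. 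Away from a threshold, $\lambda=1$ is a simple root when $\omega$ is not among the $\omega_N$.

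For part (ii), the idea is a large-$|\lambda|$ asymptotic analysis in the sector $|\Im\lambda|\le|\Re\lambda|$, $|\Re\lambda|\ge c_0\omega$. Write $\kappa=\sqrt{\omega^2-\lambda^2}=i\lambda\sqrt{1-\omega^2/\lambda^2}$; for $|\lambda|\ge\sqrt{2}\,\omega$ this square root is a well-defined analytic branch close to $i\lambda$, in particular $\Re(\mp i\kappa)$ is comparable to $|\Re\lambda|$ up to lower order, so one of $e^{i\kappa}$, $e^{-i\kappa}$ dominates exponentially. Multiplying $\mathcal{F}(\lambda)=0$ by $2\kappa$ and using $2\kappa\cos\kappa=\kappa(e^{i\kappa}+e^{-i\kappa})$, $2\sin\kappa=-i(e^{i\kappa}-e^{-i\kappa})$, the equation becomes $(\kappa+i\lambda)e^{i\kappa}+(\kappa-i\lambda)e^{-i\kappa}=0$; since $\kappa\approx i\lambda$, the factor $\kappa+i\lambda\approx 2i\lambda$ is large while $\kappa-i\lambda=O(\omega^2/\lambda)$ is small, and $|e^{i\kappa}/e^{-i\kappa}|=|e^{2i\kappa}|$ is exponentially large or small away from the imaginary $\lambda$-axis. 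A quantitative estimate — choosing $c_0$ large enough that the exponential factor beats the polynomial ratio $|\kappa-i\lambda|/|\kappa+i\lambda|$ uniformly in the sector — then shows the left-hand side cannot vanish. This is essentially a Rouché/Phragmén–Lindelöf style exclusion of roots in a sector, and $c_0$ comes out absolute because after the rescaling $\lambda=\omega\mu$ the equation depends on $\omega$ only through $\omega$ itself inside trigonometric arguments whose real parts are $\gtrsim|\Re\mu|\,\omega$, so growth wins for $|\Re\mu|$ bounded below by an absolute constant.

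For part (iii), the plan is a perturbation/continuity argument near the threshold $\omega=\omega_k$, i.e. near $\varepsilon=0$. At $\varepsilon=0$ the roots of $\mathcal{F}$ in any fixed strip $|\Im\lambda|\le\gamma$ are: the real roots $\lambda_j^\pm$ (with $\lambda_1^-$) enumerated in Sect.~\ref{sub:sol}, plus the double root at $\lambda=1$. These are finitely many and isolated (by part (i), all simple except the double one at $\lambda=1$). Fix a strip width $\gamma_k$ small enough and a large $R$ (using part (ii) to know there are no roots with $|\Re\lambda|\ge c_0\omega$, so it suffices to work in a bounded box $|\Re\lambda|\le R$, $|\Im\lambda|\le\gamma_k$); enclose each $\varepsilon=0$ root by a small circle and apply Hurwitz/the argument principle to the analytic family $\lambda\mapsto\mathcal{F}_\varepsilon(\lambda)$ (analytic jointly in $(\lambda,\varepsilon)$). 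For $\varepsilon$ small, each simple root persists as a nearby simple real root — one must check it stays real, which follows from the explicit description in Sect.~\ref{sub:sol} together with the Lemma (a nonreal root would need $\Re\lambda>0$, and by continuity it would have to emerge from one of the known real roots, contradicting simplicity unless it comes from $\lambda=1$). The double root at $\lambda=1$ splits into exactly two roots inside its small circle, which by the explicit expansion (\ref{eq:lambdapm}) are the complex pair $\lambda_\pm=1\pm\sqrt{\varepsilon}\lambda_1 i+\varepsilon\lambda_2+O(\varepsilon^{3/2})$ of Sect.~\ref{Sect2} (for $\varepsilon>0$) and the two real roots $\kappa_N^\pm$ (for $\varepsilon<0$), matching the footnote and Figure~\ref{fig:disp}. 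Choosing $\varepsilon_k$ small enough that none of these perturbed roots leaves its circle and no root enters the strip from outside the box (again via part (ii), whose constant $c_0$ is $\varepsilon$-uniform for $\omega$ in a neighborhood of $\omega_k$) completes the count: every root in the strip is accounted for.

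The main obstacle I anticipate is part (ii): making the sectorial exclusion estimate genuinely uniform with an \emph{absolute} constant $c_0$, independent of $\omega$. The subtlety is that the branch of $\kappa=\sqrt{\omega^2-\lambda^2}$ must be controlled on the whole sector (not just for $|\lambda|$ huge relative to $\omega$), and near the boundary lines $|\Im\lambda|=|\Re\lambda|$ the quantity $\Re(i\kappa)$ degenerates; one has to verify that even there the polynomial factor $\kappa-i\lambda=\omega^2/(\kappa+i\lambda)$ is small enough (of order $\omega/|\lambda|\le 1/c_0$) to be dominated. Getting this right requires a careful but elementary estimate of $\Re(2i\kappa)$ from below in terms of $|\Re\lambda|$ throughout the closed sector, after which the exponential gain is overwhelming. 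Parts (i) and (iii) are comparatively routine once the analyticity of $\mathcal{F}$ and the differential identity are in hand.
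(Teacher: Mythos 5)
Your proposal is correct. Parts (i) and (ii) follow essentially the paper's own route: (i) is the same computation with the differential identity for $\mathcal{F}$ (a multiple root forces $\lambda=1$ and hence (\ref{K3})); note that the exact multiplicity two at a threshold is best read off from the nonvanishing quadratic coefficient in (\ref{eq:expansion}), since "strict local max" by itself does not exclude a degenerate maximum. For (ii) the paper squares (\ref{K1}) to get $\sin^2\kappa=\kappa^2/\omega^2$ and uses $|\sin\kappa|^2=\cosh^2\Im\kappa-\cos^2\Re\kappa$ with $\kappa=i\lambda\bigl(1+O(\omega^2|\lambda|^{-2})\bigr)$; this is the same exponential-versus-polynomial balance as your $e^{\pm i\kappa}$ argument, and the boundary-of-sector degeneration you worry about does not occur, because all that is needed is $|\Im\kappa|\geq|a|/2$ throughout $S$, which holds once $c_0$ is moderately large. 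Where you genuinely diverge is (iii): the paper proves explicit lower bounds for $g(\lambda,\varepsilon)-(-1)^{k+1}\omega_\varepsilon^{-1}$ on contour segments near and away from $\lambda=1$ via (\ref{eq:expansion}), which yields a quantitative relation $\varepsilon\lesssim\gamma_k^2$, whereas you use a soft Hurwitz/Rouch\'e persistence argument from $\varepsilon=0$. Your route works and is arguably more transparent, provided you make two points explicit: $\gamma_k$ must be chosen strictly below the imaginary parts of the finitely many nonreal roots of $\mathcal{F}$ at $\varepsilon=0$ in the bounded box (finiteness via part (ii) and compactness), and the unique perturbed root near each simple real root is real either by the conjugate symmetry $\overline{\mathcal{F}(\lambda)}=\mathcal{F}(\overline{\lambda})$ (a nonreal root would bring its conjugate into the same disc, contradicting the count of one) or because the explicit real family of Sect.~\ref{sub:sol} already fills that disc; "contradicting simplicity" alone is not a reason. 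The trade-off is your brevity against the paper's explicit dependence of $\varepsilon_k$ on $\gamma_k$.
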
 

\begin{proof} (i) Assume that $F(\lambda)=\frac{d}{d\lambda}\mathcal{F}(\lambda)=0$.
Consider two cases $\lambda\neq1$ and $\lambda=1$. In the first
case $\kappa\neq0$ and $\sin\kappa=0$, which due to the first equation
in (\ref{K1}) leads to $\cos\kappa=0$ what is impossible. Consider
the second case $\lambda=1$. Then $\mathcal{F}(1)=0$, $\kappa^{2}=\omega^{2}-1$
and $\omega>1$ solves (\ref{K3}).

(ii) Let $\lambda=a+bi\in S$. Then $\kappa=i\lambda(1+O(\omega^{2}\,|\lambda|^{-2}))$.
Since $|\sin\kappa|^{2}=\cosh^{2}\Im\kappa-\cos^{2}\Re\kappa$, we
have 
\[
|\sin\kappa|^{2}-\frac{1}{\omega^{2}}|\kappa|^{2}\geq\cosh^{2}\Im\kappa-1-(1+|a|)^{2}-|a|^{2},
\]
which implies the required assertion.

(iii) Due to (ii) it is sufficient to prove that there are no roots
on the intervals where $\lambda=a\pm i\gamma$ and $|a|\leq C$, where
$C$ is a certain positive constant. We can assume that $\gamma\leq1$.
First, we note that 
$$
\frac{sin\kappa}{\kappa}-\frac{(-1)^{k+1}}{\omega_{\varepsilon}}=g(\lambda,\varepsilon)-\frac{(-1)^{k+1}}{\omega_{\varepsilon}}=\Big(g(a,0)-\frac{(-1)^{k+1}}{\omega_{k}}\Big)\pm i\gamma\frac{\partial g}{\partial\lambda}(a,0)+O(\gamma^{2}+\varepsilon).
$$
If $a\in[-C,1-\delta]\cup[1+\delta,C]$, then $\Big(g(a,0)-\frac{(-1)^{k+1}}{\omega_{k}}\Big)$
and $\frac{\partial g}{\partial\lambda}(a,0)$ are real and 
\[
c(\delta,C)=\max_{[-C,1-\delta]\cup[1+\delta,C]}\Big(|g(a,0)-\frac{(-1)^{k+1}}{\omega_{k}}|+\Big|\frac{\partial g}{\partial\lambda}(a,0)\Big|\Big)|>0.
\]
Furthermore, 
\[
|g(\lambda,\varepsilon)-\frac{(-1)^{k+1}}{\omega_{\varepsilon}}|\geq|\gamma|c(\delta,C)-C_{1}(\gamma^{2}+\varepsilon).
\]
Consider $\lambda$ close to $1$. More exactly, let $|\Im\lambda|=\delta$
and $|a-1|\leq\delta$. Noting that $g(1,0)-\frac{(-1)^{k+1}}{\omega_{k}}=\frac{\partial g}{\partial\lambda}(1,0)=0$
and using representation (\ref{eq:expansion}), we get 
$$
g(\lambda,\varepsilon)-\frac{(-1)^{k+1}}{\omega_{\varepsilon}}=(-1)^{N}\frac{(\lambda-1)^{2}}{2\omega_{k}^{2}(\omega_{k}-1)}+O(\delta^{3}+\varepsilon).
$$
Since $|\lambda-1|^{2}\geq\delta^{2}$, we conclude from the last
estimate that there are positive constants $c_{1}$ and $c_{2}$ depending
only on $k$ such that, if $\delta\leq c_{1}$ and $\varepsilon\leq c_{2}\delta^{2}$,
then $g(\lambda,\varepsilon)-\frac{(-1)^{k+1}}{\omega_{\varepsilon}}$
does not vanish on the intervals $|\Im\lambda|=\delta$, $|a-1|\leq\delta$.
\end{proof}

\subsection{Symplectic form}\label{sub:SF}

There is a natural symplectic structure on the set of solutions to
problem (\ref{homoDirac2}), (\ref{homoDirac3P}), cf. \cite[Ch.\ 5]{NaPl}. It will play important role in the study of the scattering matrix.

For two solutions $w=(u,v)$ and $\tilde{w}=(\tilde{u},\tilde{v})$
of the problem (\ref{homoDirac2}), (\ref{homoDirac3P}), let us define
the quantity
\begin{equation}
q_{a}(w,\tilde{w})=-\int_{0}^{1}(u(x,a)\overline{\tilde{v}(x,a)}-v(x,a)\overline{\tilde{u}(x,a)})dx.\label{eq:qform}
\end{equation}
Since 
\begin{multline*}
0=\int_{\Pi_{a,b}}\left(\begin{array}{c}
\overline{\tilde{u}}\\
\overline{\tilde{v}}
\end{array}\right)\cdot({\mathcal D}-\omega I)\left(\begin{array}{c}
u\\
v
\end{array}\right)dxdy-\int_{\Pi_{a,b}}\left(\begin{array}{c}
u\\
v
\end{array}\right)\cdot(\overline{\mathcal{D}}-\omega I)\left(\begin{array}{c}
\overline{\tilde{u}}\\
\overline{\tilde{v}}
\end{array}\right)dxdy\\
=q_{b}(w,\tilde{w})-q_{a}(w,\tilde{w}),
\end{multline*}
where $\Pi_{a,b}=(0,1)\times(a,b)$,
$a<b$, we see that $q_{a}$ does not depend on $a$ and we will use the notation q for this form.\\
The form q is sesquilinear 
$$
q(\alpha_{1}w_{1},\alpha_{2}w_{2})=\alpha_{1}\overline{\alpha_{2}}q(w_{1},w_{2})
$$
and anti-Hermitian 
$$
\overline{q(w_{1},w_{2})}=-q(w_{2},w_{1})
$$
hence it is symplectic.

\subsection{Biorthogonality conditions}\label{sub:BC}

Here we discuss the biorthogonality conditions for solutions to (\ref{homoDirac2}),
(\ref{homoDirac3P}). Since we are interested mostly in the case when
$\omega=\omega_{\varepsilon}$, where $\omega_{\varepsilon}^{-1}=\omega_{N}^{-1}+\varepsilon$
, we will consider this case here. We introduce solutions to (\ref{homoDirac2}),
(\ref{homoDirac3P}) as follows 
\begin{equation}
w_{j}^{\tau}(x,y)=e^{-i\lambda_{j}^{\tau}y}(\mathcal{U}_{j}^{\tau}(x),\mathcal{V}_{j}^{\tau}(x)),\label{LUD11}
\end{equation}
where $\tau$ stands for $+$ or $-$ and $j=1,\ldots,N$ (if $j=1$,
then only $\tau=-$ is admissible). Furthermore, if $j=1,\ldots,N-1$,
then the functions $\mathcal{U}_{j}^{\tau}$ and $\mathcal{V}_{j}^{\tau}$
are given by (\ref{Uosc2-1}) and in the case $j=N$ they are given
by (\ref{exp}). 
Since 
$$
q_{a}(w_{j}^{\tau},w_{k}^{\theta})=e^{-i(\lambda_{j}^{\tau}-\overline{\lambda_{k}^{\theta}})a}C_{jk}^{\tau\theta},
$$
where $C_{jk}^{\tau\theta}$ is a constant, and since the form $q$
is independent of $a$, we conclude that 
$$
q(w_{j}^{\tau},w_{k}^{\theta})=0\;\;\mbox{if \ensuremath{(j,\tau)\neq(k,\theta)}\;\;\mbox{and \ensuremath{q(w_{j}^{\tau},w_{j}^{\tau})=\frac{i}{\omega}(\lambda_{j}^{\tau}-1)}}}.$$
Therefore, 
\begin{equation}
q(w_{j}^{\tau},w_{k}^{\theta})=\frac{i}{\omega}(\lambda_{k}^{\tau}-1)\delta_{j,k}\delta_{\tau,\theta}\label{April1a}
\end{equation}
for $j,k=1,\ldots,N-1$ and $\tau,\theta=\pm$.

Let us start with the oscillatory waves.
We put 
\begin{equation*}
{\bf w}_{k}^{\tau}=\frac{\sqrt{\omega}}{\sqrt{|\lambda_{k}^{\tau}-1|}}w_{k}^{\tau},\;\;\; k=1,\ldots,N-1.
\end{equation*}
Then by (\ref{April1a}) 
\begin{equation}
q({\bf w}_{j}^{\tau},{\bf w}_{k}^{\theta})=\tau i\delta_{j,k}\delta_{\tau,\theta},\,\tau,\theta=\pm.\label{eq:bioosc}
\end{equation}
In the case $\omega=\omega_{N}$, we have 
$$
q(w_{N}^{0},w_{N}^{0})=0,\ q(w_{N}^{0},w_{N}^{1})=\frac{1}{2\omega_{N}},\; q(w_{N}^{1},w_{N}^{1})=\frac{i}{6\omega_{N}(\omega_{N}^{2}-1)},
$$
for waves (\ref{eq:sol01-1}), (\ref{eq:sol02}).

\subsection{Biorthogonality conditions for the complex wave vector~$\lambda$}\label{sub:BCl}

Let us check if the waves $\mathbf{w}_{N}^{\varepsilon\pm}$ fullfil
orthogonality conditions. For waves $w_{N}^{\pm}$ we have
\begin{equation}
q(w_{N}^{\pm},w_{N}^{\pm})=0,\ \ \ q(w_{N}^{+},w_{N}^{-})=\frac{i}{\omega}(\lambda_{+}-1),\ \ \ q(w_{N}^{-},w_{N}^{+})=\frac{i}{\omega}(\lambda_{-}-1).\label{eq:qeps}
\end{equation}
We put 
\begin{equation}
a^{\varepsilon}:=iq(\mathbf{w}_{N}^{\varepsilon+},\mathbf{w}_{N}^{\varepsilon+}),\quad b^{\varepsilon}:=-iq(\mathbf{w}_{N}^{\varepsilon+},\mathbf{w}_{N}^{\varepsilon-}),\quad c^{\varepsilon}:=-iq(\mathbf{w}_{N}^{\varepsilon-},\mathbf{w}_{N}^{\varepsilon-})\label{April1b}
\end{equation}
Then using (\ref{eq:qeps}) together with (\ref{eq:we+}), (\ref{eq:we-})
and (\ref{eq:gammaapprox}) we obtian 
\begin{equation}
a^{\varepsilon}\rightarrow a^{0}:=2\omega_{N}(\omega_{N}^{2}-1)\frac{5+9\omega_{N}^{2}}{3},\label{April1c}
\end{equation}

\begin{equation}
b^{\varepsilon}\rightarrow b^{0}:=2\omega_{N}(\omega_{N}^{2}-1)^{2},\label{April1d}
\end{equation}

\begin{align*}
c^{\varepsilon} & =\varepsilon4\omega_{N}(\omega_{N}^{2}-1)^{2}\frac{9\omega_{N}^{2}-3\omega_{N}+7}{3}+O(\varepsilon^{2})\:\:\mbox{as}\:\:\varepsilon\rightarrow0\label{April1e}
\end{align*}
The functions $a^{\varepsilon}$, $b^{\varepsilon}$ and $c^{\varepsilon}$ are
real and analytic, and $a^{\varepsilon},b^{\varepsilon},c^{\varepsilon}>0$
for small $\varepsilon>0$.

From the above evaluations we see that waves $\mathbf{w}_{N}^{\varepsilon+}$ and
$\mathbf{w}_{N}^{\varepsilon-}$ do not fullfil the biorthogonality conditions.
That is why we consider their linear combinations 
\begin{equation}
{\bf w}_{N}^{+}=\frac{\mathbf{w}_{N}^{\varepsilon+}+\alpha_{\varepsilon}\mathbf{w}_{N}^{\varepsilon-}}{N_{1}^{\varepsilon}},
\;\;\;
{\bf w}_{N}^{-}=\frac{\mathbf{w}_{N}^{\varepsilon+}}{N_{2}^{\varepsilon}}
\label{mf2}
\end{equation}
where $\alpha_{\varepsilon}$ is unknown constant and $N_{1}^{\varepsilon}$
and $N_{2}^{\varepsilon}$ are normalizing factors.

Our aim is to fulfil the biorthogonality relations 
\begin{equation}
q({\bf w}_{N}^{\tau},{\bf w}_{N}^{\theta})=\tau i\delta_{\tau,\theta},\,\tau,\theta=\pm,\label{eq:qnew}
\end{equation}
which implies, in particular, that 
\[
q(\mathbf{w}_{N}^{\varepsilon+},\mathbf{w}_{N}^{\varepsilon+})+\alpha_{\varepsilon}q(\mathbf{w}_{N}^{\varepsilon-},\mathbf{w}_{N}^{\varepsilon+})=0.
\]
Therefore, using (\ref{April1b}), (\ref{April1c}) and (\ref{April1d}), we get  
\begin{equation*}
\alpha_{\varepsilon}=\frac{a^{\varepsilon}}{b^{\varepsilon}}=\frac{a^{0}}{b^{0}}+O(\varepsilon).
\end{equation*}
From (\ref{eq:qnew}) and (\ref{April1b}), we find also the normalization factors $N_{1}^{\varepsilon}$
and $N_{2}^{\varepsilon}$
\begin{equation*}
N_{1}^{\varepsilon}=\sqrt{a^{0}}+O(\varepsilon)\quad N_{2}^{\varepsilon}=\sqrt{a^{\varepsilon}}=\sqrt{a^{0}}+O(\varepsilon)
\end{equation*}
By (\ref{eq:we+}), (\ref{eq:we-}) and (\ref{mf2}),
we have 
\begin{equation}
{\bf w}_{N}^{+}=\alpha_{1}w_{N}^{+}+\beta_{1}w_{N}^{-}\label{mf2-2}
\end{equation}
and 
\begin{equation}
{\bf w}_{N}^{-}=\alpha_{2}w_{N}^{+}+\beta_{2}w_{N}^{-},\label{mf1-1-2}
\end{equation}
where

\begin{equation}
\alpha_{1}=\frac{\gamma_{+}^{\varepsilon}(1+\alpha_{\varepsilon}(\lambda_{+}-1))}{N_{1}^{\varepsilon}},\,\beta_{1}=\frac{\gamma_{-}^{\varepsilon}(1+\alpha_{\varepsilon}(\lambda_{-}-1))}{N_{1}^{\varepsilon}}\label{eq:alpha1}
\end{equation}
and 
\begin{equation}
\alpha_{2}=\frac{\gamma_{+}^{\varepsilon}}{N_{2}^{\varepsilon}},\,\beta_{2}=\frac{\gamma_{-}^{\varepsilon}}{N_{2}^{\varepsilon}}.\label{eq:alpha2}
\end{equation}
Then biorthogonality conditions take the form (\ref{eq:qnew}).

\subsection{Properties of coefficients (\ref{eq:alpha1}) and (\ref{eq:alpha2}) }\label{sub:PC}

Accoriding to definitions (\ref{eq:alpha1}) and (\ref{eq:alpha2}),
one can check that 
\begin{equation}
\overline{\beta_{1}}=\alpha_{1},\;\;\;\overline{\beta_{2}}=\alpha_{2}.\label{TT2}
\end{equation}
In the next proposition, we collect some more properties of coefficients
(\ref{eq:alpha1}) and (\ref{eq:alpha2}), which will play an important
role in the sequel.

\begin{prop}\label{Pto2s} The following relations hold 
\begin{equation}
\frac{\alpha_{1}}{\alpha_{2}}=\frac{\beta_{2}}{\beta_{1}},\;\;\;\;\;\;\;\Big|\frac{\alpha_{1}}{\alpha_{2}}\Big|=1.\label{TT3}
\end{equation}
\end{prop}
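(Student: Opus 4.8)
The plan is to prove both identities in \eqref{TT3} directly from the explicit formulas \eqref{eq:alpha1}, \eqref{eq:alpha2} together with the biorthogonality conditions \eqref{eq:qnew} and the symmetry \eqref{TT2}. The first identity $\alpha_1/\alpha_2=\beta_2/\beta_1$ is equivalent, after cross-multiplication, to $\alpha_1\beta_1=\alpha_2\beta_2$, so I would first establish that single scalar identity and then read off both relations in \eqref{TT3} from it.

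For $\alpha_1\beta_1=\alpha_2\beta_2$ I would compute both sides from \eqref{eq:alpha1}, \eqref{eq:alpha2}. We have $\alpha_2\beta_2=\gamma_+^\varepsilon\gamma_-^\varepsilon/(N_2^\varepsilon)^2$, while $\alpha_1\beta_1=\gamma_+^\varepsilon\gamma_-^\varepsilon\,(1+\alpha_\varepsilon(\lambda_+-1))(1+\alpha_\varepsilon(\lambda_--1))/(N_1^\varepsilon)^2$. So the claim reduces to
\[
\frac{(1+\alpha_\varepsilon(\lambda_+-1))(1+\alpha_\varepsilon(\lambda_--1))}{(N_1^\varepsilon)^2}=\frac{1}{(N_2^\varepsilon)^2}.
\]
Here I would use the definitions of the normalizing factors coming from \eqref{eq:qnew}: imposing $q(\mathbf w_N^+,\mathbf w_N^+)=i$ and $q(\mathbf w_N^-,\mathbf w_N^-)=-i$ (note the sign convention $\tau i$ with $\tau=\pm$, and recall $a^\varepsilon,b^\varepsilon,c^\varepsilon$ are defined in \eqref{April1b}) gives, via the bilinear expansion of $q$ on $\mathbf w_N^{\varepsilon\pm}$ and the choice $\alpha_\varepsilon=a^\varepsilon/b^\varepsilon$, explicit expressions $(N_2^\varepsilon)^2=a^\varepsilon$ and $(N_1^\varepsilon)^2=a^\varepsilon-|\alpha_\varepsilon|^2 c^\varepsilon$ (or the analogous combination dictated by \eqref{April1b} and \eqref{eq:qnew}). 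Then the numerator $(1+\alpha_\varepsilon(\lambda_+-1))(1+\alpha_\varepsilon(\lambda_--1))$ should, using $\overline{\lambda_+}=\lambda_-$, $\alpha_\varepsilon\in\mathbb R$, and the relations \eqref{eq:qeps} that tie $\lambda_\pm-1$ to $a^\varepsilon,b^\varepsilon,c^\varepsilon$, collapse to exactly $(N_1^\varepsilon)^2/(N_2^\varepsilon)^2$. Once $\alpha_1\beta_1=\alpha_2\beta_2$ is in hand, combine it with $\overline{\beta_1}=\alpha_1$ and $\overline{\beta_2}=\alpha_2$ from \eqref{TT2}: then $|\alpha_1|^2=\alpha_1\overline{\alpha_1}=\alpha_1\beta_1=\alpha_2\beta_2=\alpha_2\overline{\alpha_2}=|\alpha_2|^2$, which gives $|\alpha_1/\alpha_2|=1$; and $\alpha_1/\alpha_2=\alpha_2\beta_2/(\alpha_2\cdot\alpha_2)\cdot(\alpha_1/\beta_2)$—more cleanly, from $\alpha_1\beta_1=\alpha_2\beta_2$ divide by $\alpha_2\beta_1$ to get $\alpha_1/\alpha_2=\beta_2/\beta_1$ directly. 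Actually both follow at once: $\alpha_1\beta_1=\alpha_2\beta_2$ rearranges to $\alpha_1/\alpha_2=\beta_2/\beta_1$, and taking $\overline{\phantom{x}}$ of this using \eqref{TT2} gives $\overline{\alpha_1/\alpha_2}=\overline{\beta_2/\beta_1}=\alpha_2/\alpha_1$, hence $|\alpha_1/\alpha_2|^2=1$.

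An alternative, more conceptual route that avoids the bookkeeping: observe that both $\{\mathbf w_N^+,\mathbf w_N^-\}$ and any change of basis of the two-dimensional solution space $\mathrm{span}\{w_N^+,w_N^-\}$ that preserves the normalized symplectic form \eqref{eq:qnew} must have its transition matrix lying in the group $U(1,1)$ (or rather the relevant form-preserving group for the anti-Hermitian form $q$ with signature $(+,-)$). The matrix $\begin{pmatrix}\alpha_1&\beta_1\\ \alpha_2&\beta_2\end{pmatrix}$ is exactly such a transition matrix from the $q$-normalized basis $\{w_N^+/\sqrt{\cdot},w_N^-/\sqrt{\cdot}\}$—wait, $w_N^\pm$ are null for $q$ by \eqref{eq:qeps}, so more precisely one works with the pairing \eqref{eq:qeps} and the constraint \eqref{eq:qnew}; writing out $q(\mathbf w_N^\tau,\mathbf w_N^\theta)=\tau i\delta_{\tau\theta}$ in terms of $(\alpha_i,\beta_i)$ and \eqref{eq:qeps} yields precisely the two scalar equations $|\alpha_1|^2(\lambda_+-1)+|\beta_1|^2(\lambda_--1)$-type relations whose consistency forces \eqref{TT3}. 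I expect the main obstacle to be purely organizational: correctly tracking the constants $\gamma_\pm^\varepsilon$, $N_{1,2}^\varepsilon$, $\alpha_\varepsilon$ and the factors $\lambda_\pm-1$ through \eqref{eq:qeps}--\eqref{eq:alpha2} so that everything cancels, rather than any conceptual difficulty; the symmetry $\overline{\lambda_+}=\lambda_-$ together with reality of $\alpha_\varepsilon$, $N_{1,2}^\varepsilon$ is what makes the cancellation work, and I would double-check that these reality statements (established around \eqref{eq:gammaapprox} and in Sect.~\ref{sub:BCl}) are used in the right places.
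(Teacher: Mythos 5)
Your strategy is correct and genuinely different from the paper's. The paper never unpacks the normalization constants: it writes the conditions \eqref{eq:qnew} out via \eqref{eq:qeps} as the scalar relations \eqref{eq:ab_ort}, \eqref{TT4} in $\alpha_i,\beta_i,\lambda_\pm$, eliminates $\omega$, uses \eqref{TT2} to arrive at $(d-\overline{d})(1-d\overline{d})=0$ with $d=\alpha_1/\alpha_2$, and discards the factor $d=\overline{d}$ because $\lambda_+$ is not real; this is essentially your ``alternative conceptual route,'' which you left vague. Your main route -- reducing both claims to the single identity $\alpha_1\beta_1=\alpha_2\beta_2$ (which by \eqref{TT2} is just $|\alpha_1|^2=|\alpha_2|^2$, so indeed both relations in \eqref{TT3} follow at once, a slightly cleaner order of deduction than the paper's) and verifying it directly from \eqref{eq:alpha1}, \eqref{eq:alpha2} -- does work, but two points need fixing/completing. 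First, your guessed normalization is off: from \eqref{eq:qnew} and \eqref{April1b} one gets $(N_2^{\varepsilon})^2=a^{\varepsilon}$ but $(N_1^{\varepsilon})^2=-a^{\varepsilon}+2\alpha_{\varepsilon}b^{\varepsilon}+\alpha_{\varepsilon}^2c^{\varepsilon}$, which with $\alpha_{\varepsilon}=a^{\varepsilon}/b^{\varepsilon}$ equals $a^{\varepsilon}+(a^{\varepsilon})^2c^{\varepsilon}/(b^{\varepsilon})^2$, not $a^{\varepsilon}-\alpha_{\varepsilon}^2c^{\varepsilon}$. Second, the ``collapse'' you assert is exactly the scalar identity $2b^{\varepsilon}\Re(\lambda_+-1)+a^{\varepsilon}|\lambda_+-1|^2=c^{\varepsilon}$, and it is true: writing $\mu=\lambda_+-1$, $z=(\gamma_+^{\varepsilon})^2\mu$ and computing $a^{\varepsilon},b^{\varepsilon},c^{\varepsilon}$ from \eqref{eq:we+}, \eqref{eq:we-}, \eqref{eq:qeps} with $\gamma_-^{\varepsilon}=\overline{\gamma_+^{\varepsilon}}$, $\lambda_-=\overline{\lambda_+}$ gives $a^{\varepsilon}=-\tfrac{2}{\omega}\Re z$, $b^{\varepsilon}=\tfrac{2}{\omega}\Re(z\mu)$, $c^{\varepsilon}=\tfrac{2}{\omega}\Re(z\mu^2)$, and the elementary identity $2\Re(z\mu)\Re\mu=\Re(z\mu^2)+|\mu|^2\Re z$ closes the argument; note the cancellation genuinely uses $\alpha_{\varepsilon}=a^{\varepsilon}/b^{\varepsilon}$ (for a generic real $\alpha_{\varepsilon}$ the identity fails), and the final cross-multiplication needs $\alpha_2\beta_1\neq0$, which holds since $\gamma_{\pm}^{\varepsilon}\neq0$ and $1+\alpha_{\varepsilon}(\lambda_--1)=1+O(\sqrt{\varepsilon})$. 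In exchange for this bookkeeping your computation avoids the paper's case analysis (no need to argue $d\neq\overline{d}$), while the paper's argument is shorter precisely because it never needs explicit expressions for $a^{\varepsilon},b^{\varepsilon},c^{\varepsilon}$ or the normalizing factors.
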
 \begin{proof} From (\ref{eq:qnew}) it follows

\begin{equation}
\alpha_{1}\overline{\beta_{1}(}\lambda_{+}-1)+\beta_{1}\overline{\alpha_{1}(}\lambda_{-}-1)=\omega,\;\;\;\;\;\\
\alpha_{2}\overline{\beta_{2}(}\lambda_{+}-1)+\beta_{2}\overline{\alpha_{2}(}\lambda_{-}-1) =-\omega,\label{eq:ab_ort}
\end{equation}
and 
\begin{equation}
\alpha_{1}\overline{\beta_{2}(}\lambda_{+}-1)+\beta_{1}\overline{\alpha_{2}(}\lambda_{-}-1)=0,\label{TT4}
\end{equation}
Eliminating $\omega$ from (\ref{eq:ab_ort}), we get

\begin{equation*}
\frac{\lambda_{-}-1}{\lambda_{+}-1}=-\frac{\alpha_{1}\overline{\beta_{1}}+\alpha_{2}\overline{\beta_{2}}}{\beta_{1}\overline{\alpha_{1}}+\beta_{2}\overline{\alpha_{2}}}=-\frac{\alpha_{1}^{2}+\alpha_{2}^{2}}{\overline{\alpha_{1}^{2}}+\overline{\alpha_{2}^{2}}}
\end{equation*}
where the last equality follows form (\ref{TT2}). Inserting it into (\ref{TT4}), we arrive at
\begin{equation}
\alpha_{1}\alpha_{2}-\overline{\alpha_{1}}\overline{\alpha_{2}}\frac{\alpha_{1}^{2}+\alpha_{2}^{2}}{\overline{\alpha_{1}^{2}}+\overline{\alpha_{2}^{2}}}=0.\label{eq:a1a2_relation}
\end{equation}
Dividing (\ref{eq:a1a2_relation}) by $\alpha_{2}(\overline{\alpha_{2}})^{2}$ and multiplying
by $\overline{\alpha_{1}^{2}}+\overline{\alpha_{2}^{2}}$, we obtain
\[
\frac{\alpha_{1}}{\alpha_{2}}+\frac{\alpha_{1}}{\alpha_{2}}(\frac{\overline{\alpha_{1}}}{\overline{\alpha_{2}}})^{2}-\frac{\overline{\alpha_{1}}}{\overline{\alpha_{2}}}(\frac{\alpha_{1}}{\alpha_{2}})^{2}-\frac{\overline{\alpha_{1}}}{\overline{\alpha_{2}}}=0.
\]
Defining $d=\alpha_{1}/\alpha_{2}$ , this relation can be
written as
\[
(d-\overline{d})(1-d\overline{d})=0.
\]
Here $d\neq\overline{d}$ because otherwise $\lambda_{+}$ would be real,
this follows from the definitions of $\alpha_{1}$ and $\alpha_{2}$.
Consequently
\begin{equation}
(1-d\overline{d})=0,\label{eq:TT5}\,\,\,|d|=1,
\end{equation}
which implies the second equality in (\ref{TT3}).
To prove first equality in (\ref{TT3}) we note that $d=\alpha_{1}/\alpha_{2}=\overline{\beta_{1}}/\overline{\beta_{2}}$
by (\ref{TT2}). This together with (\ref{eq:TT5}) gives
\[
1-\frac{\alpha_{1}}{\alpha_{2}}\frac{\beta_{1}}{\beta_{2}}=0,\,\,\,
\frac{\alpha_{1}}{\alpha_{2}}=\frac{\beta_{2}}{\beta_{1}}.
\]
The proof is completed. \end{proof}

The following quantity will play an important role 
\begin{equation}
d=d(\varepsilon):=\frac{\alpha_{1}}{\alpha_{2}}=\frac{\beta_{2}}{\beta_{1}},\label{Apr9aa}
\end{equation}
where the last equality is borrowed from Proposition \ref{Pto2s}. By (\ref{eq:TT5}), the absolute value of $d$ is equal to $1$.
Moreover the function $d(\varepsilon)$ depends on $\varepsilon\in[0,\varepsilon_{0}]$
and by definitions of \newline$\alpha_{1}$ and $\alpha_{2}$, see (\ref{eq:alpha1})
and (\ref{eq:alpha2}) 
\begin{equation}
d(\varepsilon)=1+i\frac{(5+9\omega_{N}^{2})}{3}\sqrt{\frac{2\omega_{N}}{\omega_{N}^{2}-1}}\sqrt{\varepsilon}+O(\varepsilon).\label{eq:dexpansion}
\end{equation}
We note also that 
\begin{equation}
\mathbf{w}_{N}^{+}=Aw_{N}^{0}+iBw_{N}^{1}+O(\varepsilon)\label{eq:w+approx}
\end{equation}
and 
\begin{equation}
\mathbf{w}_{N}^{-}=Cw_{N}^{0}+iBw_{N}^{1}+O(\varepsilon)\label{eq:w-approx}
\end{equation}
where
\[
A=(-1)^{N}(-6\omega_{N}^{5}+2\omega_{N}^3+9\omega_N^2+4\omega_N+5)
\sqrt{\frac{2\omega_N(\omega_{N}^{2}-1)}{3(9\omega_{N}^{2}+5)}},
\]
\[
B=(-1)^{N+1}\sqrt{\frac{6\omega_N(\omega_{N}^{2}-1)}{9\omega_{N}^{2}+5}},\:\:\: C=(-1)^{N+1}2\omega_{N}(3\omega_{N}^{2}+2)\sqrt{\frac{2\omega_N(\omega_{N}^{2}-1)}{3(9\omega_{N}^{2}+5)}}.
\]

\subsection{Non-homogeneous problem}\label{sub:NH}

Here, we consider the non-homogeneous problem
\begin{eqnarray}
(-i\partial_{x}+\partial_{y})v-\omega u=g\;\;\;\mbox{in \ensuremath{\Pi},}\label{nonH1}\\
(-i\partial_{x}-\partial_{y})u-\omega v=h\;\;\;\mbox{in\,\ \ensuremath{\Pi},}
\label{nonH2}
\end{eqnarray}
supplied with the boundary conditions 
\begin{equation}
u(0,y)=0\;\;\mbox{and}\;\, v(1,y)=0,\;\; y\in\Bbb R.\label{2a}
\end{equation}
To study solvability of this system, we introduce some spaces. The
space $L_{\sigma}^{\pm}(\Pi)$, $\sigma>0$, consists of all functions
$g$ such that $e^{\pm\sigma y}g\in L^{2}(\Pi)$. Furthermore, 
\[
X_{\sigma}^{\pm}=\{u\in L_{\sigma}^{\pm}(\Pi)\,:\,(i\partial_{x}+\partial_{y})u\in L_{\sigma}^{\pm}(\Pi),\;\; u(0,y)=0\},
\]
\[
Y_{\sigma}^{\pm}=\{v\in L_{\sigma}^{\pm}(\Pi)\,:\,(-i\partial_{x}+\partial_{y})v\in L_{\sigma}^{\pm}(\Pi),\;\; v(1,y)=0\}.
\]
The norms in the above spaces are defined by $||g;L_{\sigma}^{\pm}(\Pi)||=||e^{\pm\sigma y}g;L^{2}(\Pi)||$,
\[
||u;X_{\sigma}^{\pm}||=\Big(||u;L_{\sigma}^{\pm}(\Pi)||^{2}+||(i\partial_{x}+\partial_{y})u;L_{\sigma}^{\pm}(\Pi)||^{2}\Big)^{1/2}
\]
and 
\[
||v;Y_{\sigma}^{\pm}||=\Big(||v;L_{\sigma}^{\pm}(\Pi)||^{2}+||(-i\partial_{x}+\partial_{y})v;L_{\sigma}^{\pm}(\Pi)||^{2}\Big)^{1/2}
\]
respectively.

The main solvability result is the following assertion
\begin{theorem}\label{T1}
Let $\omega>0$ and let $\sigma>0$ be such that the line $\Im\lambda=\pm\sigma$
contains no roots of {\rm(\ref{K1})}. Then the operator 
\footnote{For the simplicity of the notation, we will be writing $L_{\sigma}^{\pm}(\Pi)$
for both spaces of functions and vectors. Here for example we write
$L_{\sigma}^{\pm}(\Pi)$ instead of $L_{\sigma}^{\pm}(\Pi)\times L_{\sigma}^{\pm}(\Pi)$.
This notation will be applied to the other spaces introduced later
as well. %
} 
\begin{equation}
D-\omega I\;:\; X_{\sigma}^{\pm}\times Y_{\sigma}^{\pm}\;\to\; L_{\sigma}^{\pm}(\Pi)\label{2aa}
\end{equation}
is isomorphism. 
\end{theorem}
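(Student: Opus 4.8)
The plan is to prove Theorem~\ref{T1} by reducing the system \eqref{nonH1}--\eqref{nonH2} to a scalar second-order problem in the cross-section variable after applying the Laplace (Fourier) transform in $y$, and then invoking the standard isomorphism theorems for elliptic boundary value problems in cylinders (cf.\ the Kondrat'ev--Maz'ya--Plamenevski\u{\i} theory). First I would note that a function $g\in L_\sigma^\pm(\Pi)$ corresponds, via the substitution $y\mapsto y$ and the transform $\widehat g(\lambda,x)=\int_{\mathbb R} e^{i\lambda y} g(x,y)\,dy$, to a function $\widehat g(\lambda,\cdot)$ analytic in the strip $\{|\Im\lambda|<\sigma\}$ (or on the line $\Im\lambda=\mp\sigma$), with an $L^2$-type bound on horizontal lines by the Paley--Wiener theorem. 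Applying the transform to \eqref{nonH1}--\eqref{nonH2} turns $\partial_y$ into $-i\lambda$, so the system becomes, for each fixed $\lambda$, the ODE system
\begin{equation*}
(-i\partial_x - i\lambda)\widehat v - \omega\widehat u = \widehat g,\qquad (-i\partial_x + i\lambda)\widehat u - \omega\widehat v = \widehat h,
\end{equation*}
with $\widehat u(0)=0$, $\widehat v(1)=0$. Eliminating $\widehat v$ yields a scalar equation $-\widehat u_{xx} + (\lambda^2-\omega^2)\widehat u = F(\lambda,x)$ with the Robin-type boundary condition at $x=1$ coming from \eqref{eq:sep}, i.e.\ $\widehat u_x(1) = \lambda\widehat u(1) + (\text{data})$.

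The key step is that this cross-sectional operator pencil $\mathfrak{A}(\lambda)$ is invertible precisely when $\lambda$ is \emph{not} a root of \eqref{K1}: indeed \eqref{K1} is exactly the characteristic equation whose roots are the $\lambda$ for which the homogeneous boundary value problem \eqref{eq:sep} has a nontrivial solution, so for $\lambda$ off that root set $\mathfrak{A}(\lambda)^{-1}$ exists and is bounded. Then I would show the operator norm of $\mathfrak{A}(\lambda)^{-1}$ as a map between the appropriate Sobolev spaces on $(0,1)$ decays like $|\lambda|^{-1}$ as $|\lambda|\to\infty$ along the lines $\Im\lambda=\mp\sigma$; this is where Proposition~\ref{Pr1}(ii) enters, guaranteeing that for $|\Re\lambda|$ large there are genuinely no roots near these lines so the resolvent estimate is uniform, and it is the decay estimate that allows the inverse Laplace transform to land back in $L_\sigma^\pm(\Pi)$ rather than a weaker space. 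Having the resolvent bound uniformly on the line $\Im\lambda=\mp\sigma$, I would define the solution by
\begin{equation*}
\begin{pmatrix} u \\ v\end{pmatrix}(x,y) = \frac{1}{2\pi}\int_{\Im\lambda=\mp\sigma} e^{-i\lambda y}\,\mathfrak{A}(\lambda)^{-1}\begin{pmatrix}\widehat g \\ \widehat h\end{pmatrix}(\lambda,x)\,d\lambda,
\end{equation*}
and verify by Parseval/Plancherel on the shifted line that this belongs to $X_\sigma^\pm\times Y_\sigma^\pm$ with norm controlled by $\|g\,; L_\sigma^\pm\|$, giving boundedness of the inverse; injectivity follows because any solution in the weighted space has a transform solving $\mathfrak{A}(\lambda)\widehat u=0$ on the line, hence vanishing.

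The main obstacle I anticipate is twofold. First, the system \eqref{homoDirac2}, \eqref{homoDirac3P} is \emph{not elliptic} as a boundary value problem (the authors flag this in the introduction and Appendix~\ref{sec:Appendix:-Ellipticity}), so one cannot simply quote off-the-shelf elliptic theory for the a~priori estimates on $\mathfrak{A}(\lambda)^{-1}$; the $|\lambda|^{-1}$-decay of the resolvent must be extracted by hand from the explicit structure of \eqref{K1}, carefully tracking that the Robin condition $\widehat u_x(1)=\lambda\widehat u(1)+\cdots$ has a coefficient growing in $\lambda$, which is exactly the non-elliptic feature and which makes the naive estimate fail in one component unless one uses the first-order system form directly. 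Second, one must choose $\sigma$ admissible — the hypothesis that $\Im\lambda=\pm\sigma$ avoids the roots of \eqref{K1} — and confirm, using Proposition~\ref{Pr1}, that such $\sigma$ exists and that the root-free strip is genuinely root-free uniformly in $\Re\lambda$, so that the contour integral converges and the shift of contour from $\mathbb R$ to $\Im\lambda=\mp\sigma$ is justified by analyticity of $\mathfrak{A}(\lambda)^{-1}$ in the strip between. Once these estimates are in place, surjectivity, injectivity, and the bounded inverse all follow from the Fourier--Laplace representation, and the open mapping theorem finishes the isomorphism claim.
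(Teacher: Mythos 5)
Your overall route is the same as the paper's: weighted Fourier transform in $y$ along the line $\Im\lambda=\pm\sigma$, reduction to the cross-sectional problem on $(0,1)$, elimination of $\hat v$ to get the scalar equation with the $\lambda$-dependent Robin condition $\partial_x\hat u(1,\lambda)=\lambda\hat u(1,\lambda)$, invertibility of the pencil away from the roots of (\ref{K1}), and Parseval to return to the weighted spaces. The genuine gap is in the step you leave as an assertion, and the assertion itself is wrong in strength: you claim the resolvent of the cross-sectional pencil decays like $|\lambda|^{-1}$ on the lines $\Im\lambda=\mp\sigma$. It does not. Because of the growing coefficient in the Robin condition, the boundary corrector (in the paper's notation, $\hat u=w+R$ with $R(x,\lambda)=\partial_xw(1,\lambda)\sin\kappa x\,(\kappa\cos\kappa-\lambda\sin\kappa)^{-1}$) is of order one: taking $\hat g$ concentrated near $x=1$ one gets $\|R\|_{L^2(0,1)}\sim c>0$ uniformly in $\xi$, so the solution operator from $(\hat g,\hat h)\in L^2(0,1)$ to $(\hat u,\hat v)\in L^2(0,1)$ is only uniformly bounded, which is precisely the non-elliptic feature you flagged but did not resolve. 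The heart of the paper's proof is exactly this hand-made large-$|\xi|$ estimate: the Dirichlet Green's function bound $|L(x,z)|\le C|\xi|^{-1}e^{-|\xi||x-z|}$ giving (\ref{7d})--(\ref{7dd}), and the explicit asymptotics (\ref{eq:cosh}) of $\kappa\cos\kappa-\lambda\sin\kappa$ to control $R$ and $(\partial_x-\lambda)R$, yielding the uniform bound (\ref{7ddd}). Your plan, as written, would try to prove a decay estimate that fails and gives no substitute argument for the bound that actually holds.

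Two further points. First, uniform boundedness (not decay) is all that is needed to land in $X^\pm_\sigma\times Y^\pm_\sigma$: the norms of those spaces control only $(i\partial_x+\partial_y)u$ and $(-i\partial_x+\partial_y)v$, and these combinations are recovered directly from the equations (\ref{nonH1})--(\ref{nonH2}) as $\omega v+h$ and $\omega u+g$ up to signs, so Parseval with an $O(1)$ bound on the line suffices; there is no loss to be compensated by resolvent decay. Second, your framing via Paley--Wiener analyticity in a strip and a ``shift of contour from $\mathbb R$'' is not available here: data in $L^\pm_\sigma(\Pi)$ gives the transform on the single line $\Im\lambda=\pm\sigma$ only (it is the ordinary Fourier transform of $e^{\pm\sigma y}g$), and the paper works directly on that line; no contour deformation is used or needed in Theorem \ref{T1} (it appears later, in the proof of Theorem \ref{T1s}, where the data lie in the intersection of the two weighted spaces). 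Finally, the paper reduces general data to the case $h=0$ by the symmetry $x\to 1-x$, $(u,v)\to(-v,u)$; if you instead carry the inhomogeneous term into the Robin condition, you must still estimate its contribution through the same non-decaying corrector, so the explicit computation cannot be avoided.
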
 

Despite the problem, which we are dealing with is not elliptic the proof of this assertion is more or less standard and we present it in Appendix \ref{sec:Appendix:Proof1}.

In what follows we assume that an integer $N\geq2$ is fixed, $\omega_{N}=\sqrt{\kappa_{N}^{2}+1}$,
where $\kappa_{N}$ is defined in Sect.~\ref{sub:sol}, and $\omega=\omega_{\varepsilon}=\omega_{N}/(1+\varepsilon\omega_{N})$.
Then we have the following waves 
\[
{\bf w}_{1}^{-},{\bf w}_{2}^{\pm},\ldots,{\bf w}_{N-1}^{\pm},{\bf w}_{N}^{\pm},
\]
where ${\bf w}_{1}^{-}$ corresponding to $\lambda_{1}^{-}$ and ${\bf w}_{j}^{\pm}$,
$j=2,\dots,N-1$ are oscillatory and ${\bf w}_{N}^{\pm}$ are of exponential
growth.

\begin{theorem}\label{T1s} Let $\gamma=\gamma_N$ and $\varepsilon_{N}$ be
the same positive numbers as in Proposition {\rm\ref{Pr1}} and also $(g,h)\in L_{\gamma}^{+}(\Pi)\cap L_{\gamma}^{-}(\Pi)$.
Denote by $(u^{\pm},v^{\pm})\in X_{\gamma}^{\pm}\times Y_{\gamma}^{\pm}$
the solution of problem {\rm(\ref{nonH1})}, {\rm(\ref{nonH2})}, {\rm(\ref{2a})}, which
exist according to {\rm Theorem} {\rm\ref{T1}}. Then 
\begin{equation}
(u^{+},v^{+})=(u^{-},v^{-})+\sum_{j=2}^{N}C_{j}^{+}{\bf w}_{j}^{+}+\sum_{j=1}^{N}C_{j}^{-}{\bf w}_{j}^{-},\label{Apr8a}
\end{equation}
where 
\begin{equation*}
-iC_{j}^{+}=\int_{\Pi}\,(g,h)\cdot\overline{{\bf w}_{j}^{+}}dxdy,\;\; iC_{j}^{-}=\int_{\Pi}\,(g,h)\cdot\overline{{\bf w}_{j}^{-}}dxdy.\label{Apr8b}
\end{equation*}
\end{theorem}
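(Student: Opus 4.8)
The plan is to compare the two solutions $(u^\pm,v^\pm)$ by studying their difference and exploiting the behaviour of the weighted spaces $X_\gamma^\pm\times Y_\gamma^\pm$ under the substitution $\gamma \to -\gamma$. Set $(U,V) := (u^+,v^+) - (u^-,v^-)$. Since $(g,h)$ lies in both $L_\gamma^+(\Pi)$ and $L_\gamma^-(\Pi)$, both $(u^+,v^+)\in X_\gamma^+\times Y_\gamma^+$ and $(u^-,v^-)\in X_\gamma^-\times Y_\gamma^-$ solve the same nonhomogeneous problem \eqref{nonH1}, \eqref{nonH2}, \eqref{2a}; hence $(U,V)$ solves the homogeneous problem $({\mathcal D}-\omega I)(U,V)=0$ with boundary conditions \eqref{2a}, and by construction $(U,V)$ grows no faster than $e^{\gamma|y|}$ as $y\to +\infty$ and as $y\to -\infty$ — more precisely $(U,V) = (u^+,v^+) - (u^-,v^-)$ is a difference of a function decaying like $e^{-\gamma y}$ at $+\infty$ and a function decaying like $e^{\gamma y}$ at $-\infty$, so $(U,V)$ is $O(e^{-\gamma y})$ as $y\to-\infty$ and $O(e^{\gamma y})$ as $y\to+\infty$ (and symmetrically). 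Thus $(U,V)$ is a homogeneous solution lying in the weighted space allowing growth $e^{\gamma|y|}$ at both ends but decay $e^{-\gamma|y|}$ at neither; one then decomposes $(U,V)$ into the finite-dimensional space of homogeneous solutions with wave vectors $\lambda$ in the strip $|\Im\lambda|\le\gamma$. By Proposition \ref{Pr1}(iii), for $\varepsilon\le\varepsilon_N$ that strip contains exactly the waves ${\bf w}_1^-,{\bf w}_j^\pm$ ($j=2,\dots,N-1$) and ${\bf w}_N^\pm$, so
\begin{equation*}
(U,V) = \sum_{j=2}^N C_j^+ {\bf w}_j^+ + \sum_{j=1}^N C_j^- {\bf w}_j^-
\end{equation*}
for some constants $C_j^\pm$. (Here the sum over ${\bf w}_j^+$ starts at $j=2$ because ${\bf w}_1^-$ is the only wave with index $1$; the waves ${\bf w}_N^\pm$, being of exponential growth $e^{\pm\sqrt\varepsilon\lambda_1 y}$, are precisely the ones whose $\lambda$ sits just off the real axis inside the strip, and they must be admitted since $(U,V)$ is not required to decay.)

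The second step identifies the coefficients via the symplectic form $q$. Fix a large $b>0$ and write Green's formula on the truncated strip $\Pi_{-b,b}=(0,1)\times(-b,b)$ for the solution $(u^+,v^+)$ against each biorthonormal wave ${\bf w}_k^\theta$: because ${\mathcal D}-\omega I$ applied to $(u^+,v^+)$ equals $(g,h)$ while ${\mathcal D}-\omega I$ annihilates ${\bf w}_k^\theta$, integration by parts gives
\begin{equation*}
\int_{\Pi_{-b,b}} (g,h)\cdot\overline{{\bf w}_k^\theta}\,dx\,dy = q_b\big((u^+,v^+),{\bf w}_k^\theta\big) - q_{-b}\big((u^+,v^+),{\bf w}_k^\theta\big),
\end{equation*}
and similarly for $(u^-,v^-)$. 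Now let $b\to+\infty$. On the right-hand side, $(u^+,v^+)$ decays at $+\infty$ faster than ${\bf w}_k^\theta$ can grow within the strip (one has to check the weight bookkeeping: ${\bf w}_k^\theta$ has $|\Im\lambda_k^\theta|<\gamma$, hence $q_b((u^+,v^+),{\bf w}_k^\theta)\to 0$ as $b\to+\infty$ since $e^{-\gamma y}$ beats $e^{|\Im\lambda_k^\theta| y}$, and $q_{-b}((u^+,v^+),{\bf w}_k^\theta)$ likewise controlled), and the analogous statement holds for $(u^-,v^-)$ with the roles of $\pm\infty$ exchanged. Subtracting the two Green's identities, the surviving boundary terms involve only $(U,V)$ at one end, evaluated by the biorthogonality relations \eqref{eq:bioosc}, \eqref{eq:qnew} (and the $j=1$ and $j=N$ cases handled through the normalized exponential waves from Sect.~\ref{sub:BCl}). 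This yields $C_j^\pm$ as the stated integrals $-iC_j^+ = \int_\Pi (g,h)\cdot\overline{{\bf w}_j^+}\,dx\,dy$ and $iC_j^- = \int_\Pi (g,h)\cdot\overline{{\bf w}_j^-}\,dx\,dy$, the sign split $\mp i$ tracking the sign of $q({\bf w}_j^\tau,{\bf w}_j^\tau)=\tau i$.

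The main obstacle is the careful limit analysis of the symplectic boundary terms $q_{\pm b}$: one must verify that, for each biorthonormal wave ${\bf w}_k^\theta$ with $\lambda$ strictly inside the strip $|\Im\lambda|\le\gamma$, the product of $(u^+,v^+)$ (which is only guaranteed to lie in $X_\gamma^+\times Y_\gamma^+$, i.e. to decay like $e^{-\gamma y}$ at $+\infty$ with an $L^2$-type, not pointwise, bound) against the exponentially-behaving ${\bf w}_k^\theta$ genuinely tends to zero along a suitable sequence $b_m\to\infty$, and does not merely stay bounded. The clean way is to note $\int_0^\infty |q_b(\cdot,\cdot)|\,db<\infty$ by Cauchy–Schwarz in the weighted norms — the gap between the weight $\gamma$ on $(u^+,v^+)$ and the strictly smaller exponential rate $|\Im\lambda_k^\theta|<\gamma$ of ${\bf w}_k^\theta$ supplies an integrable decaying factor — so $q_{b_m}\to 0$ along some sequence, and since the left-hand integral converges the limit is independent of the sequence. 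The ${\bf w}_N^\pm$ terms require the same argument but with the $\varepsilon$-dependent rate $\sqrt\varepsilon\,\lambda_1<\gamma$, valid precisely for $\varepsilon\le\varepsilon_N$ as guaranteed by Proposition \ref{Pr1}(iii); the remaining identifications are then routine applications of the biorthogonality normalization from Sect.~\ref{sub:BC}--\ref{sub:BCl}.
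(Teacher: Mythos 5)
Your second step (identifying the coefficients by Green's formula on truncated strips $\Pi_{-b,b}$, letting $b\to\infty$ along a sequence chosen by a Cauchy--Schwarz argument, and using the biorthogonality relations (\ref{eq:bioosc}), (\ref{eq:qnew})) is sound and is essentially the paper's own coefficient computation in a truncation-and-limit guise: the paper does the same thing with cutoff functions $\eta_{-}$ and one integration by parts over all of $\Pi$, which avoids the limit bookkeeping you describe. The problem is your first step. You set $(U,V)=(u^{+},v^{+})-(u^{-},v^{-})$, observe that it solves the homogeneous problem and may grow like $e^{\gamma|y|}$ at both ends, and then assert that it ``decomposes into the finite-dimensional space of homogeneous solutions with wave vectors $\lambda$ in the strip $|\Im\lambda|\le\gamma$'', citing only Proposition \ref{Pr1}(iii). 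That proposition merely locates the roots of the dispersion relation (\ref{K1}); it does not show that \emph{every} homogeneous solution with growth at most $e^{\gamma|y|}$ is a finite linear combination of the corresponding exponential waves. That completeness statement is precisely the substance of Theorem \ref{T1s}, so as written you are assuming the hard part of the conclusion.

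For elliptic problems in cylinders one could appeal to the general theory (Kozlov--Maz'ya, Nazarov--Plamenevsky) to get exactly such a decomposition, but the paper stresses (Appendix \ref{sec:Appendix:-Ellipticity}) that this boundary value problem is \emph{not} elliptic, which is why it cannot simply quote those results. The paper's proof (Appendix \ref{sec:Appendix:Proof2}) instead produces the decomposition constructively: it writes $(u^{+},v^{+})$ via the inverse Fourier--Laplace representation on the line $\Im\lambda=\gamma$ coming from Theorem \ref{T1}, shifts the contour to $\Im\lambda=-\gamma$ over a large rectangle $|\Re\lambda|\le\rho$, shows the vertical side integrals vanish as $\rho\to\infty$ for smooth compactly supported $(g,h)$ (then uses density), and identifies the difference $(u^{+},v^{+})-(u^{-},v^{-})$ with the sum of residues at the finitely many $\lambda_{j}^{\tau}$ in the strip, whose existence and location are controlled by Proposition \ref{Pr1}(ii),(iii). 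To repair your argument you would need either to reproduce this contour-shift/residue computation or to supply an independent proof that the space of homogeneous solutions of at most $e^{\gamma|y|}$ growth is spanned by ${\bf w}_{1}^{-},{\bf w}_{j}^{\pm}$ ($j=2,\dots,N-1$), ${\bf w}_{N}^{\pm}$ in this non-elliptic setting; without that, the expansion (\ref{Apr8a}) is not established.
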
 
Proof of this Theorem is presented in Appendix \ref{sec:Appendix:Proof2}. Results similar to Theorems \ref{T1} and \ref{T1s} are well-known for elliptic problems and can be found, for example, in \cite{KozlovBook}, \cite{KozlovBookElliptic} and \cite{NaPl}, but we remind that our problem is not elliptic, see Appendix \ref{sec:Appendix:-Ellipticity}.

\section{Dirac equation with potential\label{sub:Perturbated-Dirac-equation}}

\subsection{Problem statement}\label{PS2}

Here we examine the problem with a potential, prove solvability results
and asymptotics formulas for solutions. \\
Consider the nanoribbon with a potential: 
\begin{eqnarray}
{\mathcal D}\left(\begin{array}{c}
u\\
v
\end{array}\right)+\delta\mathcal{P}\left(\begin{array}{c}
u\\
v
\end{array}\right)=\omega\left(\begin{array}{c}
u\\
v
\end{array}\right),\label{DiracP1}\\
u(0,y)=0\ ,\ \ v(1,y)=0,\label{Dirac}
\end{eqnarray}
where ${\mathcal D}$ is the same as in (\ref{homoDirac2}), $\mathcal{P}=\mathcal{P}(x,y)$
is a bounded, continuous and real-valued function with compact
support in $\overline{\Pi}$ and $\delta$ is a small parameter. We
assume in what follows that 
\begin{equation}
\mbox{supp}\mathcal{P}\subset[-R_{0},R_{0}]\times[0,1]\:\:\:\mbox{and}\:\:\:\sup_{(x,y)\in\Pi}|\mathcal{P}(x,y)|\leq1,\label{eq:supP}
\end{equation}
where $R_{0}$ is a fixed positive number.

We assume that positive numbers $\gamma$ and $\varepsilon_{N}$ are
fixed such that 

\begin{equation}
\omega=\omega_{\varepsilon}\;\;\;\mbox{where}\;\;\;\frac{1}{\omega_{\varepsilon}}=\frac{1}{\omega_{N}}+\varepsilon,\; N=2,\,3,\ldots,\;\;\label{DD1}
\end{equation}
$\varepsilon\in[0,\varepsilon_{N}]$ and $\gamma=\gamma_N$ is
from Proposition \ref{Pr1}(iii), i.e. 

(1) the lines $\Im\lambda=\pm\gamma$ contain no roots of (\ref{K1})
with $\omega$ given by (\ref{DD1}), 

(2) the strip $|\Im\lambda|<\gamma$ contains roots of (\ref{K1}),
which are real and complex described in Sect. \ref{sub:sol} and \ref{Sect2} respectively. 

Since the norm of the multiplication operator $\delta\mathcal{P}$
in $L^{2}(\Pi)$ is less than $\delta$ we derive from Theorem \ref{T1}
the following assertion

\begin{theorem}\label{T1a} The operator 
\begin{equation}
{\mathcal D}+(\delta\mathcal{P}-\omega_{\varepsilon}) I\;:\; X_{\gamma}^{\pm}\times Y_{\gamma}^{\pm}\;\to\; L_{\gamma}^{\pm}(\Pi)\label{2aas}
\end{equation}
is isomorphism for $|\delta|\leq\delta_{0}$, where $\delta_{0}$
is a positive constant depending on the norm on the inverse operator
$({\mathcal D}-\omega_{\varepsilon}I)^{-1}\,:\, L_{\gamma}^{\pm}(\Pi)\to L_{\gamma}^{\pm}(\Pi)$.
\end{theorem}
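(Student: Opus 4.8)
The plan is to derive Theorem~\ref{T1a} directly from Theorem~\ref{T1} by a standard Neumann-series perturbation argument, treating $\delta\mathcal{P}$ as a small bounded perturbation of the isomorphism ${\mathcal D}-\omega_\varepsilon I$. First I would observe that Theorem~\ref{T1} applies: by the hypothesis on $\gamma=\gamma_N$ (condition (1) above, which is exactly Proposition~\ref{Pr1}(iii)), the lines $\Im\lambda=\pm\gamma$ contain no roots of (\ref{K1}) for $\omega=\omega_\varepsilon$, so
\[
{\mathcal D}-\omega_\varepsilon I\;:\;X_\gamma^{\pm}\times Y_\gamma^{\pm}\;\to\;L_\gamma^{\pm}(\Pi)
\]
is an isomorphism, and we may write $\mathcal{T}^{\pm}:=({\mathcal D}-\omega_\varepsilon I)^{-1}$ for its bounded inverse, with norm $\|\mathcal{T}^{\pm}\|$ finite (depending on $N$, $\varepsilon$, $\gamma$).

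Next I would control the multiplication operator $\delta\mathcal{P}$ as an operator on the weighted spaces. The key point is that, because $\mathcal{P}$ has compact support inside $[-R_0,R_0]\times[0,1]$, the weight $e^{\pm\gamma y}$ is bounded above and below on $\mathrm{supp}\,\mathcal{P}$; hence for $g\in L_\gamma^{\pm}(\Pi)$ one has $\|\delta\mathcal{P}\,g;L_\gamma^{\pm}(\Pi)\|=\|\delta\,e^{\pm\gamma y}\mathcal{P}\,g;L^2(\Pi)\|\le |\delta|\,\sup|\mathcal{P}|\,\|g;L_\gamma^{\pm}(\Pi)\|\le|\delta|\,\|g;L_\gamma^{\pm}(\Pi)\|$ using (\ref{eq:supP}). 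Thus $\delta\mathcal{P}:L_\gamma^{\pm}(\Pi)\to L_\gamma^{\pm}(\Pi)$ has norm at most $|\delta|$. Moreover multiplication by $\mathcal{P}$ maps the domain $X_\gamma^{\pm}\times Y_\gamma^{\pm}$ into $L_\gamma^{\pm}(\Pi)$ boundedly (only the $L_\gamma^{\pm}$-norm of the function itself enters, not its derivatives), so the composite $\delta\mathcal{P}\,\mathcal{T}^{\pm}:L_\gamma^{\pm}(\Pi)\to L_\gamma^{\pm}(\Pi)$ is bounded with $\|\delta\mathcal{P}\,\mathcal{T}^{\pm}\|\le|\delta|\,\|\mathcal{T}^{\pm}\|$.

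The final step is the Neumann series. Setting $\delta_0:=\tfrac12\|\mathcal{T}^{\pm}\|^{-1}$ (or any value $<\|\mathcal{T}^{\pm}\|^{-1}$), for $|\delta|\le\delta_0$ we have $\|\delta\mathcal{P}\,\mathcal{T}^{\pm}\|\le|\delta|\,\|\mathcal{T}^{\pm}\|<1$, so $I+\delta\mathcal{P}\,\mathcal{T}^{\pm}$ is invertible on $L_\gamma^{\pm}(\Pi)$ with inverse $\sum_{k\ge0}(-\delta\mathcal{P}\,\mathcal{T}^{\pm})^k$. Since
\[
{\mathcal D}+(\delta\mathcal{P}-\omega_\varepsilon)I=(I+\delta\mathcal{P}\,\mathcal{T}^{\pm})({\mathcal D}-\omega_\varepsilon I)
\]
as operators from $X_\gamma^{\pm}\times Y_\gamma^{\pm}$ to $L_\gamma^{\pm}(\Pi)$, it is a composition of two isomorphisms, hence an isomorphism, with inverse $\mathcal{T}^{\pm}(I+\delta\mathcal{P}\,\mathcal{T}^{\pm})^{-1}$. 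I do not anticipate a genuine obstacle here; the only point requiring a line of care is the weighted-norm estimate on $\delta\mathcal{P}$, which works precisely because $\mathcal{P}$ is compactly supported so the exponential weight cannot blow up on its support — this is why the same $\delta_0$ works uniformly for both the $+$ and $-$ spaces up to replacing $\|\mathcal{T}^{+}\|$ and $\|\mathcal{T}^{-}\|$ by their maximum.
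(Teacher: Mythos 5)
Your proposal is correct and is essentially the paper's own argument: the paper derives Theorem~\ref{T1a} from Theorem~\ref{T1} precisely by noting that the multiplication operator $\delta\mathcal{P}$ has norm at most $|\delta|$ on $L_{\gamma}^{\pm}(\Pi)$ and invoking the standard small-perturbation (Neumann series) argument, which you carry out explicitly. One tiny remark: the compact support of $\mathcal{P}$ is not even needed for the weighted bound, since multiplication by $\mathcal{P}$ commutes with the weight $e^{\pm\gamma y}$ and $\sup|\mathcal{P}|\leq 1$ already gives $\|\delta\mathcal{P}g;L_{\gamma}^{\pm}(\Pi)\|\leq|\delta|\,\|g;L_{\gamma}^{\pm}(\Pi)\|$.
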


We introduce two new spaces 
\[
{\mathcal H}_{\gamma}^{+}=\{(u,v)\;:\; u\in X_{\gamma}^{+}\cap X_{\gamma}^{-},\;\; v\in Y_{\gamma}^{+}\cap Y_{\gamma}^{-}\}
\]
and 
\[
{\mathcal H}_{\gamma}^{-}=\{(u,v)\;:\; u\in X_{\gamma}^{+}\cup X_{\gamma}^{-},\;\; v\in Y_{\gamma}^{+}\cup Y_{\gamma}^{-}\}.
\]
The norms in this spaces are defined by 
\[
||(u,v);{\mathcal H}_{\gamma}^{\pm}||^{2}=\int_{\Pi}e^{\pm2\gamma|y|}\Big(|u|^{2}+|v|^{2}+|{\mathcal D}(u,v)^{t}|^{2}\Big)dxdy.
\]
Let also ${\mathcal L}_{\gamma}^{\pm},$ $\gamma>0$, be two $L^{2}$
weighted spaces in $\Pi$ with the norms 
\[
||(u,v);{\mathcal L}_{\gamma}^{\pm}||^{2}=\int_{\Pi}e^{\pm2\gamma|y|}\Big(|u|^{2}+|v|^{2}\Big)dxdy.
\]
We define two operators acting in the introduced spaces 
\begin{equation*}
A_{\gamma}^{\pm}=A_{\gamma}^{\pm}(\varepsilon,\delta)={\mathcal D}+(\delta\mathcal{P}-\omega) I\,:\,{\mathcal H}_{\gamma}^{\pm}\rightarrow{\mathcal L}_{\gamma}^{\pm}.
\end{equation*}
Some important properties of these operator are collected in the following

\begin{theorem}\label{T3s} The operators $A_{\gamma}^{\pm}$ are
Fredholm and ${\rm dim}\,{\rm ker}A_{\gamma}^{+}=0$, ${\rm dim}\,{\rm coker}\,A_{\gamma}^{-}=0$.
Moreover 
\[
{\rm dim}\,{\rm coker}A_{\gamma}^{+}={\rm dim}\,{\rm ker}A_{\gamma}^{-}=2N-1,
\]
where $N$ is the same as in {\rm(\ref{DD1})}. \end{theorem}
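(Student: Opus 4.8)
The plan is to deduce Theorem~\ref{T3s} from the isomorphism property in Theorem~\ref{T1a} together with the structure of solutions near the threshold. First I would recall the general principle for operators on cylinders: the operator $\mathcal{D}+(\delta\mathcal{P}-\omega)I$ acting between the ``symmetric'' weighted spaces $\mathcal{H}_\gamma^\pm$ differs from the isomorphism $\mathcal{D}+(\delta\mathcal{P}-\omega)I\colon X_\gamma^\pm\times Y_\gamma^\pm\to L_\gamma^\pm(\Pi)$ only in the weight behaviour as $y\to\pm\infty$: the space $\mathcal{L}_\gamma^\pm$ carries the weight $e^{\pm2\gamma|y|}$ while $L_\gamma^\pm(\Pi)$ carries $e^{\pm2\gamma y}$. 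Since $\mathcal{P}$ has compact support in $y$ by \eqref{eq:supP}, the Fredholm property is purely a matter of the homogeneous Dirac operator on the two half-cylinders $y>R_0$ and $y<-R_0$, and there the analysis reduces via Laplace transform to counting roots of \eqref{K1} in the strip $|\Im\lambda|\le\gamma$; Theorem~\ref{T1} and Proposition~\ref{Pr1}(iii) guarantee there are no roots on the lines $\Im\lambda=\pm\gamma$, which gives the Fredholm property of $A_\gamma^\pm$ by the standard Kondrat'ev/Maz'ya--Plamenevskii theory (as cited via \cite{NaPl}, \cite{KozlovBook}).

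Next I would pin down the index. The key observation is that $\mathcal{H}_\gamma^+\subset X_\gamma^+\times Y_\gamma^+$ and also $\mathcal{H}_\gamma^+\subset X_\gamma^-\times Y_\gamma^-$: a function decaying like $e^{-\gamma|y|}$ in both directions decays like $e^{-\gamma y}$ as $y\to+\infty$ (hence lies in the $+$ space) and like $e^{+\gamma y}$ as $y\to-\infty$ (hence lies in the $-$ space), and conversely. Because the operator in \eqref{2aas} is an isomorphism for both choices of sign, any element of $\ker A_\gamma^+$ would be a solution lying in both $X_\gamma^+\times Y_\gamma^+$ and $X_\gamma^-\times Y_\gamma^-$, forcing it to be zero; this gives $\dim\ker A_\gamma^+=0$. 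Dually, $\mathcal{L}_\gamma^-\supset L_\gamma^+(\Pi)$ and $\mathcal{L}_\gamma^-\supset L_\gamma^-(\Pi)$, and since $\mathcal{D}+(\delta\mathcal{P}-\omega)I$ maps onto each of $L_\gamma^\pm(\Pi)$ from the larger space $\mathcal{H}_\gamma^-=(X_\gamma^+\cup X_\gamma^-)\times(Y_\gamma^+\cup Y_\gamma^-)$, every right-hand side in $\mathcal{L}_\gamma^-$ (in particular one in $L_\gamma^+(\Pi)\cap L_\gamma^-(\Pi)$, to which a general element of $\mathcal{L}_\gamma^-$ can be reduced modulo finite-dimensional corrections carried by the waves) is attained, giving $\dim\operatorname{coker}A_\gamma^-=0$.

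Then the index computation proceeds by Theorem~\ref{T1s}: the discrepancy between the solution $(u^+,v^+)$ in the fast-decay-to-$+\infty$ space and the solution $(u^-,v^-)$ in the fast-decay-to-$-\infty$ space is exactly the finite linear combination $\sum_{j=2}^N C_j^+\mathbf{w}_j^+ + \sum_{j=1}^N C_j^-\mathbf{w}_j^-$ of the waves that are bounded or of slow exponential growth in the strip $|\Im\lambda|<\gamma$. Counting these waves gives $(N-1)$ waves $\mathbf{w}_j^+$ for $j=2,\dots,N$ and $N$ waves $\mathbf{w}_j^-$ for $j=1,\dots,N$, that is $2N-1$ in total (equivalently: $2N-3$ oscillatory real-$\lambda$ waves, plus the two near-threshold waves $\mathbf{w}_N^\pm$). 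By the standard index-jump formula on cylinders, $\operatorname{ind}A_\gamma^- - \operatorname{ind}A_\gamma^+$ equals twice the number of such waves counted with multiplicity, or more directly: $\dim\ker A_\gamma^- = \dim\operatorname{coker}A_\gamma^+ = 2N-1$, the first equality because $A_\gamma^+$ and $A_\gamma^-$ are formal adjoints of each other up to the weight flip (so $\operatorname{coker}A_\gamma^+\cong\ker A_\gamma^-$ via the symplectic pairing $q$ of Sect.~\ref{sub:SF}), and the count $2N-1$ because $\ker A_\gamma^-$ consists precisely of the solutions in $\mathcal{H}_\gamma^-$, which (modulo the genuinely decaying solutions already shown to be trivial for $A_\gamma^+$) are spanned by the $2N-1$ waves listed above.

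The main obstacle I anticipate is justifying the last count rigorously in the non-elliptic setting: the paper has emphasized (Appendix~\ref{sec:Appendix:-Ellipticity}) that the boundary value problem is not elliptic, so the usual ADN-based machinery does not apply verbatim, and one must lean on the bespoke solvability results (Theorems~\ref{T1} and \ref{T1s}) and the explicit root analysis of \eqref{K1}. Concretely, the delicate point is that $\omega=\omega_\varepsilon$ sits near the threshold $\omega_N$ where two roots $\lambda_\pm$ have coalesced at $\lambda=1$ when $\varepsilon=0$; for $\varepsilon>0$ these are genuinely complex and contribute to $\ker A_\gamma^-$ (growing slowly in $y$), which is why the kernel dimension is $2N-1$ rather than $2N-3$. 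Making sure the Fredholm index is locally constant across $\varepsilon=0$ and that no root crosses the lines $\Im\lambda=\pm\gamma$ as $\delta$ and $\varepsilon$ vary in their allowed ranges is exactly what Proposition~\ref{Pr1}(iii) is designed to provide, so the argument hinges on invoking it carefully rather than on any new estimate.
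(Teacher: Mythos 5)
Most of your outline runs parallel to the paper's argument: $\dim\ker A_\gamma^+=0$ from the inclusion $\mathcal{H}_\gamma^+\subset X_\gamma^\pm\times Y_\gamma^\pm$ and the injectivity part of Theorem \ref{T1a}; surjectivity of $A_\gamma^-$ from the isomorphisms of Theorem \ref{T1a} (though your reduction of a general right-hand side to $L_\gamma^+(\Pi)\cap L_\gamma^-(\Pi)$ ``modulo finite-dimensional corrections'' is not the right move --- one simply splits $f=\chi_+f+(1-\chi_+)f$ with the two pieces lying in $L_\gamma^-(\Pi)$ and $L_\gamma^+(\Pi)$ and solves each); and the identification $\operatorname{coker}A_\gamma^+\cong\ker A_\gamma^-$ by duality. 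The genuine gap is the exact count $\dim\ker A_\gamma^-=2N-1$ for $\delta\neq0$, which is the actual content of the theorem.

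Your justification of that count --- that $\ker A_\gamma^-$ is ``spanned by the $2N-1$ waves'' modulo decaying solutions --- is only true at $\delta=0$. For $\delta\neq0$ the waves $\mathbf{w}_j^\tau$ do not satisfy the perturbed equation; kernel elements merely have wave-type asymptotics (Theorem \ref{T3s2}), and injectivity of the map to asymptotic coefficients (decaying solutions being trivial) yields only an upper bound, a priori $2(2N-1)$ since both families of coefficients occur; sharpening it to $2N-1$ needs the $Q$-form argument of Theorem \ref{tTh1}, which the paper runs \emph{after} Theorem \ref{T3s} is established, and in any case an upper bound is not a count. The missing ingredient is the paper's key step: $\delta\mathcal{P}$ is a perturbation of norm at most $|\delta|$ of a Fredholm operator, so the index of $A_\gamma^-$ is independent of $\delta$; since $A_\gamma^-$ remains surjective for all $|\delta|\leq\delta_0$, $\dim\ker A_\gamma^-$ equals its value at $\delta=0$, where the kernel is literally the span of the $2N-1$ waves (all admissible because $|\Im\lambda_\pm|<\gamma$ by Proposition \ref{Pr1}(iii)). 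Note also that your closing appeal to Proposition \ref{Pr1}(iii) ``as $\delta$ and $\varepsilon$ vary'' cannot supply this: the roots of (\ref{K1}) do not depend on $\delta$ at all, so the required continuity argument is in $\delta$ at the operator level, not at the level of the dispersion relation. Adding the index-stability step (or an explicit construction of $2N-1$ independent kernel elements for $\delta\neq0$) closes the gap.
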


\begin{proof} By Theorem \ref{T1a}, the operator (\ref{2aas}) is
isomorphic. This implies that the operator $A_{\gamma}^{-}$ is surjective
for such $\delta$ and its index and the dimension of his kernel does
not depend on $\delta$ and hence equals $2N-1$ as it is in the case
$\delta=0$.\end{proof}

In the next theorem and in what follows, we fix
four smooth functions, $\chi_{\pm}=\chi_{\pm}(y)$ and $\eta_{\pm}=\eta_{\pm}(y)$
such that $\chi_{+}(y)=1$, $\chi_{-}(y)=0$ for $y>R_{0}$ and $\chi_{+}(y)=0$,
$\chi_{-}(y)=1$ for $y<-R_{0}$. Then let $\eta_{\pm}(y)=1$ for
large positive $\pm y$, $\eta_{\pm}(y)=0$ for large negative $\pm y$
and $\chi_{\pm}\eta_{\pm}=\chi_{\pm}$.

Let us derive an asymptotic formula for the solution to the perturbed
problem (\ref{DiracP1}), (\ref{Dirac}).

\begin{theorem}\label{T3s2} Let $f\in{\mathcal L}_{\gamma}^{+}$ and
let $w=(u,v)\in{\mathcal H}_{\gamma}^{-}$ be a solution to 
\begin{equation}
({\mathcal D}+(\delta\mathcal{P}-\omega) I)w=f.\label{April9b}
\end{equation}
satisfying {\rm(\ref{Dirac})}.Then 
\begin{equation}
w=\eta_{+}\sum_{j=1}^{N}\sum_{\tau=\pm}C_{j}^{\tau}{\bf w}_{j}^{\tau}+\eta_{-}\sum_{j=1}^{N}\sum_{\tau=\pm}D_{j}^{\tau}{\bf w}_{j}^{\tau}+R,\label{April9c}
\end{equation}
where $R\in{\mathcal H}_{\gamma}^{+}$ . \end{theorem}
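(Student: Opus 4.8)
The plan is to deduce the asymptotic expansion (\ref{April9c}) from the Fredholm theory already in place, essentially by peeling off the slowly-decaying and growing wave components of $w$ at $y\to\pm\infty$ and showing the remainder lies in the exponentially-weighted space ${\mathcal H}_\gamma^+$. First I would observe that $w\in{\mathcal H}_\gamma^-$ means $e^{-\gamma|y|}w$ and $e^{-\gamma|y|}{\mathcal D}w$ are in $L^2(\Pi)$; since $f\in{\mathcal L}_\gamma^+$ has exponential decay, and since $\mathcal P$ has compact support in $y$, the function $w$ away from $\mathrm{supp}\,\mathcal P$ solves the homogeneous problem (\ref{homoDirac2}), (\ref{homoDirac3P}) in the half-strips $y>R_0$ and $y<-R_0$. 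The idea is then to apply Theorem~\ref{T1s} (the surjectivity-with-residues statement for the unperturbed operator) after localizing: multiply the equation by $\chi_+$ (resp.\ $\chi_-$) to produce a right-hand side $({\mathcal D}-\omega)(\chi_\pm w) = \chi_\pm f + [{\mathcal D},\chi_\pm]w - \delta\chi_\pm\mathcal P w$, where the commutator term is supported in the bounded overlap region $|y|\le R_0$ and hence lies in $L_\gamma^+(\Pi)\cap L_\gamma^-(\Pi)$, and $\delta\chi_\pm\mathcal P w$ vanishes identically because $\chi_\pm\mathcal P \equiv 0$ on a neighborhood of $\mathrm{supp}\,\mathcal P$ by our choice of cutoffs (recall $\chi_+(y)=0$ for $y<-R_0$ and $\mathrm{supp}\,\mathcal P\subset[-R_0,R_0]\times[0,1]$; more carefully one should choose the cutoffs so their transition happens outside $[-R_0,R_0]$, which is harmless).

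Next I would invoke Theorem~\ref{T1} to solve $({\mathcal D}-\omega)W^\pm = \chi_\pm f + [{\mathcal D},\chi_\pm]w$ with $W^\pm\in X_\gamma^\pm\times Y_\gamma^\pm$, and note that $\chi_+ w - W^+ =: \widetilde w^+$ and $\chi_- w + (\text{its analog})$ solve the homogeneous problem in $X_{-\gamma}^{+}\times Y_{-\gamma}^{+}$-type spaces near $+\infty$ (i.e.\ allow growth $e^{\gamma y}$); by Theorem~\ref{T1s} applied with roles of $+\gamma$ and $-\gamma$ exchanged, the difference between the solution that decays and the solution that may grow is a finite linear combination of the waves ${\bf w}_j^\tau$, $j=1,\ldots,N$, $\tau=\pm$ (there are $2N-1$ of them, matching $\dim\ker A_\gamma^- = 2N-1$ from Theorem~\ref{T3s}). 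Concretely, the general element of ${\mathcal H}_\gamma^-$ solving the homogeneous equation near $\pm\infty$ and lying in the larger weighted space differs from an element of ${\mathcal H}_\gamma^+$ by $\eta_+\sum_{j,\tau}C_j^\tau{\bf w}_j^\tau + \eta_-\sum_{j,\tau}D_j^\tau{\bf w}_j^\tau$; the coefficients $C_j^\tau, D_j^\tau$ are read off from the residue formulas in Theorem~\ref{T1s} (the integrals $\int_\Pi(g,h)\cdot\overline{{\bf w}_j^\tau}$ with $(g,h)$ the appropriate localized right-hand side). Setting $R := w - \eta_+\sum C_j^\tau{\bf w}_j^\tau - \eta_-\sum D_j^\tau{\bf w}_j^\tau$, one checks $({\mathcal D}+(\delta\mathcal P-\omega))R \in {\mathcal L}_\gamma^+$ (the potential term again acts only where $R$ and $w$ agree up to the compactly-supported tail) and that $R$ decays like $e^{-\gamma|y|}$ at both ends, so $R\in{\mathcal H}_\gamma^+$ by Theorem~\ref{T1a}.

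The step I expect to be the main obstacle is the bookkeeping at $-\infty$: Theorem~\ref{T1s} as stated compares the $+\gamma$-decaying and $-\gamma$-decaying solutions at a common framing, but to extract \emph{independent} coefficient sets $\{C_j^\tau\}$ at $+\infty$ and $\{D_j^\tau\}$ at $-\infty$ one must run the localization argument separately in each half-strip and then verify the two local expansions glue to a global formula with remainder in the \emph{small} weighted space on all of $\Pi$. This requires care that the cutoff commutators do not secretly reintroduce wave components, and that the exponentially-growing waves ${\bf w}_N^\pm$ (which have genuine growth $e^{\sqrt\varepsilon\lambda_1|y|}$, not just polynomial) are correctly paired with $\eta_\pm$ so that, e.g., $\eta_+{\bf w}_N^-$ decays at $-\infty$ and is only ``visible'' at $+\infty$. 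One also needs the non-resonance of the lines $\Im\lambda=\pm\gamma$ (Proposition~\ref{Pr1}) to guarantee no other exponential contributions with rate exactly $\gamma$ appear. Modulo this gluing, the proof is a routine application of the Kondrat'ev-type theory encapsulated in Theorems~\ref{T1}, \ref{T1s}, \ref{T1a}, \ref{T3s}, and I would present it in Appendix form, citing \cite{KozlovBook}, \cite{NaPl} for the standard parts and flagging where non-ellipticity forces the use of the explicitly-constructed solution families from Sect.~\ref{sub:sol} and Sect.~\ref{Sect2} rather than general elliptic regularity.
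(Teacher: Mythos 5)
Your proposal is correct and follows essentially the same route as the paper's proof: cut off near each end of the strip, observe that the localized right-hand side (including the compactly supported potential and commutator terms) lies in $L_{\gamma}^{+}(\Pi)\cap L_{\gamma}^{-}(\Pi)$, identify the cut-off solution with the one-sided weighted solutions of Theorems \ref{T1} and \ref{T1s} so that the discrepancy is a finite linear combination of the waves ${\bf w}_{j}^{\tau}$, and recombine with the cutoffs so the remainder lands in ${\mathcal H}_{\gamma}^{+}$. The only cosmetic differences are that the paper keeps $\delta\mathcal{P}w$ on the right-hand side (harmless, since it has compact support) instead of arranging $\chi_{\pm}\mathcal{P}\equiv 0$, and it settles the ``gluing'' issue you flag simply via the identity $w=\chi_{+}w+\chi_{-}w+(1-\chi_{+}-\chi_{-})w$ with $\chi_{\pm}\eta_{\pm}=\chi_{\pm}$.
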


\begin{proof}We write (\ref{April9b}) as
\[
({\mathcal D}+\omega I)w=f-\delta\mathcal{P}w=:F
\]
Then
\begin{equation}
({\mathcal D}+\omega I)\eta_{\pm}w=\eta_{\pm}F+[{\mathcal D},\eta_{\pm}]w.\label{eq:Fcomm}
\end{equation}
According to Theorem \ref{T1s} solutions to (\ref{eq:Fcomm})
are $w^{\mp}:=(u^{\mp},v^{\mp})\in X_{\gamma}^{\mp}\times Y_{\gamma}^{\mp}$,
so
\begin{equation}
\eta_{+}w=w^{-},\:\:\:\eta_{-}w=w^{+}.\label{eq:etasol}
\end{equation}
Applying relation (\ref{Apr8a}) to (\ref{eq:etasol}), then
multiplying obtained equations by $\chi_{\pm}$ respectively and summing
them up, we get
\begin{equation}
w=\chi_{-}\sum_{j=1}^{N}\sum_{\tau=\pm}C_{j}^{'\tau}{\bf w}_{j}^{\tau}+\chi_{+}\sum_{j=1}^{N}\sum_{\tau=\pm}D_{j}^{'\tau}{\bf w}_{j}^{\tau}+R'\label{eq:wchi}
\end{equation}
where $R'=\chi_{+}w^{+}+\chi_{-}w^{-}+(1-\chi_{+}-\chi_{-})w\in\mathcal{H_{\gamma}}^{+}$.
Equation (\ref{eq:wchi}) can be written in the form (\ref{April9c})
with $R\in\mathcal{H_{\gamma}}^{+}$.\end{proof}

\subsection{Augumented scattering matrix}\label{sub:S}

Scattering matrix is our main tool for the identification of trapped
modes \cite{TMP, na510, FAA}. Using q-form, we define incoming/outgoing waves. Scattering
matrix is defined via coefficients in this combination of waves. It
is important to point out that this matrix is often called augumented
as it contains coefficients of the waves exponentially growing at
infinity as well. Finally, by the end of the section we define a space
with separated asymptotics and check that it produces a unique solution
to the perturbed problem.

Let
\begin{equation*}
Q_{R}(w,\tilde{w})=q_{R}(w,\tilde{w})-q_{-R}(w,\tilde{w}).\label{W1}
\end{equation*}
If $w=(u,v)$ and $\tilde{w}=(\tilde{u},\tilde{v})$ are solutions
to (\ref{homoDirac2}) for $|y|\geq R_{0}$ then using Green's formula
one can show that this form is independent of $R$, $R\geq R_{0}$.

We introduce two sets of waves with cutoff close to $\pm\infty$, which we
will call outgoing and incoming (for physical interpretation see Appendix
\ref{sec:Appendix:-Mandelstam-radiation}) 
\begin{equation}
W_{k}^{\mp}=W_{k}^{\mp}(x,y;\varepsilon)=\chi_{\pm}(y){\bf w}_{k}^{\mp}(x,y)\label{W1a}
\end{equation}
and 
\begin{equation}
V_{k}^{\pm}=V_{k}^{\pm}(x,y;\varepsilon)=\chi_{\pm}(y){\bf w}_{k}^{\pm}(x,y)\label{W1b}
\end{equation}
with $k=2,\ldots,N$ for the sign $+$ and $k=1,\ldots,N$ for the sign
$-$. The reason for introducing this sets of waves is the following
property 
\begin{equation}
Q_{R}(W_{k}^{\tau},W_{j}^{\theta})=-i\delta_{k,j}\delta_{\tau,\theta},\;\; Q_{R}(V_{k}^{\tau},V_{j}^{\theta})=i\delta_{k,j}\delta_{\tau,\theta}\label{W1c}
\end{equation}
where $k,j=2,\ldots,N$, $\tau,\theta=\pm$ and $k,j=1$, $\tau,\theta=-$.
Moreover, 
\begin{equation}
Q_{R}(W_{k}^{\tau},V_{j}^{\theta})=0.\label{W1ca}
\end{equation}
Thus the sign of the $Q$-form distinguish among $W$ and $V$ waves.

In the next lemma we give a description of the kernel of the operator
$A_{\gamma}^{-}$, which will be used in the definition of the scattering
matrix.

\begin{theorem}\label{tTh1} There exists a basis in $\ker A_{\gamma}^{-}$
of the form 
\begin{equation}
z_{k}^{\tau}=V_{k}^{\tau}+{\bf S}_{k\tau}^{1-}W_{1}^{-}+\sum_{\theta=\pm}\sum_{j=2}^{N}{\bf S}_{k\tau}^{j\theta}W_{j}^{\theta}+\tilde{z}_{k}^{\tau},\label{zk}
\end{equation}
where $\tilde{z}_{k}^{\tau}\in{\mathcal H}_{\gamma}^{+}$. Moreover, the
coefficients ${\bf S}_{k\tau}^{j\theta}={\bf S}_{k\tau}^{j\theta}(\varepsilon,\delta)$
are uniquely defined %
\footnote{As before we assume that for $k=1$ the only admissible sign is $\tau=-$
and similar agreement is valid for $j=1$, for simplicity, in what
follows we often write summations over all indices $\tau=\pm$ and
$j=1,\,2,..,N$ even though the sign $\tau=+$ should be omitted for $j=1$. %
}. \end{theorem}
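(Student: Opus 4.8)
The plan is to construct the basis elements by a direct exhaustion argument built on the structure results already established. First I would use Theorem \ref{T3s}, which gives $\dim\ker A_\gamma^- = 2N-1$, and Theorem \ref{T3s2}, which says every element $w\in\ker A_\gamma^-$ (the case $f=0$) admits the asymptotic decomposition
\[
w=\eta_+\sum_{j=1}^{N}\sum_{\tau=\pm}C_j^\tau\mathbf{w}_j^\tau+\eta_-\sum_{j=1}^{N}\sum_{\tau=\pm}D_j^\tau\mathbf{w}_j^\tau+R,\qquad R\in\mathcal{H}_\gamma^+.
\]
Since the set of admissible pairs $(j,\tau)$ has cardinality $2N-1$ (the sign $+$ is forbidden for $j=1$), the map $w\mapsto(C_j^\tau)_{j,\tau}$ sending a kernel element to its vector of ``incoming-at-$-\infty$ / outgoing-at-$+\infty$'' coefficients — equivalently, in the $V/W$ language of (\ref{W1a})--(\ref{W1b}), the coefficients of the $V_k^\tau$ parts — is a linear map from a $(2N-1)$-dimensional space into a $(2N-1)$-dimensional space. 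The main step is to show this map is injective, hence an isomorphism; then $z_k^\tau$ is defined as the unique preimage of the $(k,\tau)$-th standard basis vector, which automatically has the form (\ref{zk}) once the remaining coefficients are relabelled as $\mathbf{S}_{k\tau}^{j\theta}$, and uniqueness of the whole representation follows from uniqueness of the preimage.

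The injectivity is exactly where the symplectic (q-)form does the work, and I expect it to be the heart of the argument. Suppose $w\in\ker A_\gamma^-$ has all its $V$-coefficients equal to zero; then $w=\chi_+\sum \mathbf{S}^{j\theta}W_j^\theta + \tilde z$ with $\tilde z\in\mathcal{H}_\gamma^+$, i.e. $w$ decays exponentially as $y\to-\infty$ and is a superposition of purely outgoing waves as $y\to+\infty$. I would apply the $Q_R$-form of Sect.~\ref{sub:S}: on one hand, because $w$ solves the homogeneous equation $({\mathcal D}-\omega)w=0$ outside $\mathrm{supp}\,\mathcal P$ and the potential is real, an integration-by-parts (Green's formula) identity of the type used to show $q_a$ is $a$-independent gives $Q_R(w,w)=0$; on the other hand, evaluating $Q_R(w,w)$ through the asymptotics and using the orthogonality relations (\ref{W1c}), (\ref{W1ca}) yields $Q_R(w,w)=-i\sum|\mathbf{S}^{j\theta}|^2$ (only $W$-terms survive, all with the same sign of the form). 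Hence all $\mathbf{S}^{j\theta}=0$, so $w=\tilde z\in\mathcal{H}_\gamma^+$, i.e. $w\in\ker A_\gamma^+$; but $\dim\ker A_\gamma^+=0$ by Theorem \ref{T3s}, so $w=0$. One has to be a little careful that the $\mathbf{w}_N^\pm$ grow exponentially in $y$, so ``decays as $y\to-\infty$'' must be read in the $\mathcal H_\gamma^+$ sense and the $Q_R$-computation must be justified for these growing modes as well — but this is precisely why $Q_R$ was defined with a cutoff and shown to be $R$-independent for solutions of (\ref{homoDirac2}), so the pairing is still finite and the computation legitimate.

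The remaining points are bookkeeping. Surjectivity is free once injectivity is known, by equality of dimensions. To write $z_k^\tau$ in the stated form one notes that the $\eta_\pm$-ansatz of Theorem \ref{T3s2} can be converted to the $\chi_\pm$-cutoff form (as in the proof of that theorem, absorbing the difference into the $\mathcal H_\gamma^+$-remainder), and that fixing the $V$-part to be a single $V_k^\tau$ forces the $W_j^\theta$-coefficients at $+\infty$; the $V$-waves occurring at $+\infty$ are likewise absorbed — more precisely, the decomposition is arranged so that the ``$V$ at $+\infty$'' coefficients vanish, which is the content of injectivity again, applied after subtracting $V_k^\tau$. Finally, uniqueness of the coefficients $\mathbf{S}_{k\tau}^{j\theta}$ and of $\tilde z_k^\tau$ follows because two representations of the same kernel element would differ by an element of $\ker A_\gamma^-$ with vanishing $V$-part, which we have just shown is $0$. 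The only genuine obstacle, then, is making the $Q_R$-vanishing identity $Q_R(w,w)=0$ fully rigorous for solutions that grow exponentially at one end while decaying at the other; everything else is a dimension count plus the algebra of the $q$-form.
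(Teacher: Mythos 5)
Your proposal follows essentially the same route as the paper's proof: decompose a kernel element via the solvability theory (Theorem \ref{T1s}/\ref{T3s2}) into $W$-, $V$-waves plus an ${\mathcal H}_{\gamma}^{+}$ remainder, consider the linear map onto the $(2N-1)$-vector of $V$-coefficients, prove injectivity by computing $Q_R(w,w)=0$ against the normalization (\ref{W1c}), (\ref{W1ca}) and invoking the triviality of $\ker A_{\gamma}^{+}$, and conclude by the dimension count $\dim\ker A_{\gamma}^{-}=2N-1$ from Theorem \ref{T3s}. Your extra care about the exponentially growing waves ${\bf w}_N^{\pm}$ is resolved exactly as you suggest, since $|\Im\lambda_{\pm}|<\gamma$ so cross terms with the ${\mathcal H}_{\gamma}^{+}$ remainder vanish as $R\to\infty$; this matches the paper's argument.
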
 

\begin{proof} Let $z\in\ker A_{\gamma}^{-}$. Since 
\[
({\mathcal D}-\omega)(\chi_{\pm}z)=-\delta\mathcal{P}\chi_{\pm}z+[{\mathcal D},\chi_{\pm}]z,
\]
applying Theorem \ref{T1s}, we get 
\[
\chi_{\pm}z=\sum_{j=1}^{N}\sum_{\tau=\pm}a_{j\tau}^{\pm}{\bf w}_{j}^{\tau}+R_{\pm},\;\; R_{\pm}\in X_{\gamma}^{\pm}\times Y_{\gamma}^{\pm}.
\]
Multiplying these equalities by $\chi_{\pm}$ and summing them up,
we get 
\[
z=\sum_{j=1}^{N}\sum_{\tau=\pm}a_{j\tau}^{\pm}\chi_{\pm}{\bf w}_{j}^{\tau}+R,\;\; R=\chi_{+}R_{+}+\chi_{-}R_{-}+(1-\chi_{-}-\chi_{+})z\in{\mathcal H}_{\gamma}^{+}.
\]
We write the last relation in the form 
\[
z=\sum_{j=1}^{N}\sum_{\tau=\pm}C_{j\tau}^{1}W_{j}^{\tau}+\sum_{j=1}^{N}\sum_{\tau=\pm}C_{j\tau}^{2}V_{j}^{\tau}+R,\;\; R\in{\mathcal H}_{\gamma}^{+}
\]
and consider the map 
\[
\ker A_{\gamma}^{-}\ni z\mapsto\,\{C_{j\tau}^{2}\}\in\Bbb C^{2N-1},
\]
which is linear and is denoted by ${\mathcal J}$. Let us show that it's
kernel is trivial. Indeed, if all $C_{j\tau}^{2}$ vanish, then 
\[
\int_{_{\Pi_{R}}}\overline{z}({\mathcal D}+(\delta\mathcal{P}-\omega) I)zdxdy=Q_{R}(z,z)\mapsto-i\sum|C_{j\tau}^{1}|^{2}\;\;\mbox{as \ensuremath{R\to\infty},}
\]
where $\Pi_{R}=(-R,R)\times(0,1)$. Hence $C_{j\tau}^{1}=0$, which
leads to $z\in{\mathcal H}_{\gamma}^{+}$ and therefore $z=0$. This shows
that the mapping ${\mathcal J}$ is invertible and we obtain existence
of a basis in the form (\ref{zk}) together with uniqueness of coefficients ${\bf S}_{k\tau}^{j\theta}$.
\end{proof}

The matrix of coefficients ${\bf S}_{k\tau}^{j\theta}$ in (\ref{zk}) is called the scattering matrix (see the footnote on the previous page concerning $k=1$ and $j=1$).

\subsection{Block notation}\label{sub:BN}

We shall use a vector notation 
\[
{\bf W}=({\bf W}_{\bullet},{\bf W}_{\dagger}),{\bf \;\;\; V}=({\bf V}_{\bullet},{\bf V}_{\dagger}),
\]
where 
\[
{\bf W}_{\bullet}=(W_{1}^{-},W_{2}^{+},W_{2}^{-},\ldots,W_{N-1}^{+},W_{N-1}^{-}),\;\;\;{\bf W}_{\dagger}=(W_{N}^{+},W_{N}^{-})
\]
and
\[
{\bf V}_{\bullet}=(V_{1}^{-},V_{2}^{+},V_{2}^{-},\ldots,V_{N-1}^{+},V_{N-1}^{-}),\;\;\;{\bf V}_{\dagger}=(V_{N}^{+},V_{N}^{-}).
\]
Equation (\ref{zk}) in the vector form reads\footnote{In this format {\bf z}, {\bf V}, {\bf W} and {\bf r} are column vectors. Such notation will follow through the paper.}\label{z_sol} 
\begin{equation}
{\bf z}={\bf V}+{\bf S}{\bf W}+{\bf r}.
\end{equation}
with 
\begin{equation*}
{\bf z}=({\bf z}_{\bullet},{\bf z}_{\dagger})=(z_{1}^{-},z_{2}^{+},z_{2}^{-},..,z_{N-1}^{+},z_{N-1}^{-},z_{N}^{+},z_{N}^{-})\in{\mathcal H}_{\gamma}^{-},\label{z_solb}
\end{equation*}
\begin{equation*}
{\bf r}=({\bf r}_{\bullet},{\bf r}_{\dagger})=(r_{1}^{-},r_{2}^{+},r_{2}^{-},..,r_{N-1}^{+},r_{N-1}^{-},r_{N}^{+},r_{N}^{-})\in{\mathcal H}_{\gamma}^{+}
\end{equation*}
here both vectors ${\bf z}$ and $r$ has $2N-1$ elements. Matrix
${\bf S}={\bf S}(\varepsilon,\delta)$ is written blockwise
\begin{equation*}
\mathcal{S}=\left(\begin{array}{cc}
{\bf S}_{\bullet\bullet} & {\bf S}_{\bullet\dagger}\\
{\bf S}_{\dagger\bullet} & {\bf S}_{\dagger\dagger}
\end{array}\right).\label{Apr9a}
\end{equation*}
Relations (\ref{W1c}) and (\ref{W1ca}) take the form 
\begin{equation}
Q({\bf W},{\bf W})=-i\mathbb{I}\ ,\ \ Q({\bf V},{\bf V})=i\mathbb{I}\ \ \ Q({\bf W},{\bf V})=\mathbb{O}.\label{eq:Qprop}
\end{equation}
where $\mathbb{I}$ is the identity matrix and $\mathbb{O}$ is the
null matrix. 

\begin{prop}\label{sdeltaunitary} The scattering matrix ${\bf S}$
is unitary. \end{prop}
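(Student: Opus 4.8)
The plan is to establish unitarity of $\mathbf{S}$ by exploiting the anti-Hermitian symplectic form $Q$ together with the fact that the basis vectors $z_k^\tau$ all lie in $\ker A_\gamma^-$, i.e. they solve the homogeneous perturbed problem. The key identity is that $Q_R$ is independent of $R$ (for $R\ge R_0$) when evaluated on solutions of the homogeneous equation, and more generally, Green's formula applied on the rectangle $\Pi_R=(-R,R)\times(0,1)$ gives, for any two solutions $w,\tilde w$ of $(\mathcal D+(\delta\mathcal P-\omega)I)w=0$ satisfying the boundary conditions \eqref{Dirac}, that $Q_R(w,\tilde w)=0$. This is because the bulk integral $\int_{\Pi_R}\big(\overline{\tilde w}\cdot(\mathcal D+\delta\mathcal P-\omega)w - w\cdot(\overline{\mathcal D}+\delta\mathcal P-\omega)\overline{\tilde w}\big)\,dxdy$ vanishes (both factors are zero, and $\delta\mathcal P$ is real so it cancels), while integration by parts converts it to the boundary terms $q_R-q_{-R}=Q_R$; the lateral boundary contributions at $x=0,1$ vanish by \eqref{Dirac}.

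First I would apply this to the pair $w=z_k^\tau$, $\tilde w=z_j^\theta$. Both are in $\ker A_\gamma^-$, so $Q_R(z_k^\tau,z_j^\theta)=0$ for all $k,j,\tau,\theta$. Next I would compute the same quantity by substituting the asymptotic decomposition \eqref{zk}: since $\tilde z_k^\tau\in\mathcal H_\gamma^+$ decays faster than any $\mathbf w$, and the cutoff functions $\chi_\pm$ stabilize the waves $W$ and $V$ at $\pm\infty$, taking $R\to\infty$ kills all cross-terms involving $\tilde z_k^\tau$ and leaves only the contributions from the $V$ and $W$ parts. Using the orthogonality relations \eqref{W1c} and \eqref{W1ca} — namely $Q(W,W)=-i\mathbb I$, $Q(V,V)=i\mathbb I$, $Q(W,V)=\mathbb O$ — in block form with the decomposition $\mathbf z=\mathbf V+\mathbf S\mathbf W+\mathbf r$, the computation yields
\[
0=Q(\mathbf z,\mathbf z)=Q(\mathbf V,\mathbf V)+Q(\mathbf S\mathbf W,\mathbf S\mathbf W)=i\mathbb I - i\,\mathbf S\mathbf S^{*},
\]
where I used $Q(\mathbf S\mathbf W,\mathbf S\mathbf W)=\mathbf S\,Q(\mathbf W,\mathbf W)\,\mathbf S^{*}=-i\,\mathbf S\mathbf S^{*}$ (sesquilinearity of $q$, hence of $Q$, acting on the first and second slot respectively), and the cross-terms $Q(\mathbf V,\mathbf S\mathbf W)$ and $Q(\mathbf S\mathbf W,\mathbf V)$ vanish by \eqref{W1ca}. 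Hence $\mathbf S\mathbf S^{*}=\mathbb I$, so $\mathbf S$ is unitary.

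The main obstacle — really the only delicate point — is justifying that the $\tilde z_k^\tau$ and remainder terms do not contribute to $Q_R$ in the limit $R\to\infty$, and more precisely that $Q(\mathbf z,\mathbf z)$ can be evaluated purely from the "propagating plus growing" part of the asymptotics. This requires care because the augmented scattering matrix involves exponentially growing waves $\mathbf w_N^\pm$, so the terms $V_N^\pm$ and $W_N^\pm$ grow like $e^{\sqrt\varepsilon\,\kappa_{N1}|y|}$; one must check that the weight $e^{2\gamma|y|}$ controlling $\mathcal H_\gamma^+$ still beats these growing waves paired against the decaying remainder, i.e. that $\gamma$ was chosen large enough relative to the growth rate of $\mathbf w_N^\pm$ (which is $O(\sqrt\varepsilon)$ and hence small, so for $\varepsilon$ small this is fine). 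The cleanest way to handle this is to note that $Q_R$ on solutions is $R$-independent, evaluate it at finite $R\ge R_0$ where in the region $|y|>R_0$ we have exactly $z_k^\tau = V_k^\tau + \sum \mathbf S_{k\tau}^{j\theta}W_j^\theta + \tilde z_k^\tau$ with $\tilde z_k^\tau$ the $\mathcal H_\gamma^+$ remainder, then let $R\to\infty$ using that $q_{\pm R}(\tilde z,\cdot)\to 0$ because $\tilde z$ decays exponentially while the $V,W$ waves grow only sub-exponentially (controlled by $\gamma$); the details mirror the standard argument in \cite{TMP,na510,FAA} adapted to the present non-elliptic setting.
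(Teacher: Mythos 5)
Your proposal is correct and follows essentially the same route as the paper: Green's formula on $\Pi_R$ gives $Q(\mathbf z,\mathbf z)=0$ for the kernel basis, and inserting the decomposition $\mathbf z=\mathbf V+\mathbf S\mathbf W+\mathbf r$ together with the relations $Q(\mathbf W,\mathbf W)=-i\mathbb I$, $Q(\mathbf V,\mathbf V)=i\mathbb I$, $Q(\mathbf W,\mathbf V)=\mathbb O$ yields unitarity; your extra care about the growing waves $\mathbf w_N^{\pm}$ versus the $\mathcal H_\gamma^{+}$ remainder is a legitimate filling-in of a step the paper leaves implicit. The only cosmetic difference is that you obtain $\mathbf S\mathbf S^{*}=\mathbb I$ where the paper writes $\mathbf S^{*}\mathbf S=\mathbb I$ (a matter of which slot of the sesquilinear form indexes rows), which is immaterial for a finite square matrix.
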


\begin{proof} Since $z_{k}^{\tau}$ satisfies the homogeneous equation
(\ref{Dirac}), by using the Green formula one can show that $Q({\bf z},{\bf z})=0$.
Therefore
\[
0=Q({\bf z},{\bf z})=Q({\bf V}+{\bf S}{\bf W},{\bf V}+{\bf S}{\bf W})=Q({\bf V},{\bf V})+Q({\bf S}{\bf W},{\bf S}{\bf W})=i\mathbb{I}-i{\bf S}^{*}{\bf S}
\]
which proves the result.\end{proof}

Consider the non-homogeneous problem (\ref{April9b}) with $f\in\mathcal{{L}}_{\gamma}^{+}(\Pi)$.
This problem has a solution $w\in{\mathcal H}_{\gamma}^{-}$ which
admits the asymptotic representation 
\begin{equation}
w=\sum_{j=1}^{N}\sum_{\tau=\pm}C_{j\tau}^{1}W_{j}^{\tau}+\sum_{j=1}^{N}\sum_{\tau=\pm}C_{j\tau}^{2}V_{j}^{\tau}+R,\;\; R\in{\mathcal H}_{\gamma}^{+}\label{15April7a}
\end{equation}
which is an rearrangement of the representation (\ref{April9c}) This
motivate the following definition of the space ${\mathcal H}_{\gamma}^{out}$
consisting of vector functions $w\in{\mathcal H}_{\gamma}^{-}$ which
admits the asymptotic representation (\ref{15April7a}) with $C_{j\tau}^{2}=0$.
The norm in this space is defined by 
\begin{equation*}
||w;{\mathcal H}_{\gamma}^{{\rm out}}||=(||R;{\mathcal H}_{\gamma}^{+}||^{2}+\sum_{j=1}^{N}\sum_{\tau=\pm}\,|C_{j\tau}^{1}|^{2})^{1/2}.\label{15April7b}
\end{equation*}
Now, we note that
the kernel in Theorem \ref{tTh1} can be equivalently spanned by

\begin{equation}
Z_{k}^{\tau}=W_{k}^{\tau}+{\bf \tilde{S}}_{k\tau}^{1-}V_{1}^{-}+\sum_{\theta=\pm}\sum_{j=2}^{N}{\bf \tilde{S}}_{k\tau}^{j\theta}V_{j}^{\theta}+\tilde{Z}_{\text{}k}^{\tau},\:\:\:\tilde{Z}_{k}^{\tau}\in{\mathcal H}_{\gamma}^{+},\label{eq:Zk}
\end{equation}
$\tilde{Z}_{k}^{\tau}\in{\mathcal H}_{\gamma}^{+}$, where the incoming
and outgoing waves were interchanged (compare with (\ref{zk})) and
$\tilde{S}$ is a scattering matrix corresponding to that exchange.

\begin{theorem}\label{thrm_iso-1} For any $f\in\mathcal{L}_{\gamma}^{+}(\Pi)$,
problem {\rm(\ref{April9b})} has a unique solution $w\in{\mathcal H}_{\gamma}^{out}$
and the following estimate holds 
\begin{equation}
||w;{\mathcal H}_{\gamma}^{{\rm out}}||\leq c||f;\mathcal{L}_{\gamma}^{+}(\Pi)||,\label{15April7c}
\end{equation}
where the constant $c$ is independent of $\varepsilon\in[0,\varepsilon_{0}]$
and $|\delta|\leq\delta_{0}$. Moreover, 
\begin{equation}
-iC_{j\tau}^{1}=\int_{\Pi}f\cdot\overline{Z_{j}^{\tau}}dxdy.\label{April9d}
\end{equation}
\end{theorem}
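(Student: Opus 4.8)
The plan is to establish Theorem \ref{thrm_iso-1} as a corollary of the Fredholm analysis from Theorem \ref{T3s} combined with the description of $\ker A_\gamma^-$ in Theorem \ref{tTh1}. First I would observe that the representation (\ref{15April7a}) is precisely the asymptotic decomposition (\ref{April9c}) from Theorem \ref{T3s2}, reorganized; the content of the present theorem is that among all solutions $w\in\mathcal{H}_\gamma^-$ of (\ref{April9b}) there is exactly one whose $V$-coefficients all vanish, i.e.\ exactly one lying in $\mathcal{H}_\gamma^{\mathrm{out}}$.

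\emph{Existence.} Pick any solution $w_0\in\mathcal{H}_\gamma^-$ of (\ref{April9b}), which exists since $A_\gamma^-$ is surjective by Theorem \ref{T3s}. Write its asymptotics (\ref{15April7a}) with coefficients $C^2_{j\tau}$. Using the basis $\{z_k^\tau\}$ of $\ker A_\gamma^-$ from Theorem \ref{tTh1}, each $z_k^\tau$ has $V$-part equal to $V_k^\tau$ (coefficient $1$) plus no other $V$-terms, so the $(2N-1)\times(2N-1)$ matrix sending the coefficient vector of $\sum c_{k\tau} z_k^\tau$ to its vector of $V$-coefficients is the identity. Hence $w:=w_0-\sum_{k,\tau} C^2_{k\tau} z_k^\tau$ solves (\ref{April9b}) and has all $V$-coefficients zero, so $w\in\mathcal{H}_\gamma^{\mathrm{out}}$.

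\emph{Uniqueness.} If $w\in\mathcal{H}_\gamma^{\mathrm{out}}$ solves the homogeneous equation $A_\gamma^- w=0$, then $w\in\ker A_\gamma^-$ and its $V$-coefficients all vanish; the map $\mathcal{J}$ constructed in the proof of Theorem \ref{tTh1} is injective, so $w=0$. Therefore two solutions in $\mathcal{H}_\gamma^{\mathrm{out}}$ differ by an element of $\ker A_\gamma^-\cap\mathcal{H}_\gamma^{\mathrm{out}}=\{0\}$. The estimate (\ref{15April7c}) then follows from the open mapping theorem: the linear map $f\mapsto w$ is a bijection between the Banach spaces $\mathcal{L}_\gamma^+(\Pi)$ and $\mathcal{H}_\gamma^{\mathrm{out}}$ (continuity of $f\mapsto w_0$ comes from Theorem \ref{T1a}/\ref{T3s}, and the correction term $\sum C^2_{k\tau} z_k^\tau$ depends continuously on $f$ because the $C^2_{k\tau}$ are extracted by the bounded functional $\mathcal{J}$), hence its inverse is bounded; uniformity in $(\varepsilon,\delta)$ over the compact parameter range $[0,\varepsilon_0]\times[-\delta_0,\delta_0]$ follows from continuity of the whole construction in those parameters, which was noted when the waves $\mathbf{w}_N^\pm$ and the scattering matrix were built.

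\emph{The coefficient formula.} To prove (\ref{April9d}), I would integrate by parts on the truncated strip $\Pi_R=(-R,R)\times(0,1)$, using that $Z_j^\tau$ solves the homogeneous problem (\ref{Dirac}) and that $\mathcal{D}+\delta\mathcal{P}-\omega$ is formally self-adjoint with the boundary conditions: $\int_{\Pi_R} f\cdot\overline{Z_j^\tau}\,dx\,dy = \int_{\Pi_R}\big((\mathcal{D}+\delta\mathcal{P}-\omega)w\big)\cdot\overline{Z_j^\tau}\,dx\,dy = Q_R(w,Z_j^\tau)$ in the limit $R\to\infty$ (the interior term from moving $\mathcal{D}$ onto $Z_j^\tau$ vanishes, and the $\delta\mathcal{P}$ and $\omega$ terms cancel since $\mathcal{P}$ is real). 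Then I evaluate $Q(w,Z_j^\tau)$ using the $Q$-orthogonality relations (\ref{eq:Qprop}): the $W$-part of $w$ pairs with the $W_j^\tau$-head of $Z_j^\tau$ to give $-iC^1_{j\tau}$, while the remainder terms ($V$-tails of $Z_j^\tau$, the $\mathcal{H}_\gamma^+$ parts, the fact that $w$ has no $V$-part) contribute nothing because $Q(W,V)=\mathbb{O}$ and terms with both factors decaying give zero. This yields $\int_\Pi f\cdot\overline{Z_j^\tau} = -iC^1_{j\tau}$.

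\emph{Main obstacle.} The delicate point is making the integration-by-parts/limiting argument for (\ref{April9d}) rigorous when $w\in\mathcal{H}_\gamma^{\mathrm{out}}$ has exponentially \emph{growing} components $W_N^\pm$ while $Z_j^\tau$ also has growing heads: one must check that the boundary terms $q_{\pm R}$ converge (they do, by the $R$-independence of $Q$ established for solutions of the homogeneous equation near $\pm\infty$, since both $w$ and $Z_j^\tau$ solve (\ref{homoDirac2}) for $|y|>R_0$) and that the cross terms between a growing wave in $w$ and a decaying tail in $Z_j^\tau$ — or the $\mathcal{H}_\gamma^+$ remainders — genuinely vanish in the limit. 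Tracking the exponential weights carefully (the product of an $e^{-\gamma|y|}$ factor with an $e^{+\gamma|y|}$ factor is merely bounded, so one needs the $Q$-orthogonality to kill such terms exactly rather than by decay) is where the bookkeeping must be done with care; everything else is a routine application of the Fredholm and open-mapping machinery.
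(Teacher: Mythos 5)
Your treatment of existence, uniqueness and the coefficient identity coincides with the paper's proof: existence by taking the solution with asymptotics (\ref{15April7a}) furnished by Theorem \ref{T3s2} and subtracting a combination of the $z_k^{\tau}$ to kill the $V$-coefficients; uniqueness by the injectivity of the map $\mathcal{J}$ from Theorem \ref{tTh1}; and (\ref{April9d}) by Green's formula on $\Pi_R$ against $\overline{Z_j^{\tau}}$, followed by $R\to\infty$ with the $Q$-orthogonality relations (\ref{W1c}), (\ref{W1ca}) — your remark that the cross terms of slowly growing waves (growth rate of order $\sqrt{\varepsilon}<\gamma$) against $\mathcal{H}_{\gamma}^{+}$ remainders decay is exactly the bookkeeping the paper leaves implicit. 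The one place you genuinely diverge is the estimate (\ref{15April7c}): you invoke the open mapping theorem and then assert uniformity in $(\varepsilon,\delta)$ "by continuity of the construction." As stated this is the weak link: the open mapping theorem yields a constant for each fixed $(\varepsilon,\delta)$, and upgrading to a uniform constant requires operator-norm continuity of the inverse over the parameter range in spaces ($\mathcal{H}_{\gamma}^{\mathrm{out}}$, built from the $\varepsilon$-dependent waves $W_j^{\tau}$, $V_j^{\tau}$) that themselves vary with $\varepsilon$ — an argument you gesture at but do not carry out. The paper avoids this entirely by running the logic in the opposite order: it first proves (\ref{April9d}), reads off $|C_{j\tau}^{1}|\leq C_{\gamma}\,||f;\mathcal{L}_{\gamma}^{+}(\Pi)||$ with an explicit constant, and then bounds the remainder $R$ by applying the uniform isomorphism of Theorem \ref{T1a} to the equation $(\mathcal{D}+(\delta\mathcal{P}-\omega)I)R=F$ with $||F;\mathcal{L}_{\gamma}^{+}(\Pi)||\leq c\,||f;\mathcal{L}_{\gamma}^{+}(\Pi)||$ as in (\ref{eq:fF}), (\ref{eq:Restimate}). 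That explicit route is what actually delivers the independence of $c$ from $\varepsilon$ and $\delta$ claimed in the theorem; if you keep your structure, you should either replace the open-mapping step by this direct estimate (note that (\ref{April9d}) is already at your disposal) or supply the missing uniform continuity argument, e.g.\ via the analyticity in $(\varepsilon,\delta)$ established in Theorem \ref{lemma_analytic-1} together with a fixed identification of the parameter-dependent spaces.
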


\begin{proof} {\em Existence.} According to Theorem \ref{T3s2}
there exists a solution to (\ref{April9b}) of the form (\ref{15April7a}).
Subtracting a linear combination of the elements ${\bf z}_{j}^{\pm}$,
we obtain a solution from ${\mathcal H}_{\gamma}^{out}$.

{\em Uniqueness} is proved in the same way as the isomorphism property
of the mapping ${\mathcal J}$ in Theorem \ref{tTh1}. 

To prove (\ref{April9d}) we multiply equation (\ref{April9b}) by
$\overline{Z_{j}^{\tau}}$and integrate over $\Pi_{R}=(-R,R)\times(0,1)$,
that leads to 
\[
\int_{\Pi_{R}}f\cdot\overline{Z_{j}^{\tau}}dxdy=Q_{R}(w,Z_{j}^{\tau}),
\]
where integration by parts is applied. Now using relation (\ref{eq:Zk})
and (\ref{eq:Qprop}) and sending $R$ to infinity we arrive at (\ref{April9d}).
From representation (\ref{April9d}) if follows that
\begin{equation}
|C_{j\tau}^{1}|\leq C_{\gamma}||f;\mathcal{L}_{\gamma}^{+}(\Pi)||.\label{eq:coeffestimate}
\end{equation}
Now estimate (\ref{15April7c}) follows from (\ref{eq:coeffestimate})
combined with (\ref{15April7a}) where $C_{j\tau}^{2}=0$. From (\ref{15April7a})
the remainder $R\in{\mathcal H}_{\gamma}^{+}$ satisfies 
\begin{equation*}
({\mathcal D}+(\delta\mathcal{P}-\omega) I)R=F,\label{eq:DF}
\end{equation*}
where $F:=f-\delta\mathcal{P}\sum_{j=1}^{N}\sum_{\tau=\pm}\Big(C_{j\tau}^{1}\chi_{\tau}{\bf w}_{j}^{-\tau}\Big)-\sum_{j=1}^{N}\sum_{\tau=\pm}\Big(C_{j\tau}^{1}[\mathcal{D},\chi_{\tau}]{\bf w}_{j}^{-\tau})$.
Therefore, we get
\begin{align}
||F;\mathcal{L}_{\gamma}^{+}(\Pi)|| & \leq||f;\mathcal{L}_{\gamma}^{+}(\Pi)||+\sum_{j=1}^{N}\sum_{\tau=\pm}\Big(C_{j\tau}^{1}||\delta\mathcal{P}\chi_{\tau}{\bf w}_{j}^{-\tau}||\Big)\nonumber \\
 & +\sum_{j=1}^{N}\sum_{\tau=\pm}\Big(C_{j\tau}^{1}||[\mathcal{D},\chi_{\tau}]{\bf w}_{j}^{-\tau}||\Big)\leq c||f;\mathcal{L_{\gamma}}^{+}(\Pi)||,\label{eq:fF}
\end{align}
where the last inequality follows from (\ref{eq:coeffestimate}). Moreover, it follows from
Theorem (\ref{T1a}) that
\begin{equation}
||R;L_{\gamma}^{\pm}(\Pi)||\leq c_{1}||F;L_{\gamma}^{\pm}(\Pi)||.\label{eq:Restimate}
\end{equation}
Combining (\ref{eq:Restimate}) with (\ref{eq:fF}), we get the estimate
\[
||R;\mathcal{H}_{\gamma}^{+}(\Pi)||\leq c||f;\mathcal{L}_{\gamma}^{+}(\Pi)||,
\]
\end{proof}
which together with (\ref{eq:coeffestimate}) leads to (\ref{15April7c}).

\subsection{Analyticity of the scattering matrix}\label{sub:AS}

We represent ${\bf S}$ as 
\begin{equation}
{\bf S}=\Bbb I+{\bf s},\;\;\;\mbox{or, equivalently,}\;\;{\bf S}_{j\tau}^{k\theta}=\delta_{k,j}\delta_{\tau,\theta}+{\bf s}_{j\tau}^{k\theta}.\label{Apr9e}
\end{equation}

\begin{theorem}\label{lemma_analytic-1} The scattering matrix ${\bf S}(\varepsilon,\delta)$
depends analytically on small parameters $\varepsilon\in[0,\varepsilon_{0}]$
and $\delta\in[-\delta_{0},\delta_{0}]$. Moreover, 
\begin{equation}
{\bf s}_{j\tau}^{k\theta}=-i\delta\int_{\Pi}\,\mathcal{P}{\bf w}_{j}^{\tau}\cdot\overline{{\bf w}_{k}^{\theta}}dxdy+O(\delta^{2}).\label{Apr8c}
\end{equation}
\end{theorem}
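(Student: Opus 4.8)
The plan is to establish analyticity first and then extract the leading term by a perturbative comparison with the unperturbed problem. For analyticity, I would observe that the coefficients $\mathbf{S}_{k\tau}^{j\theta}$ are, by Theorem~\ref{tTh1} (or more conveniently by the integral formula \eqref{April9d} applied to the basis vectors $z_k^\tau$), obtained by solving the inhomogeneous problem \eqref{April9b} with right-hand side $F_{k\tau}:=-[{\mathcal D},\chi_\tau]{\bf w}_k^\tau+\delta\mathcal{P}\chi_\tau{\bf w}_k^\tau$ (the source produced when one plugs the ansatz $V_k^\tau+(\text{remainder})$ into the equation) and reading off the asymptotic coefficients via Theorem~\ref{thrm_iso-1}. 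Since the waves ${\bf w}_j^\pm$ depend analytically on $\varepsilon$ (this was arranged in Sect.~\ref{sub:BCl} precisely to replace the non-analytic $w_N^\pm$ by the analytic combinations $\mathbf{w}_N^{\varepsilon\pm}$, hence the ${\bf w}_N^\pm$, and the oscillatory waves are manifestly analytic away from the threshold), the source $F_{k\tau}$ depends analytically on $(\varepsilon,\delta)$ in ${\mathcal L}_\gamma^+$, and the solution operator $A_\gamma^{-,-1}$ — equivalently the isomorphism of Theorem~\ref{thrm_iso-1} mapping $f\mapsto w\in{\mathcal H}_\gamma^{\rm out}$ — depends analytically on $(\varepsilon,\delta)$ because it is the inverse of the analytic family ${\mathcal D}+(\delta\mathcal{P}-\omega_\varepsilon)I$ restricted and corrected on finite-dimensional complements. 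Composing, $\mathbf{S}(\varepsilon,\delta)$ is analytic on $[0,\varepsilon_0]\times[-\delta_0,\delta_0]$.

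For the expansion \eqref{Apr8c}, I would use the formula $-iC_{j\tau}^1=\int_\Pi f\cdot\overline{Z_j^\tau}\,dxdy$ from Theorem~\ref{thrm_iso-1}. Apply it with $w=z_k^\tau-V_k^\tau$ (up to the smooth cutoffs), which solves \eqref{April9b} with $f=f_{k\tau}:=-({\mathcal D}+(\delta\mathcal{P}-\omega)I)(\chi_+ {\bf w}_k^\tau)$ supported in $|y|\le R_0$; this $f_{k\tau}$ equals $-[{\mathcal D},\chi_+]{\bf w}_k^\tau-\delta\mathcal{P}\chi_+{\bf w}_k^\tau$ because ${\bf w}_k^\tau$ solves the homogeneous unperturbed equation. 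Then $\mathbf{S}_{k\tau}^{j\theta}$ (for the outgoing index $j\theta$) is, up to the Kronecker contribution coming from the $-[{\mathcal D},\chi_+]$ term which reproduces $\delta_{k,j}\delta_{\tau,\theta}$ in the $\delta=0$ limit, given by $i\int_\Pi \delta\mathcal{P}\chi_+{\bf w}_k^\tau\cdot\overline{Z_j^\theta}\,dxdy$. At $\delta=0$ one has $Z_j^\theta=W_j^\theta+O(\text{off-diagonal})$ and ${\bf w}_k^\tau\cdot\overline{{\bf w}_j^\theta}$ integrates, against the cutoff, to the leading order; using $Z_j^\theta=\chi_+{\bf w}_j^\theta+O(\delta)$ and $\chi_+^2=\chi_+$ where it matters, the first-order term becomes $-i\delta\int_\Pi\mathcal{P}\,{\bf w}_j^\tau\cdot\overline{{\bf w}_k^\theta}\,dxdy$, with the $O(\delta^2)$ remainder absorbing both the $O(\delta)$ correction to $Z_j^\theta$ and the second iteration of the Born series. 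Care is needed with the index placement and complex conjugation to match \eqref{Apr8c} exactly, and with the fact that $\mathcal{P}$ has compact support so all integrals converge despite ${\bf w}_N^\pm$ growing in $y$.

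The main obstacle I anticipate is bookkeeping rather than conceptual: isolating the Kronecker-delta part of $\mathbf{S}$ cleanly from the genuinely $O(\delta)$ scattering part, and verifying that the substitution $Z_j^\theta\to\chi_+{\bf w}_j^\theta$ in the integral formula only incurs an $O(\delta^2)$ error — this uses that $\tilde Z_j^\theta\in{\mathcal H}_\gamma^+$ is itself $O(\delta)$ in the appropriate norm, which follows from the uniqueness/estimate \eqref{15April7c} applied to the defining equation of $Z_j^\theta$. A secondary point requiring attention is uniformity of all constants in $\varepsilon$ near the threshold $\varepsilon=0$, where ${\bf w}_N^\pm$ degenerate; but this uniformity was already secured in Theorem~\ref{thrm_iso-1}, so it transfers directly. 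I would also double-check that analyticity at the closed endpoint $\varepsilon=0$ is meant in the one-sided sense (analytic on a complex neighborhood of $[0,\varepsilon_0]$ intersected with the relevant domain), which is what the construction in Sect.~\ref{sub:solexp}--\ref{sub:BCl} actually delivers.
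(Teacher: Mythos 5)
Your analyticity argument follows essentially the paper's route (the waves $\mathbf{w}_N^{\pm}$ were built analytic in $\varepsilon$ exactly for this purpose, and the $\delta$-dependence is handled by a Neumann/Born series through the solution operator with outgoing radiation conditions), and that part is fine, if somewhat loosely justified where you invert a family acting between $\varepsilon$-dependent spaces --- the paper avoids this by writing the inverse of $A_\gamma^{\rm out}(\varepsilon,0)$ explicitly through the coefficient formula and the isomorphism of Theorem \ref{T1a} on fixed spaces.

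The extraction of the first-order formula (\ref{Apr8c}), however, has a genuine gap in the error accounting. By subtracting the cutoff wave $V_k^\tau=\chi_\tau\mathbf{w}_k^\tau$ rather than the full wave $\mathbf{w}_k^\tau$, you create the source $f_{k\tau}=-[{\mathcal D},\chi_\tau]\mathbf{w}_k^\tau-\delta\mathcal{P}\chi_\tau\mathbf{w}_k^\tau$, whose commutator part is $O(1)$ in $\delta$. Three consequences follow. First, the $\delta\to0$ limit of $Z_j^\theta$ in (\ref{eq:Zk}) is the full wave $\mathbf{w}_j^\theta$ (which contains both $V_j^\theta$ and $W_j^\theta$), not $\chi_+\mathbf{w}_j^\theta$. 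Second, the $O(\delta)$ correction $Z_j^\theta-\mathbf{w}_j^\theta$ pairs in (\ref{April9d}) against the $O(1)$ commutator source and therefore contributes at first order in $\delta$; it cannot be absorbed into the $O(\delta^2)$ remainder as you claim. Third, by (\ref{eq:supP}) and the definition of $\chi_\pm$, the cutoff transitions precisely inside $[-R_0,R_0]\times[0,1]\supseteq\operatorname{supp}\mathcal{P}$, so $\chi_\tau\not\equiv1$ on $\operatorname{supp}\mathcal{P}$ and ``$\chi_+^2=\chi_+$ where it matters'' fails; the integral $\int\mathcal{P}\chi_\tau\mathbf{w}_k^\tau\cdot\overline{\mathbf{w}_j^\theta}$ is not the integral in (\ref{Apr8c}). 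In fact, a Green's-formula computation with $(\,{\mathcal D}-\omega)(Z_j^\theta-\mathbf{w}_j^\theta)=-\delta\mathcal{P}Z_j^\theta$ shows that the $\chi_\tau$-weighted potential integral cancels against part of the commutator contribution, and the first-order information resides in a $Q$-pairing at infinity --- so your bookkeeping, taken literally, does not produce (\ref{Apr8c}). The paper's choice repairs exactly this: setting $\zeta_j^\tau=z_j^\tau-\mathbf{w}_j^\tau$ removes the commutator entirely, the source becomes purely $-\delta\mathcal{P}\mathbf{w}_j^\tau$, the correction is solved by the Neumann series in ${\mathcal H}_\gamma^{\rm out}$, and the coefficients are read off against the exact unperturbed waves $\overline{\mathbf{w}_k^\theta}$ (no $O(\delta)$ approximation of the test functions), so the remainder is genuinely $O(\delta^2)$. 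Your argument can be salvaged either by adopting that subtraction or by explicitly computing the commutator-times-correction term, but as written the step claiming the remainder is $O(\delta^2)$ is false.
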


\begin{proof} Consider the operator 
\[
A_{\gamma}^{out}(\varepsilon,0)\,:\,{\mathcal H}_{\gamma}^{out}\to\mathcal{L}_{\gamma}^{+}(\Pi).
\]
Then it is isomorphism and 
\[
w=\Big(A_{\gamma}^{out}(\varepsilon,0)\Big)^{-1}f=\sum_{j=1}^{N}\sum_{\tau=\pm}C_{j\tau}^{1}W_{j}^{{\rm \tau}}+R=:U+R
\]
is given by 
\[
C_{j\tau}^{1}(\varepsilon)=i\int_{\Pi}f\cdot\overline{{\bf w}_{j}^{\tau}}dxdy,\;\; R=\Big(A_{\gamma}^{+}(\varepsilon,0)\Big)^{-1}(f-({\mathcal D}-\omega)U).
\]
We note that the vector function $({\mathcal D}-\omega)U$ has a compact
support in $\overline{\Pi}$ and is analytic in $\varepsilon$. The coefficients
$C_{j\tau}(\varepsilon)$ depend also analytically on $\varepsilon$. Thus
the inverse operator $\Big(A_{\gamma}^{out}(\varepsilon,0)\Big)^{-1}$
analytically depends on $\varepsilon$.

Let $\zeta_{j}^{\tau}=z_{j}^{\tau}-{\bf w}_{j}^{{\rm \tau}}$. Then
this vector function satisfies 
\[
({\mathcal D}+(\delta\mathcal{P}-\omega) I)\zeta_{j}^{\tau}=-\delta\mathcal{P}{\bf w}_{j}^{{\rm \tau}}\;\;\mbox{in \ensuremath{\Pi\:}and}\;\;\zeta_{j}^{\tau}\in{\mathcal H}_{\gamma}^{out}.
\]
One can check that the solution to this problem is given by the folowing
Neumann series 
\[
\zeta_{j}^{\tau}=\sum_{k=1}^{\infty}(-(A_{\gamma}^{out}(\varepsilon,0))^{-1}\delta P)^{k}{\bf w}_{j}^{\tau},
\]
which represents an analytic function with respect to $\varepsilon$
and $\delta$.

Furthermore, 
\[
({\mathcal D}-\omega)\zeta_{j}^{\tau}=-\delta\mathcal{P}\sum_{k=0}^{\infty}(-(A_{\gamma}^{out}(\varepsilon,0))^{-1}\delta\mathcal{P})^{k}{\bf w}_{j}^{\tau}.
\]
According to (\ref{eq:Qprop}) and (\ref{Apr9e}) 
\[
{\bf s}_{j\tau}^{k\theta}=-i\delta\int_{\Pi}\mathcal{P}\sum_{k=0}^{\infty}(-(A_{\gamma}^{out}(\varepsilon,0))^{-1}\delta\mathcal{P})^{k}{\bf w}_{j}^{{\rm \tau}}\cdot\overline{{\bf w}_{k}^{\theta}}dxdy,
\]
which implies (\ref{Apr8c}). \end{proof}

\section{Trapped modes}\label{sec:TrappedModes}

\subsection{Necessary and sufficient conditions for the existence of trapped modes solutions}\label{sub:NS}

In this section we present a necessary and sufficient condition for existence of a trapped mode in terms of the scattering matrix.

As before we consider problem (\ref{DiracP1}), (\ref{Dirac}) assuming that $|\delta|\leq\delta_{0}$
and $\varepsilon\in(0,\varepsilon_{0}]$.

\begin{theorem}\label{ThTrap} Problem {\rm(\ref{DiracP1})}, {\rm(\ref{Dirac})} has a non-trivial
solution in ${\mathcal H}_{0}$ (a trapped mode), if and only if the matrix
\[
{\bf S}_{\dagger\dagger}+d(\varepsilon)\Upsilon,\;\;\;\Upsilon=\left(\begin{array}{cc}
0 & 1\\
1 & 0
\end{array}\right),
\]
is degenerate. Here $d$ is the quantity defined by {\rm(\ref{Apr9aa})}.

\end{theorem}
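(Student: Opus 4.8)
The plan is to characterize trapped modes as solutions in $\mathcal{H}_\gamma^+$ of the homogeneous problem and to translate this into a linear-algebra condition on the scattering coefficients attached to the two exponentially growing waves $W_N^\pm$. First I would observe that a trapped mode $w\in\mathcal{H}_0$ is, by elliptic-type regularity away from the support of $\mathcal P$ together with the exponential decay of all genuinely $L_2$ solutions of the model problem (the real $\lambda_j^\pm$ decay, while only the complex pair $\lambda_\pm=1\pm i\sqrt\varepsilon\lambda_1+\dots$ survives near the threshold), actually an element of $\mathcal{H}_\gamma^+$: the oscillatory components must be absent since they do not lie in $L_2$, so the only admissible tails are built from $\mathbf{w}_N^{\pm}$, which decay like $e^{\mp\sqrt\varepsilon\lambda_1|y|}$. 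Conversely any nonzero $w\in\mathcal{H}_\gamma^+\cap\ker A_\gamma^-$ satisfying the boundary conditions is a trapped mode. Hence the question is: for which $(\varepsilon,\delta)$ does $\ker A_\gamma^- $ contain a nonzero element lying in the smaller space $\mathcal{H}_\gamma^+$?

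Next I would use the basis $\{z_k^\tau\}$ of $\ker A_\gamma^-$ from Theorem~\ref{tTh1}, written in block form as ${\bf z}={\bf V}+{\bf S}{\bf W}+{\bf r}$ with ${\bf r}\in\mathcal{H}_\gamma^+$. A general element of the kernel is ${\bf c}\cdot{\bf z}$ for a coefficient vector ${\bf c}=({\bf c}_\bullet,{\bf c}_\dagger)\in\mathbb C^{2N-1}$; it belongs to $\mathcal{H}_\gamma^+$ exactly when the coefficients of all the surviving (non-decaying-in-$\mathcal{H}_\gamma^+$) waves vanish. The oscillatory waves $W_j^\pm,V_j^\pm$ ($j\le N-1$) and the growing waves $W_N^\pm,V_N^\pm$ are the obstructions. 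The key point is that $\mathbf{w}_N^{\pm}$ are \emph{not} in $\mathcal{H}_\gamma^+$ but a specific linear combination of $V_N^\pm$ and $W_N^\pm$ is; this is precisely where $d(\varepsilon)$ enters. Recall from Sect.~\ref{sub:BCl}–\ref{sub:PC} that ${\bf w}_N^{+}=\alpha_1 w_N^+ + \beta_1 w_N^-$ and ${\bf w}_N^{-}=\alpha_2 w_N^+ + \beta_2 w_N^-$ with $d=\alpha_1/\alpha_2=\beta_2/\beta_1$, $|d|=1$. Since only the $w_N^+$-part decays at $+\infty$ and only the $w_N^-$-part decays at $-\infty$ (or vice versa, with the cutoffs $\chi_\pm$ as in the definitions of $V_N^\pm, W_N^\pm$), the function $\chi_+{\bf w}_N^{-} - d\,\chi_+{\bf w}_N^{+}$ — equivalently $V_N^- - d\,W_N^-$ up to naming — decays on both ends, and similarly on the $-\infty$ side. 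Working this out, the condition for ${\bf c}\cdot{\bf z}\in\mathcal{H}_\gamma^+$ becomes: (a) all coefficients of oscillatory $V$- and $W$-waves vanish, which by Theorem~\ref{tTh1} forces ${\bf c}_\bullet=0$ after using that ${\bf S}$ couples things, and (b) the residual $2\times2$ system coupling ${\bf c}_\dagger$ through ${\bf S}_{\dagger\dagger}$, ${\bf S}_{\dagger\bullet}$ and the combination matrix $d\Upsilon$ has a nontrivial solution.

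More carefully: imposing that the $V$-components of ${\bf c}\cdot{\bf z}$ vanish except in the pattern allowed by the $d$-combination gives ${\bf c}_\bullet=0$ (the oscillatory $V_j^\pm$ appear only in $z_j^\pm$ itself, never in the $W$-part), and then the $W_j^\theta$-coefficients for $j\le N-1$ must vanish, i.e. ${\bf S}_{\bullet\dagger}{\bf c}_\dagger=0$ — but here one should instead carry both blocks and observe that after eliminating ${\bf c}_\bullet$ via the unitarity and block structure of ${\bf S}$ the only genuine constraint is on the $\dagger$-block, yielding $({\bf S}_{\dagger\dagger}+d(\varepsilon)\Upsilon){\bf c}_\dagger=0$. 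Thus a trapped mode exists iff this $2\times2$ matrix is degenerate. I would close by noting that $\Upsilon$ swaps the $\pm$ indices precisely because passing from $\mathbf{w}_N^{\varepsilon\pm}$-type waves to the decaying combinations interchanges the roles of $w_N^+$ and $w_N^-$, and $d$ is the ratio that makes the interchange consistent; the identity $\alpha_1/\alpha_2=\beta_2/\beta_1$ from Proposition~\ref{Pto2s} is what guarantees a single scalar $d$ governs both ends simultaneously.

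The main obstacle I anticipate is the bookkeeping in step (b): correctly identifying which linear combination of $W_N^\pm$ and $V_N^\pm$ lies in $\mathcal{H}_\gamma^+$ on \emph{both} half-strips at once, and showing that eliminating the $2N-3$ oscillatory coefficients and the ${\bf c}_\bullet$ block leaves exactly the stated $2\times2$ condition rather than something larger. This requires using the unitarity of ${\bf S}$ (Proposition~\ref{sdeltaunitary}) and the biorthogonality relations \eqref{eq:qnew}, \eqref{W1c}, \eqref{W1ca} to decouple the blocks cleanly; the $d(\varepsilon)$ factor and the specific off-diagonal form of $\Upsilon$ fall out of the expansions \eqref{mf2-2}–\eqref{eq:alpha2} together with the fact that $\overline{\beta_i}=\alpha_i$ forces the relevant combination to be real-linear in a way encoded by $\Upsilon$.
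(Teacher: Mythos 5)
Your proposal follows essentially the same route as the paper's proof: expand a candidate trapped mode over the basis $z_k^\tau$ of $\ker A_\gamma^-$ from Theorem \ref{tTh1}, kill the oscillatory coefficients (${\bf c}_\bullet=0$), match the growing exponentials at $\pm\infty$ via (\ref{mf2-2}), (\ref{mf1-1-2}) and $d=\alpha_1/\alpha_2=\beta_2/\beta_1$ to obtain degeneracy of ${\bf S}_{\dagger\dagger}+d\Upsilon$, and use unitarity of ${\bf S}$ together with $|d|=1$ to see that the remaining condition ${\bf c}_\dagger{\bf S}_{\dagger\bullet}=0$ is automatic rather than an independent constraint, which is exactly what makes the degeneracy criterion both necessary and sufficient. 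The only items to tighten are bookkeeping details you already flag yourself: ${\bf w}_N^{\pm}$ grow at both ends (only the combinations fixed by $d$, more precisely by $\bar d=d^{-1}=\alpha_2/\alpha_1$, decay), the index order in ${\bf c}_\dagger{\bf S}_{\dagger\bullet}$ versus ${\bf S}_{\bullet\dagger}{\bf c}_\dagger$, and the explicit one\mbox{-}line unitarity computation $|{\bf c}_\dagger|^2=|{\bf c}_\dagger{\bf S}_{\dagger\bullet}|^2+|{\bf c}_\dagger{\bf S}_{\dagger\dagger}|^2=|{\bf c}_\dagger{\bf S}_{\dagger\bullet}|^2+|{\bf c}_\dagger|^2$ that closes the argument.
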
 \begin{proof}
If $w\in{\mathcal H}_{0}$ is a solution
to (\ref{Dirac}) then certainly $w\in\ker A_{-\gamma}$ and hence \footnote{As before {\bf V}+{\bf S}{\bf W}+{\bf r} is a column vector.}
\[
w=a({\bf V}+{\bf S}{\bf W}+{\bf r}),
\]
where $a=(a_{\bullet},a_{\dagger})\in\Bbb C^{2N-2}$ and ${\bf V}$,
${\bf W}$ and ${\bf r}$ are the vector functions from the representation
of the kernel of $A_{-\gamma}$ in (\ref{z_sol}). Using the splitting
of vectors and the scattering matrix in $\bullet$ and $\dagger$
components we write the above relation as 
\begin{equation*}
w=a_{\bullet}({\bf V_{\bullet}}+{\bf S}_{\bullet\bullet}{\bf W}_{\bullet}+{\bf S}_{\bullet\dagger}{\bf W}_{\dagger}+{\bf r}_{\bullet})+a_{\dagger}({\bf V_{\dagger}}+{\bf S}_{\dagger\dagger}{\bf W}_{\dagger}+{\bf S}_{\dagger\bullet}{\bf W}_{\bullet}+{\bf r}_{\dagger}).\label{Apr10a}
\end{equation*}
The first term in the right-hand side contains waves oscillating at
$\pm\infty$ and to guarantee vanishing of this term there we must
require $a_{\bullet}=0$. Since ${\bf r}$ vanish at $\pm\infty$
the requirement $w\in{\mathcal H}_{0}$ is equivalent to the following
demand: 
\begin{equation}
a_{\dagger}({\bf V_{\dagger}}+{\bf S}_{\dagger\dagger}{\bf W}_{\dagger}+{\bf S}_{\dagger\bullet}{\bf W}_{\bullet})\;\;\mbox{vanishes at \ensuremath{\pm\infty}.}\label{Apr10b}
\end{equation}
Using that ${\bf S}$ is unitary and $a_{\bullet}=0$, we get 
\[
|a_{\dagger}|^{2}=|a|^{2}=|{\bf S}a|^{2}=|a_{\dagger}{\bf S}_{\dagger\bullet}|^{2}+|a_{\dagger}{\bf S}_{\dagger\dagger}|^{2}=|a_{\dagger}{\bf S}_{\dagger\bullet}|^{2}+|da_{\dagger}|^{2}.
\]
Since $|d|=1$ this implies $a_{\dagger}{\bf S}_{\dagger\bullet}=0$
and the relation (\ref{Apr10b}) takes the form 
\begin{equation}
a_{\dagger}({\bf V_{\dagger}}+{\bf S}_{\dagger\dagger}{\bf W}_{\dagger})\;\;\mbox{ vanishes at \ensuremath{\pm\infty}.}\label{Apr10c}
\end{equation}
Taking into account representations (\ref{mf2-2}) and (\ref{mf1-1-2})
and equating the coefficients for increasing exponents at $\pm\infty$
we arrive at the relations 
\[
a_{1}\alpha_{2}+(a_{1},a_{2}){\bf S}_{\dagger\dagger}(0,\alpha_{1})^{T}=0,\;\;\; a_{2}\beta_{1}+(a_{1},a_{2}){\bf S}_{\dagger\dagger}(\beta_{2},0)^{T}=0,
\]
where $a_{\dagger}=(a_{1},a_{2})$. Due to the definition of $d$,
this is equivalent to $a_{\dagger}({\bf S}_{\dagger\dagger}+d\Upsilon)=0$
and then expression (\ref{Apr10c}) decays exponentially at $\pm\infty$.
\end{proof}

\subsection{Proof of Theorem {\ref{Texistence}}}\label{sub:PT1}

To prove Theorem \ref{Texistence} it is sufficient to construct a potential $\mathcal{P}$ (subject to certain conditions) which produces a trapped mode. According to Theorem \ref{ThTrap}, we must find a solution to the equation 
\begin{equation}
\det({\bf S}_{\dagger\dagger}+d\Upsilon)=0.\label{Apr10ab}
\end{equation}
To analyse this equation, we write 
\begin{align}
d(\varepsilon) & =e^{i\sigma},\;\;\;{\bf S}=\Bbb I+{\bf s},\:\:\:\mathbf{s}=:-i\delta s,\label{eq:representations}
\end{align}
where $\sigma$ is a real number close to $0$ and as according to (\ref{Apr8c}) ${\bf s}$ is of order $\delta$, then a newly introduced matrix $s$ is of order $1$. To get a relation between $\sigma$
and $\varepsilon$, we can use (\ref{eq:dexpansion}) which gives

\begin{equation}
\cos\sigma =1+O(\varepsilon)\nonumber,\;\;\;
\sin\sigma= \sqrt{\varepsilon}\lambda_{1}\frac{a_{0}}{b_{0}}+O(\varepsilon^{3/2})=:\sqrt{\varepsilon}C_{d}(1+O(\varepsilon)).\label{eq:sigmasincos}
\end{equation}
We will seek for $\mathcal{P}$ and small $\delta>0$ that fulfil the
relations 
\begin{equation}
s_{\dagger\bullet}=0\;\;\;\mbox{and}\;\;\; s_{N-}^{N+}=0.\label{Apr18a}
\end{equation}
Since ${\bf S}{\bf S}^{*}={\bf S}^{*}{\bf S}=\Bbb I$, we have that
$s_{\bullet\dagger}=0$, $s_{N-}^{N+}=0$ and 
\[
|1-i\delta s_{N-}^{N-}|=|1-i\delta s_{N+}^{N+}|=1.
\]
Thus, (\ref{Apr10ab}) becomes 
\[
(1-i\delta s_{N+}^{N+})(1-i\delta s_{N-}^{N-})=d^{2}.
\]
Since the norm of these vectors is $1$ and both of them close to
$1$ this equation is equivalent to 
\begin{equation}
\Im(1-i\delta s_{N+}^{N+})(1-i\delta s_{N-}^{N-})=\Im d^{2}.\label{Apr18b}
\end{equation}
Now to solve this equation, we fix $\delta=\sin\sigma$, that according
to the expansion (\ref{eq:sigmasincos}) gives
$\delta=\sqrt{\varepsilon}C_{d}(1+O(\varepsilon))$ 
and (\ref{Apr18b}) becomes
\begin{equation}
-\Re(s_{N+}^{N+}+s_{N-}^{N-})-\delta\Im(s_{N+}^{N+}s_{N-}^{N-})=2\cos\sigma=2+O(\varepsilon).\label{Apr19a}
\end{equation}

Let us proceed and write equations (\ref{Apr18a}) and (\ref{Apr19a})
as a system, using the following asymptotic formula
\begin{equation}
s_{j\tau}^{k\theta}(\delta\mathcal{P})=\int_{\Pi}\,\overline{{\bf w}_{k}^{\theta}}\mathcal{P}{\bf w}_{j}^{\tau}dxdy+O(\delta).\label{Apr8cy}
\end{equation}
which follows from (\ref{Apr8c}) and (\ref{eq:representations}).
We obtain the system of $4(2N-3)+3$ equations
\begin{equation}
\Re s_{\dagger\bullet}(\delta\mathcal{P})=0,\;\;\Im s_{\dagger\bullet}(\delta\mathcal{P})=0,\label{Apr11aa-1}
\end{equation}
\begin{equation}
\Re s_{N-}^{N+}(\delta\mathcal{P})=0,\;\;\Im s_{N-}^{N+}(\delta\mathcal{P})=0,\label{eq:Apr11bb-1}
\end{equation}
and
\begin{equation}
2\cos\sigma+\Re(s_{N+}^{N+}(\delta\mathcal{P})+s_{N-}^{N-}(\delta\mathcal{P}))+\delta\Im(s_{N+}^{N+}(\delta\mathcal{P})s_{N-}^{N-}(\delta\mathcal{P}))=0.\label{Apr11ab-1}
\end{equation}
To change those equations from vector to scalar notation, we introduce
set of $4(2N-3)+3$ indices:
\begin{align*}
\mathcal{I} & =\Big\{\alpha=(j,\tau,\theta,\Xi):\\
 & j=1;\,\tau=-;\,\theta=\{+,-\};\,\Xi=\{\Re,\,\Im\};\\
 & j=2,\dots,N-1;\,\tau,\theta=\{+,-\};\,\Xi=\{\Re,\,\Im\};\\
 & j=N;\,\tau=+;\,\theta=-;\,\Xi=\{\Re,\,\Im\};\\
 & j=N;\,\tau=+;\,\theta=+;\,\Xi=\Re\Big\}.
\end{align*}
The indices with $j=1,\dots,N-1$ are related to equation (\ref{Apr11aa-1}),
the indices with $j=N,\,\tau=+,\,\theta=-$ correspond to (\ref{eq:Apr11bb-1})
and the last index $(N,+,+,\Re)$ corresponds to (\ref{Apr11ab-1}).

From the number of equations follows that the potential can be chosen
to have the following form:
\begin{align*}
\mathcal{P}(x,y) & =\Phi(x,y)+\sum_{\alpha\in\mathcal{I}}\eta^{\alpha}\Psi^{\alpha}(x,y)\label{Apr10aa}
\end{align*}
where the functions $\Phi$, $\{\Psi^{\alpha}\}_{\alpha\in\mathcal{I}}$
are continuous, real valued with compact support
in $[-R_{0},R_{0}]\times[0,1]$. The functions are assumed to be fixed
and are subject to a set of conditions that is presented later on
in this section. The unknown coefficients can be chosen from the Banach
Fixed Point Theorem. Using indices $\mathcal{I}$ and (\ref{Apr8cy})
we define 
\[
s_{\alpha}:=\Xi s_{j\tau}^{N\theta},\:\:\:\alpha\neq(N,+,+,\Re);\:\:\: s_{\alpha}:=\Re(s_{N+}^{N+}+s_{N-}^{N-})+\delta\Im(s_{N+}^{N+}\cdot s_{N-}^{N-})),\:\:\:\alpha=(N,+,+,\Re)
\]
 and 
\[
\mathbf{\upsilon_{\alpha}}:=\Xi\Big({\bf w}_{j}^{\tau}\cdot\overline{{\bf w}_{N}^{\theta}}\Big),\:\:\:\alpha\neq(N,+,+,\Re);\:\:\:\mathbf{\upsilon_{\alpha}}:=\Big({\bf w}_{N}^{+}\cdot\overline{{\bf w}_{N}^{+}}+{\bf w}_{N}^{-}\cdot\overline{{\bf w}_{N}^{-}}\Big),\:\:\:\alpha=(N,+,+,\Re).
\]
Now we write (\ref{Apr8cy}) as
\begin{align*}
s_{\alpha}\Big(\delta(\Phi+\sum_{\beta\in\mathcal{I}}\eta^{\beta}\Psi^{\beta})\Big) & =\int_{\Pi}\,(\Phi+\sum_{\beta\in\mathcal{I}}\eta^{\beta}\Psi^{\beta})\upsilon_{\alpha}dxdy-\delta\mu_{\alpha}(\delta,\boldsymbol{\eta})=0,\\
 & \alpha\neq(N,+,+,\Re),
\end{align*}
\begin{align*}
s_{\alpha}\Big(\delta(\Phi+\sum_{\beta\in\mathcal{I}}\eta^{\beta}\Psi^{\beta})\Big) & =2(1-\cos\sigma)+\int_{\Pi}\,(\Phi+\sum_{\beta\in\mathcal{I}}\eta^{\beta}\Psi^{\beta})\upsilon_{\alpha}dxdy\\
 & -\delta\mu_{\alpha}(\delta,\boldsymbol{\eta}),\:\:\:\alpha=(N,+,+,\Re)
\end{align*}
Then (\ref{Apr11aa-1}), (\ref{eq:Apr11bb-1}), (\ref{Apr11ab-1}) 
are combined to
\begin{equation}
\mathcal{M}(\delta,\boldsymbol{\eta}):={\bf \Phi}+{\mathcal A}\boldsymbol{\eta}-\delta\boldsymbol{\mu}(\delta,\boldsymbol{\eta})=-2\delta_{(N,+,+,\Re)}^{\alpha},\label{matrixeq}
\end{equation}
with a vector $\mathcal{M}=\{\mathcal{M}_{\alpha}\}_{\alpha\in\mathcal{I}}$,
a vector $\boldsymbol{\Phi}=\{\Phi_{\alpha}\}_{\alpha\in\mathcal{I}}$
with the elements 
\begin{align}
\Phi_{\alpha} & =\int_{\Pi}\,\Phi\upsilon_{\alpha}dxdy,\label{eq:phimultiplicands}
\end{align}
a matrix ${\mathcal A=}\{\mathcal{A}_{\alpha}^{\beta}\}_{\alpha,\beta\in\mathcal{I}}$
with elements given by 
\begin{align*}
\mathcal{A}_{\alpha}^{\beta} & =\int_{\Pi}\,\Psi^{\beta}\upsilon_{\alpha}dxdy,
\end{align*}
a vector $\boldsymbol{\eta}=\{\eta^{\alpha}\}_{\alpha\in\mathcal{I}}$
with real unknown coefficients and a vector $\boldsymbol{\mu}=\{\mathcal{\mu}_{\alpha}\}_{\alpha\in\mathcal{I}}$
that depends on $\delta$ and $\boldsymbol{\eta}$ analytically (analyticity
follows form Theorem \ref{lemma_analytic-1}). 

Now our goal is to solve system (\ref{matrixeq})
with respect to $\boldsymbol{\eta}$. We will reach it in three steps.
First, we eliminate constant $-2$ in the right-hand side in (\ref{matrixeq})
by an appropriate choice of function $\Phi$. Secondly, we choose
functions $\{\Psi^{\alpha}\}_{\alpha\in\mathcal{I}}$ in such a
way that ${\mathcal A}$ is unit and our system becomes nothing but
$\boldsymbol{\eta}=f(\boldsymbol{\eta})$$ $ (with a certian small
function $f$) and by Banach Fixed Point Theorem is solvable. 
The choice of function $\Phi$ is the following
\begin{equation}
\Phi_{\alpha}=0,\,\,\,\alpha\neq(N,+,+,\Re);\:\:\:\Phi_{\alpha}=-2,\,\,\,\alpha=(N,+,+,\Re)\label{eq:phiequations}
\end{equation}
and it is possible due to the following Lemma.

\begin{lemma} Functions
\begin{equation}
\Re{\bf w}_{N}^{\theta}\cdot\overline{{\bf w}_{j}^{\tau}},\;\Im{\bf w}_{N}^{\theta}\cdot\overline{{\bf w}_{j}^{\tau}},\;\Re{\bf w}_{N}^{-}\cdot\overline{{\bf w}_{N}^{+}},\;\Im{\bf w}_{N}^{-}\cdot\overline{{\bf w}_{N}^{+}},\;({\bf w}_{N}^{+}\cdot\overline{{\bf w}_{N}^{+}}+{\bf w}_{N}^{-}\cdot\overline{{\bf w}_{N}^{-}})\label{eq:flinindep}
\end{equation}
with $k=1,2,\dots,N-1$, $\tau,\theta=\pm$ and as before for $k=1$
there is only $\tau=-$ , are linearly independent.
\label{Lemma2}

\end{lemma}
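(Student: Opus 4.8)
The plan is to reduce the claimed linear independence of the $4(2N-3)+3$ real functions in \eqref{eq:flinindep} to a statement about distinct exponential frequencies in the $y$-variable, and then to a statement about linear independence of explicit trigonometric polynomials in $x$. The starting point is that each basic wave $\mathbf{w}_j^\tau$ has the separated form $e^{-i\lambda_j^\tau y}(\mathcal U_j^\tau(x),\mathcal V_j^\tau(x))$ for $j\le N-1$ (with real $\lambda_j^\tau$, all distinct by the labelling in Sect.~\ref{sub:sol}), while for $j=N$ the waves $\mathbf{w}_N^\pm$ are fixed linear combinations of $w_N^+$ and $w_N^-$, which carry $e^{-i\lambda_+ y}$ and $e^{-i\lambda_- y}$ with $\lambda_\pm = 1\pm i\sqrt\varepsilon\lambda_1+\varepsilon\lambda_2+O(\varepsilon^{3/2})$; in particular $\Re\lambda_\pm \to 1$. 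Thus a product $\mathbf{w}_j^\tau\cdot\overline{\mathbf{w}_k^\theta}$ carries the $y$-frequency $e^{-i(\lambda_j^\tau-\overline{\lambda_k^\theta})y}$.

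First I would organize the functions according to which pair $(j,k)$ of indices in $\{1,\dots,N\}$ they involve. The functions associated to the block $s_{\dagger\bullet}$ are built from $\mathbf{w}_j^\tau\cdot\overline{\mathbf{w}_N^\theta}$ with $j\le N-1$; the ones in \eqref{eq:Apr11bb-1} and \eqref{Apr11ab-1} come from $\mathbf{w}_N^\theta\cdot\overline{\mathbf{w}_N^{\theta'}}$. For $j\le N-1$ the relevant $y$-exponent is $e^{-i(\lambda_j^\tau-\overline{\lambda_\pm})y}=e^{-i(\lambda_j^\tau-1)y}e^{\mp\sqrt\varepsilon\lambda_1 y}+O(\varepsilon)$, whose real part is a genuinely oscillatory-times-decaying/growing profile; crucially, for distinct $(j,\tau)$ the real parts $\lambda_j^\tau-1$ are distinct and nonzero, so a vanishing real linear combination forces, frequency by frequency (using linear independence of $\{\cos(\mu y),\sin(\mu y)\}$ over distinct $\mu>0$ together with the two decay rates $e^{\pm\sqrt\varepsilon\lambda_1 y}$), that the coefficient of each real/imaginary part of $\mathbf{w}_j^\tau\cdot\overline{\mathbf{w}_N^\theta}$ vanish. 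This kills all indices with $j\le N-1$.

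It then remains to treat the purely $j=k=N$ part: the three functions $\Re(\mathbf{w}_N^-\cdot\overline{\mathbf{w}_N^+})$, $\Im(\mathbf{w}_N^-\cdot\overline{\mathbf{w}_N^+})$ and $\mathbf{w}_N^+\cdot\overline{\mathbf{w}_N^+}+\mathbf{w}_N^-\cdot\overline{\mathbf{w}_N^-}$. Here I would use the leading-order expansions \eqref{eq:w+approx}, \eqref{eq:w-approx}, namely $\mathbf{w}_N^\pm = (A\text{ or }C)\,w_N^0 + iB\,w_N^1 + O(\varepsilon)$ with $A\ne C$ (comparing the explicit constants $A,C$ shows $A-C = (-1)^N(-6\omega_N^5+2\omega_N^3+3\omega_N^2+2\omega_N+5-\ldots)\cdot(\ldots)\ne 0$ for $\omega_N>1$), together with the $q$-form values $q(w_N^0,w_N^0)=0$, $q(w_N^0,w_N^1)=\tfrac{1}{2\omega_N}$, $q(w_N^1,w_N^1)=\tfrac{i}{6\omega_N(\omega_N^2-1)}$ from Sect.~\ref{sub:BC}. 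Since $w_N^0$ is bounded in $y$ and $w_N^1$ grows linearly, a vanishing real combination $c_1\Re(\mathbf{w}_N^-\cdot\overline{\mathbf{w}_N^+})+c_2\Im(\mathbf{w}_N^-\cdot\overline{\mathbf{w}_N^+})+c_3(\ldots)$ can be separated by powers of $y$: the $y^2$-coefficient involves only $|w_N^1|^2$ and forces $c_3\,2|B|^2 + (\text{combination with }c_1,c_2)=0$, the $y^1$-coefficient gives a second relation, and the $y^0$-coefficient a third; the resulting $3\times 3$ linear system for $(c_1,c_2,c_3)$ has determinant a nonzero multiple of $(A-C)$ times a product of the $q$-form constants, hence $c_1=c_2=c_3=0$. (One may equally integrate against the known $q$-orthogonality relations \eqref{eq:qnew} for $\mathbf{w}_N^\pm$ to extract these coefficients directly, which is cleaner.)

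The main obstacle I anticipate is the $j=k=N$ block: unlike the $j\le N-1$ part, where distinct real frequencies $\lambda_j^\tau-1$ give the separation essentially for free, here all three functions live at the same (near-zero) frequency $\Re(\lambda_\pm-\overline{\lambda_\pm})\approx 0$ and are distinguished only by their polynomial-in-$y$ growth and by the fixed combination coefficients $A,B,C$. Verifying that the relevant $3\times 3$ determinant is nonzero amounts to checking that $A\ne C$ and that $B\ne 0$, which follows from the explicit formulas for $A,B,C$ at the end of Sect.~\ref{sub:PC}, valid since $\omega_N>1$ for $N\ge 2$. A minor additional care is needed because the statement must hold for all sufficiently small $\varepsilon$ (and the functions depend on $\varepsilon$): it suffices to prove independence at $\varepsilon=0$ using the leading terms above, since linear independence is an open condition and the $O(\varepsilon)$ corrections are uniform on compact $x$-intervals.
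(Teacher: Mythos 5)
Your overall strategy is the same as the paper's (reduce to $\varepsilon=0$ by continuity, expand each product in its $y$-dependence, and reduce the remaining degeneracies to nonvanishing of the constants built from $A,B,C$), and your treatment of the $j=N$ block is exactly the paper's conditions on (\ref{eq:l3})--(\ref{eq:l5}). However, there is a gap in the block $j\le N-1$: for a \emph{fixed} pair $(j,\tau)$ the four functions $\Re\,{\bf w}_{N}^{\pm}\cdot\overline{{\bf w}_{j}^{\tau}}$, $\Im\,{\bf w}_{N}^{\pm}\cdot\overline{{\bf w}_{j}^{\tau}}$ all carry the \emph{same} oscillation frequency $1-\lambda_{j}^{\tau}$, so your ``frequency by frequency'' argument does not separate $\theta=+$ from $\theta=-$. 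The fix you propose, the two growth rates $e^{\pm\sqrt{\varepsilon}\lambda_{1}y}$, is not available in the setting where you ultimately work: you reduce to $\varepsilon=0$, where $\lambda_{\pm}$ coalesce; and even for $\varepsilon>0$ each of ${\bf w}_{N}^{\pm}$ mixes \emph{both} exponentials, since by (\ref{mf2-2}), (\ref{mf1-1-2}) it is a combination of $w_{N}^{+}$ and $w_{N}^{-}$, so you would additionally need invertibility of the matrix $(\alpha_{i},\beta_{i})$. At $\varepsilon=0$ the separation must come from the $x$-profiles: writing ${\bf w}_{N}^{\theta}\cdot\overline{{\bf w}_{j}^{\tau}}=e^{-iy(1-\lambda_{j}^{\tau})}(a_{1\theta}^{\tau j}(x)+iya_{2}^{\tau j}(x))$, one needs $a_{1+}^{\tau j}\neq a_{1-}^{\tau j}$, which is equivalent to the linear independence of ${\bf w}_{N}^{+},{\bf w}_{N}^{-}$ at $\varepsilon=0$, i.e.\ to $B(A-C)\neq0$ — the very fact you invoke only for the $j=N$ block. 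This is how the paper closes the argument (its condition (i)), and you need it here too.

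Second, your verification of $A\neq C$ is not correct as stated. From the formulas at the end of Sect.~\ref{sub:PC}, $A-C$ is a nonzero multiple of $-6\omega_{N}^{5}+8\omega_{N}^{3}+9\omega_{N}^{2}+8\omega_{N}+5$, and this polynomial \emph{does} vanish for some $\omega>1$ (near $\omega\approx1.7$), so ``$\neq0$ for $\omega_{N}>1$'' is not a proof. The conclusion is true for the actual thresholds because $\omega_{N}\geq\omega_{2}\approx4.6$, but the clean argument — and the one the paper uses — is structural: ${\bf w}_{N}^{+}$ and ${\bf w}_{N}^{-}$ cannot be proportional because of the normalization $q({\bf w}_{N}^{\pm},{\bf w}_{N}^{\pm})=\pm i$ in (\ref{eq:qnew}), and passing to the limit in (\ref{eq:w+approx}), (\ref{eq:w-approx}) this forces $B(A-C)\neq0$; the paper then settles the residual $j=N$ degeneracy by the explicit identity $2\Re\,{\bf w}_{N}^{-}\cdot\overline{{\bf w}_{N}^{+}}-({\bf w}_{N}^{+}\cdot\overline{{\bf w}_{N}^{+}}+{\bf w}_{N}^{-}\cdot\overline{{\bf w}_{N}^{-}})=-C_{2}b_{2}(x)\not\equiv0$, which is exactly your $3\times3$ determinant condition. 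With these two repairs your proof coincides with the paper's.
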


\begin{proof} We first note that functions (\ref{eq:flinindep})
continuously depend on $\varepsilon$, so for the proof of linear independence,
it is enough to consider the limit case, i.e. $\varepsilon=0$. From
(\ref{eq:sol01-1}), (\ref{eq:sol02}), (\ref{LUD11}), (\ref{eq:w+approx}),
(\ref{eq:w-approx}) it follows that 
\[
{\bf w}_{N}^{\theta}\cdot\overline{{\bf w}_{j}^{\tau}}=e^{-iy(1-\lambda_{j}^{\tau})}(a_{1\theta}^{\tau j}(x)+iya_{2}^{\tau j}(x)).
\]
Hence,

\begin{equation}
\Re{\bf w}_{N}^{\theta}\cdot\overline{{\bf w}_{j}^{\tau}}=a_{1\theta}^{\tau j}(x)\cos(y(1-\lambda_{j}^{\tau}))+a_{2}^{\tau j}(x)y\sin(y(1-\lambda_{j}^{\tau})),\label{eq:l1}
\end{equation}
\begin{equation}
\Im{\bf w}_{N}^{\theta}\cdot\overline{{\bf w}_{j}^{\tau}}=a_{2}^{\tau j}(x)y\cos(y(1-\lambda_{j}^{\tau}))-a_{1\theta}^{\tau j}(x)\sin(y(1-\lambda_{j}^{\tau})),\label{eq:l2}
\end{equation}
\begin{equation}
\Re{\bf w}_{N}^{-}\cdot\overline{{\bf w}_{N}^{+}}=b_{1}(x)+b_{2}(x)y^{2},\label{eq:l3}
\end{equation}

\begin{equation}
\Im{\bf w}_{N}^{-}\cdot\overline{{\bf w}_{N}^{+}}=C_1b_{2}(x)y,\label{eq:l4}
\end{equation}
\begin{equation}
{\bf w}_{N}^{+}\cdot\overline{{\bf w}_{N}^{+}}+{\bf w}_{N}^{-}\cdot\overline{{\bf w}_{N}^{-}}=C_2b_2(x)+2b_{1}(x)+2b_{2}(x)y^{2},\label{eq:l5}
\end{equation}
\[
\]
where $a_{1\theta}^{\tau j}$, $a_{2}^{\tau j}$, $b_{1}$ and $b_{2}$, with $\theta=\pm$, are real, non-trivial functions and $C_{1}$ and $C_{2}$ are non-zero constants. Now, as functions $\cos(y(1-\lambda_{j}^{\tau})),\: y\cos(y(1-\lambda_{j}^{\tau})),\:\sin(y(1-\lambda_{j}^{\tau})),\: y\sin(y(1-\lambda_{j}^{\tau})),1\:,\: y,\: y^{2}$
are linearly independent then functions (\ref{eq:l1}), (\ref{eq:l2}), (\ref{eq:l3}), (\ref{eq:l4})
and (\ref{eq:l5}) are linearly independent provided that (i) $a_{1+}^{\tau j}(x)\neq a_{1-}^{\tau j}(x)$,
(ii) (\ref{eq:l3}) and (\ref{eq:l5}) are linearly independent. The
claim in (i) follows from the linear independence of functions ${\bf w}_{N}^{+}$
and ${\bf w}_{N}^{-}$ (see (\ref{eq:w+approx}) and (\ref{eq:w-approx})). Then
(ii) is true if $2\Re{\bf w}_{N}^{-}\cdot\overline{{\bf w}_{N}^{+}}-({\bf w}_{N}^{+}\cdot\overline{{\bf w}_{N}^{+}}+{\bf w}_{N}^{-}\cdot\overline{{\bf w}_{N}^{-}})$
is non zero and that follows from direct calculation 
\begin{align*}
2\Re{\bf w}_{N}^{-}\cdot\overline{{\bf w}_{N}^{+}}-({\bf w}_{N}^{+}\cdot\overline{{\bf w}_{N}^{+}}+{\bf w}_{N}^{-}\cdot\overline{{\bf w}_{N}^{-}}) & =-C_2b_2(x)\\
=\frac{\omega_{N}(9\omega_{N}^{2}+5)}{3(\omega_{N}^{2}-1)}\Big(\cos(2\kappa_{N}x)+\cos(2\kappa_{N}(x-1))-2\Big).
\end{align*}
\end{proof} 

By Lemma \ref{Lemma2} all the multiplicands of
$\Phi$ in (\ref{eq:phimultiplicands}) are linearly independent.
It follows that it is possible to choose $\Phi$ so that (\ref{eq:phiequations})
holds and equations (\ref{matrixeq}) is
\begin{equation}
{\mathcal A}\boldsymbol{\eta}-\delta\boldsymbol{\mu}(\delta,\boldsymbol{\eta})=0.\label{eq:matrixeqlin}
\end{equation}
Now we set matrix ${\mathcal A}$ to be unit, that is its elements fulfil
the conditions
\begin{equation}
\mathcal{A}_{\alpha}^{\beta}=\delta_{\alpha,\beta},\,\,\,\alpha,\beta\in\mathcal{I}.\label{eq:Aunit}
\end{equation}
Again using Lemma \ref{Lemma2}, it is possible to choose functions
$\{\Psi^{\alpha}\}_{\alpha\in\mathcal{I}}$ so that the conditions
(\ref{eq:Aunit}) are fulfilled and (\ref{eq:matrixeqlin}) reads
\begin{equation}
\boldsymbol{\eta}=\delta\boldsymbol{\mu}(\delta,\boldsymbol{\eta}).\label{eq:contraction}
\end{equation}
Now, as $\delta$ is small, the operator on the right hand side of
equation (\ref{eq:contraction}) is a contraction operator, moreover
$\boldsymbol{\mu}$ is analytic in $\delta$ and $\boldsymbol{\eta}$
so Banach Fixed Point Theorem assures that equation (\ref{eq:contraction})
is solvable for $\boldsymbol{\eta}$.

A numerical example of a potential (leading therm $\Phi$) that produces a trapped mode is
$$\mathcal{P}(x,y)\approx\Phi(x,y)=e^{-(\frac{x-0.5}{0.2})^2}\Big(0.4512e^{-(y+65.6273)^2}-e^{-y^2}+0.4512e^{-(y-65.6273)^2} \Big). $$
and is sketched in Figure  \ref{fig:NR_sketch}.

\begin{figure}[h]
\centering{}\includegraphics[scale=1.0]{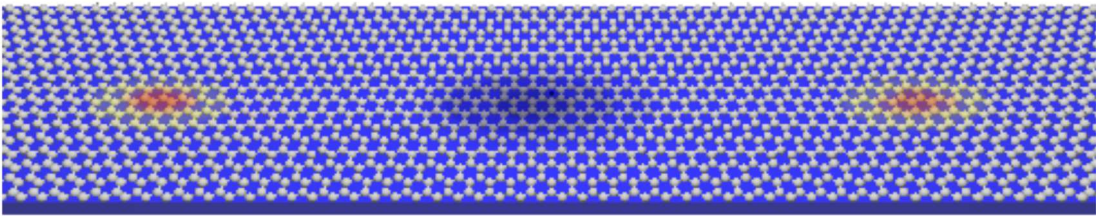}\caption{Sketch of an example potential $\mathcal{P}$ producing a trapped mode with energy close to $\omega_2$. Nanoribbon lies along the y-axis.\label{fig:NR_sketch}}
\end{figure}

\subsection{Proof of Theorem \ref{Tmultiplicity}}\label{sec:mult}

This section is devoted to the proof of the second main result formulated
in Theorem \ref{Tmultiplicity} in the introduction. It concerns the multiplicity of trapped modes and states that (i) there are no trapped modes solutions for energies slightly larger
than threshold, (ii) multiplicity of trapped modes with energies slightly
larger than threshold does not exceed 1 and (iii) the spectrum far
from the threshold is free of trapped modes.

Consider problem (\ref{DiracP1}), (\ref{Dirac}). As previously, ${\mathcal P}$ is a continuous potential with compact support and subject to (\ref{eq:supP}). 

\begin{proof}
(i) Assume the contrary: there exist a trapped mode
solution, i.e. a solution $(u,v)$ of (\ref{Dirac}) belonging to $\mathcal{H}^{1}=\left\{ (u,v):u\in X_{0},v\in Y_{0}\right\} $
( $X_{0}=\{u\in L^{2}(\Pi)\,:\,(i\partial_{x}+\partial_{y})u\in L^{2}(\Pi),\;\; u(0,y)=0\}$,
$Y_{0}=\{v\in L^{2}(\Pi)\,:\,(-i\partial_{x}+\partial_{y})v\in L^{2}(\Pi),\;\; v(1,y)=0\}$).
Now, from Theorem \ref{T3s2} and Proposition \ref{Pr1} (iii), which
assets that all $\lambda$ in the strip $\{|Im\lambda|\leq\gamma\}$, in the exponential part of the solutions $w=(u,v)=e^{-i\lambda y}(\mathcal{U},\mathcal{V})$ are real, it follows that $(u,v)\in\mathcal{H}_{\gamma}^{+}$. Moreover, from Theorem \ref{T1a} operator ${\mathcal D}+(\delta\mathcal{P}-\omega) I\;:\; X_{\gamma}^{\pm}\times Y_{\gamma}^{\pm}\;\to\; L_{\gamma}^{\pm}(\Pi)$
is an isomorphism, so the only solution to $({\mathcal D}+(\delta\mathcal{P}-\omega)I)w=0$
is $w=0$.\\
(ii) There exist at least one trapped mode and it can be constructed through conditions given in the previous section, Sect. \ref{sub:PT1}. Assume now that there
are two trapped modes. According to Theorem \ref{T3s2} and Proposition
\ref{Pr1} (iii), which assets that there are exactly two solutions
$w=(u,v)=e^{-i\lambda y}(\mathcal{U},\mathcal{V})$ with complex $\lambda$ in the strip $\{|\Im\lambda|\leq\gamma\}$,
it follows that the trapped mode is of the form
\begin{equation}
w_{j}=C_{j}e^{-i\lambda_{-}y}(\mathcal{U}_{N}^{-}(x),\mathcal{V}_{N}^{-}(x))+D_{j}e^{-i\lambda_{+}y}(\mathcal{U}_{N}^{+}(x),\mathcal{V}_{N}^{+}(x))+R_{j}\label{eq:2trappedmodes}
\end{equation}
with $R_{j}\in{\mathcal H}_{\gamma}^{+}$ , $j=1,2$. Now, consider the
following linear combination of trapped modes (\ref{eq:2trappedmodes})
\begin{equation}
w_{3}=w_{1}-\frac{C_{1}}{C_{2}}w_{2}=(D_{1}-\frac{C_{1}}{C_{2}}D_{2})e^{-i\lambda_{+}y}(\mathcal{U}_{N}^{+}(x),\mathcal{V}_{N}^{+}(x))+(R_{1}-\frac{C_{1}}{C_{2}}R_{2}),\:\:\: w_{3}\in X_{\gamma}^{-}\times Y_{\gamma}^{-}\label{eq:w3}
\end{equation}
which is a solution to problem (\ref{Dirac})
as well. From Theorem \ref{T1a}, it follows that operator ${\mathcal D}+(\delta\mathcal{P}-\omega) I\;:\; X_{\gamma}^{-}\times Y_{\gamma}^{-}\;\to\; L_{\gamma}^{-}(\Pi)$
is an isomorphism and hence $w_{3}=0$. From (\ref{eq:w3}) we get $w_{1}=Cw_{2}$.\\
(iii) First we choose $\gamma$ such that the strip $|\Im\lambda|\leq\gamma$ contains only real roots of (\ref{K1})
for all $\omega$ described in the Proposition (iii). then we note that supremum with respect to such $\omega$ of the quantity $sup_{\Im \lambda=\pm\gamma}||({\mathcal D}-\omega)^{-1}||_{L^2(\Pi)\rightarrow L^2(\Pi)}$ is bounded then reasoning as in (i), we obtain the estimate for $\delta_1$.

\end{proof}

\section*{Acknowledgement}
The authors thank I. V. Zozoulenko for the discussion on physical aspects of the paper. S. A. Nazarov acknowledges financial support from The Russian Science Foundation (Grant 14-29-00199).

\appendix
\begin{center}
  \Huge\bfseries\appendixpagename
\end{center}
\section{Ellipticity\label{sec:Appendix:-Ellipticity}}

According to the general ellipticity theory \cite{ADN}, it is necessary
to check few simple properties. To this end, we write three different
tables of the ADN-indices:

\begin{center}
$T^{0}:$%
\begin{tabular}{c|c|c|}
\multicolumn{1}{c}{} & \multicolumn{1}{c}{1} & \multicolumn{1}{c}{1}\tabularnewline
\cline{2-3} 
0 & 1 & 1\tabularnewline
\cline{2-3} 
0 & 1 & 1\tabularnewline
\cline{2-3} 
\end{tabular}  $\:\:\:\:\: T^{v}:$%
\begin{tabular}{c|c|c|}
\multicolumn{1}{c}{} & \multicolumn{1}{c}{0} & \multicolumn{1}{c}{1}\tabularnewline
\cline{2-3} 
0 & 0 & 1\tabularnewline
\cline{2-3} 
1 & 1 & 2\tabularnewline
\cline{2-3} 
\end{tabular} $\:\:\:\:\: T^{u}:$%
\begin{tabular}{c|c|c|}
\multicolumn{1}{c}{} & \multicolumn{1}{c}{1} & \multicolumn{1}{c}{0}\tabularnewline
\cline{2-3} 
1 & 2 & 1\tabularnewline
\cline{2-3} 
0 & 1 & 0\tabularnewline
\cline{2-3} 
\end{tabular}
\par\end{center}

Notice that the numbers inside the tables are obtained as the sum
of number standing at the corresponding rows and columns outside the
tables. They indicate orders of differential operators composing the
principal part of the system (\ref{homoDirac2})
\begin{equation}
\mathcal{D}_{0}:=\mathcal{D},\:\:\:\: \mathcal{D}_{v}:=\mathcal{D}+\left(\begin{array}{cc}
-\omega & 0\\
0 & 0
\end{array}\right),\:\:\:\: \mathcal{D}_{u}:=\mathcal{D}+\left(\begin{array}{cc}
0 & 0\\
0 & -\omega
\end{array}\right)\label{eq:opMatrix}
\end{equation}
where $\mathcal{D}_{\alpha}=\mathcal{D}_{\alpha}(\partial_{x},\partial_{y})$ for $\alpha=0,v,u$.
We have $\det \mathcal{D}_{\alpha}(-i\eta,-i\xi)=|\eta|^{2}+|\xi|^{2}$ and,
hence, the operator matrix (\ref{eq:opMatrix}) is elliptic with $\alpha=0,u,v$.
However, it is also necessary to verify the Shapiro-Lopatinskii condition
at the both sides of the strip $\Pi$. For example, for the right edge
$0\times\mathbb{R}$ of the nanoribbon the Cauchy problem
\[
\mathcal{D}_{\alpha}(\partial_{x},-i\xi)\left(\begin{array}{c}
u\\
v
\end{array}\right)=0\:\:\:\mbox{in }\mathbb{R}_{+},\:\:\: u(0,\xi)=1
\]
must have only one solution decaying as $y\rightarrow\infty$.

If $\alpha=0$, we have $u(x,\xi)=e^{\xi x}$ without decay for $\xi>0$.
In the case $\alpha=v$ the general solution takes the form
\[
v(x,\xi)=Ce^{-|\xi|y},\:\:\: u(x,\xi)=C\omega^{-1}(-|\xi|+\xi)e^{-|\xi|y}.
\]
But again the Cauchy problem has no solution for $\xi>0$. Finally,
fixing $\alpha=u$ we obtain the desired solution 
\[
v(x,\xi)=e^{-|\xi|y},\:\:\: u(x,\xi)=i\omega^{-1}(|\xi|+\xi)e^{-|\xi|y}
\]
for any $\xi\in\mathbb{R}\backslash\{0\}$.

A similar calculation shows that the Cauchy problem 
\[
\mathcal{D}_{\alpha}(\partial_{x},-i\xi)\left(\begin{array}{c}
u\\
v
\end{array}\right)=0\:\:\:\mbox{in }\mathbb{R}_{-},\:\:\: v(0,\xi)=1
\]
serving for the left edge of the nanoribbon gets the necessary property
for the case $\alpha=v$ only.

Reviewing the situation, we see that any of three ADN-tables is suit
inside $\Pi$ but none serves simultaneously at both sides of the
nanoribbon. This means that our problem is not included into the standard
elliptic theory.

It also should be mentioned that, if there exists an ADN-table fitting
everywhere in $\Pi$ and on $\partial\Pi$, then according to \cite[Ch.\ 5]{NaPl}, the numbers of incoming and outgoing waves must coincide
in each outlet to infinity. The latter, as we have verified in (\ref{W1a}) and
(\ref{W1b}) is not true.

\section{Mandelstam radiation condition\label{sec:Appendix:-Mandelstam-radiation}}

Here we want to clarify the division of waves in two classes outgoing/incoming
accoriding to the appearance of the $\pm i$ in (\ref{eq:bioosc}). To do so, we
employ the Mandelstam radiation conditions which define classification into outgoing and incoming waves by the direction of the energy transfer
\cite{Mandelstam,Poynting,Umov}.

Let us write initial system (\ref{homoDirac2}) in the form
\begin{align}
(-i\partial_{x}+\partial_{y} & )\mathbf{v}=i\partial_{t}\mathbf{u},\nonumber \\
(-i\partial_{x}-\partial_{y} & )\mathbf{u}=i\partial_{t}\mathbf{v},\label{eq:transsystem}
\end{align}
with
\begin{equation}
\mathbf{u}=e^{-i\omega t}u\:,\:\:\mathbf{v}=e^{-i\omega t}v\:,\:\:\mathbf{w}=(\mathbf{u},\mathbf{v}).\label{eq:transfun}
\end{equation}
Energy transfer from area $\Omega$ is defined as
\[
-\frac{d}{dt}\int_{\Omega}|\partial_{t}\mathbf{w}|^{2}dxdy.
\]
Using relations (\ref{eq:transsystem}), (\ref{eq:transfun}) and
performing partial integration we get 

\begin{align*}
-\frac{d}{dt}\int_{\Omega}|\partial_{t}\mathbf{w}|^{2}dxdy & =-|\omega|^{2}\int_{\Omega}\mathbf{\overline{\mathbf{w}}}\partial_{t}\mathbf{w}+\mathbf{w}\partial_{t}\overline{\mathbf{w}}dxdy\\
 & =-i|\omega|^{2}\int_{\Gamma}\Big(i(\mathbf{u\overline{\mathbf{v}}+\mathbf{v}\overline{\mathbf{u}}}),\mathbf{u}\overline{\mathbf{v}}-\mathbf{v}\overline{\mathbf{u}}\Big)\cdot(n_{x},n_{y})ds,
\end{align*}
where $\Gamma$ is the boundary of the domain $\Omega$. Consider energy transfer along the nanoribbon (along the y-axis) from
$-\infty$ to $+\infty$, that is choose $(n_{x},n_{y})=(0,1)$, then
the last formula is equal to 

\[
-\frac{d}{dt}\int_{\Omega}|\partial_{t}\mathbf{w}|^{2}dxdy=-i|\omega|^{2}\int_{0}^{1}(u\overline{v}-v\overline{u})dx=i|\omega|^{2}q(w,w),
\]
where the last equality comes from the the definition of q-form (\ref{eq:qform}).
Accordingly the energy transfer along the nanoribbon is proportional
to $iq$, which is $\pm1$ for $q=\mp i$. It follows that the value
of q-form defines direction of wave propagation, namely $q=-i$ describes
waves propagating from from $-\infty$ to $+\infty$ and $q=+i$ those
from $+\infty$ to $-\infty$. This leads to the definition of outgoing/incoming
waves (\ref{W1a}), (\ref{W1b}) as those traveling to $\pm\infty$
and from $\pm\infty$ . 

\section{Proof of Theorem \ref{T1} \label{sec:Appendix:Proof1}}

\begin{proof} We prove Theorem for the sign ``+''
in (\ref{2aa}), the proof for the sign ``-'' is the same if we
put $-\sigma$ instead of $\sigma$ in the sequel.

Using the Fourier transform with respect to $y$ 
\[
\hat{g}(\lambda)=\int_{-\infty}^{\infty}g(y)e^{i\lambda y}dy,
\]
with $\lambda=\xi+i\sigma$, we transform problem (\ref{nonH1}), (\ref{nonH2}),
(\ref{2a}) 
\begin{eqnarray}
& & (-i\partial_{x}-i\lambda)\hat{v}-\omega\hat{u}=\hat{g},\label{eq:afterFT0}\\
& & (-i\partial_{x}+i\lambda)\hat{u}-\omega\hat{v}=\hat{h}\;\;\;\mbox{in \ensuremath{(0,1)}}\label{eq:afterFT}
\end{eqnarray}
with the boundary conditions $\hat{u}(0,\lambda)=0$ and $\hat{v}(1,\lambda)=0$.
If $\lambda$ satisfies the condition of the theorem then this problem
has a unique solution for every $\xi\in\Bbb R$. Let us obtain estimates
for $u$ and $v$.

We begin with the case $h=0$. Then $\hat{v}=\omega^{-1}(-i\partial_{x}+i\lambda)\hat{u}$
and 
\begin{equation}
(-\partial_{x}^{2}+\lambda^{2}-\omega^{2})\hat{u}=\hat{g}\omega\;\;\mbox{on \ensuremath{(0,1)}}\label{5}
\end{equation}
with boundary conditions 
\begin{equation}
\hat{u}(0,\lambda)=0,\;\;\partial_{x}\hat{u}(1,\lambda)=\lambda\hat{u}(1,\lambda).\label{6}
\end{equation}
Consider the case when $|\xi|$ is large. We are looking for a solution
to the above problem in the form $\hat{u}=w+R$, where $w$ solves
(\ref{5}) with the Dirichlet boundary condition: $w(0,\lambda)=w(1,\lambda)=0$
and 
\begin{equation*}
R(x,\lambda)=\frac{\partial_{x}w(1,\lambda)\sin\kappa x}{\kappa\cos\kappa-\lambda\sin\kappa},\label{6c}
\end{equation*}
where $\kappa^{2}=\omega^{2}-\lambda^{2}$. Direct calculations show
that 
\[
w(x,\lambda)=\omega\int_{0}^{1}L(x,z)\hat{g}(z,\lambda)dz,
\]
where 
\[
L(x,z)=\frac{1}{\kappa\sin\kappa}\begin{cases}
\sin\kappa x\sin\kappa(1-z)\;\;\mbox{if \ensuremath{z>x}}\\
\sin\kappa(1-x)\sin\kappa z\;\;\mbox{if \ensuremath{x>z}}
\end{cases}.
\]
Since $\kappa=i\tau$, $\tau=\lambda-\omega^{2}\lambda^{-1}/2+O(|\lambda|^{-3})$
and $\lambda=\xi+i\sigma$, therefore, 
\[
|L(x,z)|\leq\frac{C}{|\xi|}e^{-|\xi|\,|x-z|}.
\]
This implies the estimate 
\begin{equation}
\int_{0}^{1}(|\xi|^{4}|w|^{2}+|\xi|^{2}|\partial_{x}w|^{2}+|\partial_{x}^{2}w|^{2})dx\leq C\int_{0}^{1}|\hat{g}|^{2}dz.\label{7d}
\end{equation}
As a consequence, we get 
\begin{equation}
|\partial_{x}w(1,\lambda)|^{2}\leq C|\xi|^{-1}\int_{0}^{1}|\hat{g}|^{2}dz.\label{7dd}
\end{equation}
Now, from trigonometric function properties 
\begin{eqnarray}
 &  & \kappa\cos\kappa x-\lambda\sin\kappa x=i\tau\cosh\tau x-i\lambda\sinh\tau x\nonumber \\
 &  & =i\Big(-\frac{\omega^{2}}{4\lambda}+O(|\lambda|^{-3})\Big)e^{\tau x}+i\Big(\lambda+O(|\lambda|^{-1})\Big)e^{-\tau x}.\label{eq:cosh}
\end{eqnarray}
Therefore, 
\begin{align*}
\int_{0}^{1}|(\partial_{x}-\lambda)R|^{2}dx & =\int_{0}^{1}|\kappa\cos\kappa x-\lambda\sin\kappa x|^{2}dx\\
 & \leq C\frac{|\partial_{x}w(1,\lambda)|^{2}}{|\kappa\cos\kappa-\lambda\sin\kappa|^{2}}\Big(\frac{|e^{2\xi}-1|}{|\xi|^{3}}+|e^{-2\xi}-1|\,|\xi|\Big)
\end{align*}
and using again (\ref{eq:cosh}) with $x=1$, we get 
\[
\int_{0}^{1}|(\partial_{x}-\lambda)R|^{2}dx\leq C\frac{|\partial_{x}w(1,\lambda)|^{2}}{|\xi|}
\]
and 
\[
\int_{0}^{1}|R|^{2}dx\leq C|\xi|\,|\partial_{x}w(1,\lambda)|^{2}.
\]
The last two estimates together with (\ref{7d}) and (\ref{7dd})
give 
\begin{equation}
\int_{0}^{1}(|\hat{u}|^{2}+|\hat{v}|^{2})dx\leq C\int_{0}^{1}|\hat{g}|^{2}dx\label{7ddd}
\end{equation}
for large $\xi$. Estimate (\ref{7ddd}) for $\xi$ from a certain
bounded interval can be obtained directly by analysing problem (\ref{5}),
(\ref{6}), since it is elliptic and generate an isomorphic operator
due to the assumption on $\lambda$. Thus estimate (\ref{7ddd}) is
valid for all real $\xi$. Using (\ref{eq:afterFT0}), (\ref{eq:afterFT}) we can estimate
also $L^{2}$-norms of $(-i\partial_{x}+i\lambda)\hat{v}$ and $(-i\partial_{x}-i\lambda)\hat{u}$.
Now reference to Parseval's theorem gives 
\[
||u;L_{\sigma}^{+}(\Pi)||+||(i\partial_{x}+\partial_{y})u;L_{\sigma}^{+}(\Pi)||+||v;L_{\sigma}^{+}(\Pi)||+||(-i\partial_{x}+\partial_{y})v;L_{\sigma}^{+}(\Pi)||\leq C||g;L_{\sigma}^{+}(\Pi)||
\]
in the case $h=0$. The change of variables $x\to1-x$ and $(u,v)\to(-v,u)$
reduces the case $g=0$ to the previous one. The theorem is proved.\end{proof}

\section{Proof of Theorem \ref{T1s} \label{sec:Appendix:Proof2}}

\begin{proof} 

From Theorem \ref{T1}, we know that $(u^{+},v^{+})$can
be expressed as follows

\[
(u^{+},v^{+})^T=\frac{1}{2\pi}\int_{\Im\lambda=\gamma}e^{-i\lambda y}(\mathcal{D}(\partial_{x},i\lambda)-\omega I)^{-1}(\hat{g},\hat{h})^Td\lambda
\]
where $D=D(\partial_{x},\partial_{y})$ was defined in (\ref{homoDirac2}).
Now choose a positive value $\rho$ sufficiently large so that all eigenvalues
$\lambda_{j}^{\tau}$ described in Proposition \ref{Pr1} (iii) are
contained in the set $\{\lambda\in\mathbb{C}:|\Im\lambda|<\gamma,|\Re\lambda|<\rho\}$
(Figure \ref{fig:Plot_int}). This is possible according to Proposition
\ref{Pr1}. Applying Cauchy's formula, we get
\[
(u^{+},v^{+})^T=\frac{1}{2\pi}\int_{-\rho+i\gamma}^{\rho+i\gamma}e^{i\lambda y}(\mathcal{D}(\partial_{x},i\lambda)-\omega I)^{-1}(\hat{g},\hat{h})^Td\lambda
\]
\begin{equation}
=\frac{1}{2\pi}\Bigl(\int_{-\rho-i\gamma}^{\rho-i\gamma}\dots d\lambda+\int_{\rho-i\gamma}^{\rho+i\gamma}\dots d\lambda-\int_{-\rho-i\gamma}^{-\rho+i\gamma}\dots d\lambda\Bigr)\label{eq:3integrals}
\end{equation}
\[
+i\sum_{\tau=\pm}\sum_{j=1}^{N}\mbox{Res}\Bigl(e^{-i\lambda y}(\mathcal{D}(\partial_{x},i\lambda)-\omega I)^{-1}(\hat{g},\hat{h})^T\Bigr)|_{\lambda=\lambda_{j}^{\tau}}
\]

\begin{figure}[h]
\centering{}\includegraphics[scale=0.5]{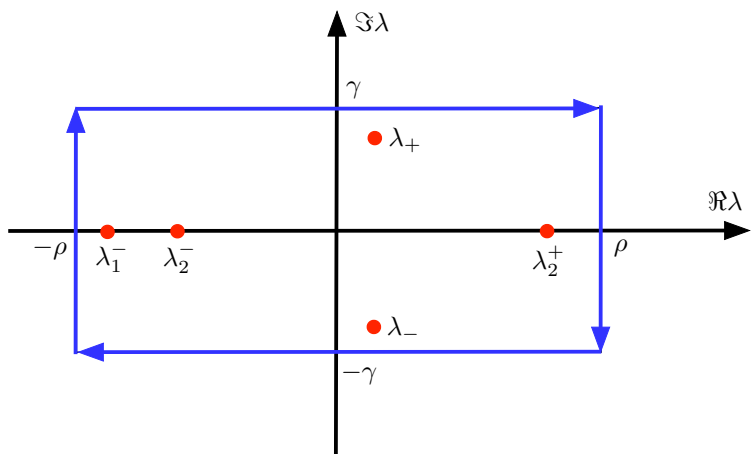}\caption{Schematic figure showing a region that contains all eigenvalues
$\lambda_{j}^{\tau}$ (red dots) described in Proposition (\ref{Pr1}
iii) with blue contour used in the integration (\ref{eq:3integrals})\label{fig:Plot_int}}
\end{figure}
The first integral on the right hand side tends
to $(u^{-},v^{-})$with $\rho\rightarrow\infty$. Moreover the
last two integrals tend to zero for smooth functions $(g,h)$ with
compact support. It is enough to prove the theorem for such functions
as they are dense in $L_{\gamma}^{+}(\Pi)\cap L_{\gamma}^{-}(\Pi)$.

The residua in (\ref{eq:3integrals}) belong to
the kernel of $(\mathcal{D}(\partial_{x},\partial_{y})-\omega  I)$. Therefore
the last sum is linear combination of solutions ${\bf w}_{k}^{\tau}$,
with $\tau=\pm$ and $k=1,\dots,N$ and we obtain (\ref{Apr8a}) with
certian coefficinets. Now, we want to find expressions for those coefficients.

Let us
define a smooth function $\eta_{-}=\eta_{-}(y)$ such that $\eta_{-}(y)=1$
in the neighbourhood of $-\infty$ and $\eta_{-}(y)=0$ in the neighbourhood
of $+\infty$. Using the biortogonality conditions for functions ${\bf w}_{j}^{\mp}$ in (\ref{eq:bioosc})
we get
\begin{equation}
\int_{\Pi}\overline{{\bf w}_{j}^{+}}({\mathcal D}-\omega I)(\eta_{-}(u^{+},v^{+})^T-\eta_{-}(u^{-},v^{-})^T)dxdy=-iC_{j}^{+}.\label{Apr8a-1}
\end{equation}
Now note that 
\[
\eta_{-}(u^{-},v^{-}),(1-\eta_{-})(u^{+},v^{+})\in L_{\gamma}^{+}(\Pi)\cap L_{\gamma}^{-}(\Pi).
\]
Applying integration by parts, follows
\[
\int_{\Pi}\overline{{\bf w}_{j}^{+}}({\mathcal D}-\omega I)((1-\eta_{-})(u^{+},v^{+})^T)dxdy=0
\]
and
\[
\int_{\Pi}\overline{{\bf w}_{j}^{+}}({\mathcal D}-\omega I)(\eta_{-}(u^{-},v^{-})^T)dxdy=0
\]
so from (\ref{Apr8a-1}), we get 
\[
\int_{\Pi}\overline{{\bf w}_{j}^{+}}\cdot(g,h)dxdy=-iC_{j}^{+}.
\]
In a similar way we obtain 
\[
\int_{\Pi}\overline{{\bf w}_{j}^{-}}\cdot(g,h)dxdy=iC_{j}^{-}.
\]
This furnishes the assertion.
\end{proof}

\end{document}